\def\doubleblind{0}
\declaretheoremstyle[bodyfont=\it,qed=\qedsymbol]{noproofstyle}
\numberwithin{equation}{section}
\declaretheorem[numberlike=equation]{observation}
\declaretheorem[name=Observation,numbered=no]{observation*}
\declaretheorem[numberlike=equation]{theorem}
\declaretheorem[name=Theorem,numbered=no]{theorem*}
\declaretheorem[numberlike=equation]{lemma}
\declaretheorem[name=Lemma,numbered=no]{lemma*}
\declaretheorem[name=Corollary,numbered=no]{corollary*}
\declaretheorem[numberlike=equation]{proposition}
\declaretheorem[name=Proposition,numbered=no]{proposition*}
\declaretheorem[numberlike=equation]{claim}
\declaretheorem[name=Claim,numbered=no]{claim*}
\declaretheorem[name=Conjecture,numbered=no]{conjecture*}
\declaretheorem[name=Question,numbered=no]{question*}
\declaretheoremstyle[bodyfont=\it]{defstyle} 
\declaretheorem[numberlike=equation,style=defstyle]{definition}
\declaretheorem[unnumbered,name=Definition,style=defstyle]{definition*}
\declaretheorem[unnumbered,name=Example,style=defstyle]{example*}
\declaretheorem[unnumbered,name=Notation=defstyle]{notation*}
\declaretheorem[unnumbered,name=Construction,style=defstyle]{construction*}
\declaretheoremstyle[]{rmkstyle} 
\newtheorem*{remark}{Remark}
\newcommand{\mkand}{\mathsf{Max}\text{-}k\mathsf{AND}}
\newcommand{\mtwoand}{\mathsf{Max}\text{-}2\mathsf{AND}}
\newcommand{\mthreeand}{\mathsf{Max}\text{-}3\mathsf{AND}}
\newcommand{\mdcut}{\mathsf{Max}\text{-}\mathsf{DICUT}}
\newcommand{\bias}{\mathsf{bias}}
\newcommand{\vecpos}{+\bm1}
\newcommand{\val}{\mathsf{val}}
\newcommand{\Snap}{\mathsf{Snap}_\Psi^\vect}
\newcommand{\obl}{\mathtt{Obl}^{\vect,\vecp}_k}
\newcommand{\Clauses}{\mathsf{Ptn}}
\newcommand{\PClauses}{\mathsf{PosPtn}}
\newcommand{\prob}{\mathsf{prob}}
\newcommand{\I}{\mathsf{Int}^\vect}
\newcommand{\ptn}{\mathsf{ptn}^\vect}
\newcommand{\flip}{\mathsf{flip}}
\newcommand{\aalg}{\alpha_{\mathrm{alg}}}
\newcommand{\alp}{\alpha_{\mathrm{LP}}}
\title{Oblivious algorithms for the $\mkand$ problem}
\date{\today}
\author{Noah G. Singer\thanks{Department of Computer Science, Carnegie Mellon University, Pittsburgh, PA, USA. Supported by an NSF Graduate Research Fellowship (Award DGE2140739). Email: \texttt{ngsinger@cs.cmu.edu}.}}\fi
\begin{document}

\maketitle

\begin{abstract}
    Motivated by recent works on streaming algorithms for constraint satisfaction problems (CSPs), we define and analyze \emph{oblivious algorithms} for the $\mkand$ problem. This generalizes the definition by Feige and Jozeph (Algorithmica '15) of oblivious algorithms for $\mdcut$, a special case of $\mtwoand$. Oblivious algorithms round each variable with probability depending only on a quantity called the variable's \emph{bias}.
    
    For each oblivious algorithm, we design a so-called \emph{factor-revealing} linear program (LP) which captures its worst-case instance, generalizing one of Feige and Jozeph for $\mdcut$. Then, departing from their work, we perform a fully explicit analysis of these (infinitely many!) LPs. In particular, we show that for all $k$, oblivious algorithms for $\mkand$ provably outperform a special subclass of algorithms we call ``superoblivious'' algorithms.
    
    Our result has implications for streaming algorithms: Generalizing the result for $\mdcut$ of Saxena, Singer, Sudan, and Velusamy (SODA'23), we prove that certain separation results hold between streaming models for \emph{infinitely many} CSPs: for \emph{every} $k$, $O(\log n)$-space sketching algorithms for $\mkand$ known to be optimal in $o(\sqrt n)$-space can be beaten in (a) $O(\log n)$-space under a random-ordering assumption, and (b) $O(n^{1-1/k} D^{1/k})$ space under a maximum-degree-$D$ assumption. Even in the previously-known case of $\mdcut$, our analytic proof gives a fuller, computer-free picture of these separation results.
\end{abstract}

\tableofcontents

\section{Introduction}

In this work, we study a restricted but natural class of randomized algorithms called \emph{oblivious algorithms} for a family of \emph{constraint satisfaction problems (CSPs)} called $\mkand$ for $k \geq 2$. In this problem, the algorithm is presented with a list of $m$ constraints on $n$ Boolean variables; each constraint specifies desired values for $k$ of the $n$ variables; and the goal is to satisfy the highest possible fraction of constraints.\footnote{Equivalently, each constraint is a disjunction of $k$ literals.} We begin by introducing these problems and algorithms informally and discussing the context and motivation for our work.

\subsection{Background and context}

$\mkand$ is the ``maximally expressive'' Boolean CSP: each constraint specifies \emph{exactly} what its $k$ variables must be assigned to. This makes it, in a sense, ``universal'' for $k$-ary Boolean CSPs. In particular, as observed by Trevisan~\cite{Tre98-alg}, an \emph{arbitrary} Boolean predicate $\phi$ of arity $k$  with $r$ satisfying assignments can be converted to $r$ ``disjoint'' applications of $\mkand$ constraints; this transformation makes an instance of $\textsf{Max-CSP}(\phi)$ into an instance of $\mkand$ and drops the value by a factor exactly $r$. In turn, this means that algorithms for $\mkand$ can approximate the acceptance probability of $k$-bit probabalistically checkable proofs (PCP) verifiers. As a consequence, numerous works have developed algorithms for $\mkand$ \cite{Tre98-alg,Has04,Has05,CMM09} as well as $\NP$-hardness-of-approximation results \cite{Tre98-hardness,ST98,ST00,EH08,ST09}; we now know that $\Theta(k/2^k)$-approximations are the best achievable in polynomial time assuming $\P \neq \NP$ \cite{CMM09,ST09}.

Further attention has been devoted to important special cases of $\mkand$. One particularly important example is the $\mdcut$ problem, a special case of $\mtwoand$ where each constraint is of the form ``$x \wedge \neg y$''. $\mdcut$ can be viewed alternatively as a directed graph optimization problem, where the goal is to find a directed cut $(S,T)$ maximizing the number of edges $(s,t)$ such that $s \in S$ and $t \in T$. Approximation algorithms for $\mdcut$ (and sometimes $\mtwoand$) were developed in \cite{GW95,FG95,MM01,LLZ02}, and its hardness-of-approximation was studied in \cite{Has01}. $\mthreeand$ was studied in \cite{Zwi98,TSSW00}.

The importance of $\mkand$, the extensive work on its polynomial-time approximability, and the ``expressiveness'' of its constraints has inspired significant study on its approximability in restricted algorithmic settings. For instance, Trevisan~\cite{Tre98-alg} showed that the natural linear programming (LP) relaxation for $\mkand$ beats the trivial (uniformly random rounding) algorithm's approximation ratio by a factor of $2$, and that this LP's ``nice'' structure allows it to be solved (approximately) by distributed algorithms. $\mdcut$ in particular has been studied extensively in various restrictive algorithmic frameworks and models, including ``combinatorial'' algorithms \cite{HZ01}, spectral partitioning algorithms \cite{ZDW+21}, local search algorithms \cite{Ali96,Ali97}, parallel algorithms \cite{BEF22}, near-linear time algorithms \cite{Ste10}, and online algorithms \cite{BL12}.

In the setting of streaming algorithms, there has been a steady flow of results in the past decade on $\mdcut$'s approximability in a variety of models \cite{GVV17,CGV20,SSSV23-random-ordering,SSSV23-dicut}. In particular, the recent works of Saxena, Singer, Sudan, and Velusamy~\cite{SSSV23-random-ordering,SSSV23-dicut} have demonstrated that $\mdcut$ exhibits a phenomenon previously unbeknownst to any CSP: It admits approximation algorithms in certain streaming regimes which beat the optimal algorithms in weaker regimes. Key to these works was an earlier investigation by Feige and Jozeph~\cite{FJ15} which defined and analyzed a class of simple algorithms, called ``oblivious'' algorithms, for $\mdcut$. It is the generalization of oblivious algorithms to $\mkand$ and their implications for streaming algorithms which are the focus of this paper.

In the (graph-theoretic view of the) $\mdcut$ problem, an oblivious algorithm is one which randomly rounds each vertex depending only its \emph{bias}, which is the (relative) difference between its in- and out-degrees. The natural generalization to $\mkand$ is an algorithm which randomly rounds each vertex depending only on the (relative) difference between its number of positive and negative appearances in the instance.\footnote{Or, more generally in weighted instances, the total weight of the clauses in which it appears positively vs. those in which it appears negatively.} These algorithms are simple enough that they can be implemented in various online or distributed settings, and this simplicity also powers their usefulness in the streaming context.

In another recent work, Boyland, Hwang, Prasad, Singer, and Velusamy~\cite{BHP+22} studied the streaming approximability of the $\mkand$ problem (along with other Boolean CSPs). They showed, by analyzing a framework of algorithms and lower bounds due to Chou, Golovnev, Sudan, and Velusamy~\cite{CGSV21-boolean}, that an optimal $o(\sqrt n)$-space sketching algorithm for $\mkand$ ``corresponds'', in a loose sense, to what we in this paper will call a \emph{superoblivious} algorithm. These are a subclass of oblivious algorithms which round using only \emph{which is more common}, positive vs. negative appearances, and not \emph{how much} more common. For the special case of $\mdcut$, for instance, the optimal algorithm achieves a ratio of $4/9$, and corresponds to a superoblivious algorithm which rounds a vertex with out-degree exceeding in-degree to the $S$-side of the cut with probability $2/3$. 

The key technical ingredient in the current work is the definition and analysis of a so-called \emph{factor-revealing linear program (LP)}. In such an LP, feasible solutions encode instances of the problem at hand (i.e., $\mkand$); when we fix an algorithm in the designated class (i.e., an oblivious algorithm), the objective function ``reveals'' the approximation ratio the algorithm achieves on any given instance. Similar programs were first studied in depth for facility location problems by Jain, Mahdian, Markakis, Saberi, and Vazirani~\cite{JMM+03}, and have been examined in other contexts such as online bipartite matching~\cite{MY11}. Our LP is a generalization of the one developed by Feige and Jozeph~\cite{FJ15} for $\mdcut$.

Feige and Jozeph~\cite{FJ15} showed, using computer analysis of their LP, that there is a $0.483$-approximation oblivious algorithm for $\mdcut$. The fact that $0.483 > 4/9$ --- that is, oblivious algorithms outperform $o(\sqrt n)$-space streaming algorithms for $\mdcut$ --- is precisely what was used in the works of \cite{SSSV23-random-ordering,SSSV23-dicut} to establish that there are improved approximations in stronger streaming models. However, this result of \cite{FJ15} was used as a black box. We believe that our approach, which contrasts between ``superoblivious'' and more general ``oblivious'' algorithms and attacks the corresponding LPs from an analytical perspective, gives a more natural and systematic explanation for why $\mdcut$ admits these improved approximations --- and implies that $\mkand$ does as well, for all $k$.

\subsection{Results}

Next, we turn to statements of our results. In our notation, an oblivious algorithm for $\mkand$ is denoted $\obl$, where $\vect$ and $\vecp$ are, respectively, a \emph{bias partition} which splits the space of possible biases (i.e., the interval $[-1,+1]$) into discrete intervals, and a \emph{rounding vector} $\vecp$ specifying a probability with which to round variables for each of these intervals. We denote by $\alpha(\obl)$ the approximation ratio achieved by this algorithm. (See \cref{sec:formal-defns} below for formal definitions of these objects.)

To state the theorems properly, we first define some relevant quantities which first arose in the context of small-space sketching algorithms for $\mkand$ in the work of Boyland \emph{et al.}~\cite{BHP+22}. We define

\begin{equation}\label{eq:gamma-k}
    \gamma_k \eqdef \begin{cases} \frac1k & k\text{ odd} \\ \frac1{k+1} & k\text{ even}, \end{cases}
\end{equation}
and
\begin{equation}
    p^*_k \eqdef \frac12(1+\gamma_k)
\end{equation}
and
\begin{equation}
    \alpha^*_k \eqdef 2 \cdot (p^*_k (1-p^*_k))^{\lfloor k/2 \rfloor}.
\end{equation}
That is, for even $k$, $\alpha^*_k = 2^{-(k-1)} (1-1/(k+1))^{k/2} (1+1/(k+1))^{k/2}$, and for odd $k$, $\alpha^*_k = 2^{-(k-1)} (1-1/k)^{(k-1)/2} (1+1/k)^{(k-1)/2}$. In particular, at $k=2$, we have $p^*_k=2/3$ and $\alpha^*_k = 4/9$.

Our first theorem states that the optimal superoblivious algorithm for $\mkand$ achieves ratio $\alpha^*_k$, and that this algorithm rounds with probability $p^*_k$:

\begin{theorem}[Characterization for superoblivious algorithms]\label{thm:superobl}
    For every $k \geq 2$, there is a unique superoblivious algorithm $\obl$ achieving ratio $\alpha(\obl) = \alpha^*_k$, and all other superoblivious perform strictly worse. (In particular, for $\vect = (0,1)$, every rounding vector $\vecp = (p)$ satisfies $\alpha(\obl) \leq \alpha^*_k$, with equality if and only if $p = p^*_k$.)
\end{theorem}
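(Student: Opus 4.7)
The plan is to specialize the factor-revealing LP defined in the earlier sections of the paper to the one-parameter superoblivious family ($\vect = (0,1)$, $\vecp = (p)$), explicitly compute its value as a function of $p$, and then optimize over $p$. Because the superoblivious algorithm distinguishes variables only by the sign of their bias, a standard symmetry argument reduces the LP (modulo a global sign flip) to distributions $\{q_j\}_{j=0}^k$ over clause \emph{types}, where a clause has type $j$ if exactly $j$ of its literals agree in sign with the bias direction of their variable. In this reduced LP, the algorithm's per-clause expected value is $\sum_j q_j\, p^j(1-p)^{k-j}$ and $\mathrm{OPT}$ appears as a linear functional of $\{q_j\}$ encoding the maximum fraction of simultaneously-satisfiable clauses on an instance with the given type distribution.

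I would then solve the reduced LP explicitly by identifying extreme-point distributions. Since the LP's objective is linear, these extreme points are concentrated on a single type $j$, and for each allowed $j$ the resulting ratio is a simple polynomial $p^j(1-p)^{k-j}$ divided by a combinatorial constant $C_j$ (capturing the max-OPT on a $j$-concentrated instance). I expect $\alpha(\obl)$ to coincide with the minimum of these polynomials; the critical observation is that two of them---corresponding to adjacent ``balanced-type'' values $j = \lceil k/2\rceil$ and $j = \lceil k/2\rceil + 1$, with parity-dependent $C_j$---intersect precisely at $p = p^*_k = \tfrac12(1+\gamma_k)$, with common value $2(p^*_k(1-p^*_k))^{\lfloor k/2\rfloor} = \alpha^*_k$. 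Concretely, for $k=3$ one branch is the ``balancing triple'' $\alpha_{\mathrm{bal}}(p) = 3p^2(1-p)$ (all clauses of type $j=2$, $\mathrm{OPT}/m = 1/3$) and for $k=4$ the two branches $2p^2(1-p)^2$ and $\tfrac43 p^3(1-p)$ cross at $p^*_4=3/5$; the general argument extrapolates this pattern. Uniqueness of the maximizer follows from strict monotonicity of each branch on its respective side of $p^*_k$.

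The main obstacle will be correctly computing the constants $C_j$---the combinatorial max-OPT fractions on each prototype---and showing that the two balanced branches (rather than, say, a fully saturated $j=k$ branch) are the binding ones at $p^*_k$. This requires a combinatorial extremal argument on simultaneously-satisfiable type-$j$ clauses whose resolution depends delicately on the parity of $k$ (which explains the $\lfloor k/2\rfloor$ exponent and the case split in $\gamma_k$). A secondary subtlety is carefully verifying non-bindingness of the other extreme-point polynomials at $p=p^*_k$ via direct comparison, which relies on $p^*_k > 1/2$ and ought to reduce to checking a small number of scalar inequalities.
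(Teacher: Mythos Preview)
Your plan has a real gap: the claim that the primal LP's extreme points are concentrated on a single type $j$ is false, because it ignores the LP's bias constraints. For $\vect=(0,1)$ these read $W^+(-1)\le W^-(-1)$ and $W^-(+1)\le W^+(+1)$; they couple clauses with positive literals to clauses with negative literals, so an instance of pure type $j$ built only from all-positive clauses is generically infeasible, and your ``combinatorial constants'' $C_j$ are not well-defined as stated.

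Concretely, your $k=3$ example is wrong. The balancing-triple instance (three cyclic clauses $x_1\wedge x_2\wedge\neg x_3$, etc.) does have $\mathrm{OPT}/m=1/3$ and gives ratio $3p^2(1-p)$, but it is not extremal for the LP. The paper's construction mixes an all-positive type-$2$ pattern $\veca=((k{-}1)/2,0,(k{+}1)/2,0,0,0)$ with an all-negative type-$2$ pattern $\vecb=(0,0,0,(k{+}1)/2,0,(k{-}1)/2)$ in the proportion that saturates the constraint $W^+(-1)=W^-(-1)$; this achieves $\mathrm{OPT}/m=2/3$ and ratio $\tfrac32 p^2(1-p)$. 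At $p^*_3=2/3$ your branch gives $4/9$, not $\alpha^*_3=2/9$, and your two proposed branches $3p^2(1-p)$ and $p^3$ meet at $p=3/4$, not at $p^*_3$. So the odd-$k$ picture simply does not fit your two-branch template. (Your $k=4$ branches, by contrast, do correspond to genuine LP-feasible solutions, but that is an accident of the even case and you have not argued that no \emph{other} extreme point beats them.)

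The paper's proof does not enumerate extreme points. For the upper bound it exhibits one explicit primal feasible solution (Lemma~5.1, supported on two or three patterns depending on parity) whose objective equals $\alpha^*_k$ exactly at $p=p^*_k$ and is strictly smaller elsewhere; this step recycles polynomial inequalities from \cite{BHP+22}. For the matching lower bound $\alpha(\obl)\ge\alpha^*_k$ at $p=p^*_k$ it exhibits one explicit \emph{dual} feasible solution (the $\delta=0$ case of Lemma~4.1 with $X=2,Y=1$), and verifying its feasibility is the ``two-sided Bernoulli inequality'' of Lemma~4.2, namely $1+(j-i)/k\le(1-\gamma_k)^{i-\lfloor k/2\rfloor}(1+\gamma_k)^{j-\lfloor k/2\rfloor}$ for all $i+j\le k$. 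Your outline does not address this lower-bound direction at all; ``solving the LP explicitly'' would handle it only if you had the correct extreme points, which you do not.
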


Our main theorem then states that one can improve over these superoblivious algorithms using other oblivious algorithms, and indeed, it suffices to consider only slight ``perturbations'' of the optimal superoblivious algorithms:

\begin{theorem}[Main theorem: Better oblivious algorithms]\label{thm:perturb}
    For every $k \geq 2$, there exists a bias partition $\vect$ and a rounding vector $\vecp$ such that the oblivious algorithm $\obl$ achieves $\alpha(\obl) \gneq \alpha^*_k$. (In particular, there exists $\epsilon^* > 0$ such that for all $0 < \epsilon \leq \epsilon^*$, there exists $0 < \delta < 1$ such that $\vect = (\delta, 1)$ and $\vecp = (p^*_k+\epsilon)$ satisfy $\alpha(\obl) \gneq \alpha^*_k$.)
\end{theorem}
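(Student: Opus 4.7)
The plan is to study the factor-revealing LP specialized to the family of oblivious algorithms $\obl$ with $\vect = (\delta, 1)$ and $\vecp = (p^*_k + \epsilon)$, and show its optimum exceeds $\alpha^*_k$ for suitably chosen small $\epsilon, \delta > 0$. Because $\vect$ has only two positive thresholds, every variable in a worst-case instance falls into one of three symmetric bias classes---high-bias ($|\bias| \in [\delta, 1]$), low-bias ($|\bias| < \delta$), and the negative counterpart of the high-bias class---giving the LP a tractable combinatorial structure in which high-bias positive variables are rounded with probability $p^*_k + \epsilon$, high-bias negative variables with $1 - (p^*_k + \epsilon)$, and low-bias variables with the neutral probability $1/2$.

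The starting point is the baseline $(\epsilon, \delta) = (0, 0)$, at which \cref{thm:superobl} implies the LP value equals $\alpha^*_k$, attained by the superoblivious optimum. I would first identify the structure of the tight worst-case instance(s) for this superoblivious optimum: the form $p^*_k = \tfrac{1}{2}(1+\gamma_k)$ and $\alpha^*_k = 2(p^*_k(1-p^*_k))^{\lfloor k/2 \rfloor}$ strongly suggests that on these instances every variable has bias magnitude exactly $\gamma_k$ and the clauses exhibit a specific symmetric pattern. With this characterization in hand, I would expand the perturbed LP's value $V(\epsilon,\delta)$ around the baseline. Because $p^*_k$ is the unique superoblivious optimum, \cref{thm:superobl} forces $\partial_\epsilon V|_{(0,0)} = 0$ at every tight instance, so the dominant $\epsilon$-effect on such instances is $O(\epsilon^2)$; the hope is that a one-sided $\delta > 0$ contribution, combined with $\epsilon > 0$, yields a strictly positive first-order change in $V$.

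The main obstacle is that the LP optimum at the baseline is attained simultaneously at multiple tight instances, so a standard envelope-style argument does not directly apply: we must exhibit a direction in $(\epsilon,\delta)$-space along which \emph{every} tight instance's value does not decrease below $\alpha^*_k$, with at least one strict gainer. My plan is to enumerate these tight instances---each expected to have an explicit combinatorial description from the symmetry of the baseline---write each instance's algorithm value as an explicit polynomial in $(\epsilon,\delta)$, and verify under a scaling of the form $\delta = c \cdot \epsilon$ (or, if needed, $\delta = c \cdot \sqrt{\epsilon}$) for an appropriate constant $c > 0$ that the value stays at least $\alpha^*_k$ on every instance and strictly exceeds $\alpha^*_k$ on at least one. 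The mechanism I expect to exploit is that on tight instances made purely of high-bias variables the value moves only at second order in $\epsilon$ via criticality, while on tight instances involving any variable whose bias is bumped below $\delta$, replacing the aggressive rounding at $p^*_k+\epsilon$ with neutral $1/2$-rounding produces a first-order gain that dominates. Verifying this trade-off case-by-case---and, crucially, checking that no previously slack instance becomes the new binding constraint below $\alpha^*_k$---is the real technical core of the proof.
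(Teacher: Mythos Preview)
Your plan attacks the primal side of the factor-revealing LP: enumerate the tight worst-case instances at the baseline, expand their objective values in $(\epsilon,\delta)$, and check they all strictly improve. The paper instead works on the dual. It constructs an explicit sparse dual solution (supported on only three of the $2L+1$ dual variables $y^\pm_i$) whose feasibility certifies a lower bound on every primal solution simultaneously. At the baseline $\delta=0$, $\epsilon=0$ this dual solution has objective $\alpha^*_k$, and feasibility reduces to a ``two-sided Bernoulli inequality'' which shows that only six of the $O(k^5)$ dual constraints are tight. The paper then perturbs $(\epsilon,\delta)$ and the dual solution so that those six core constraints become strict; the remaining constraints were already slack and stay satisfied by continuity, yielding a feasible dual solution with strictly larger objective and hence $\alpha(\obl)>\alpha^*_k$.

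These approaches are LP-dual to one another, but the dual route is what makes the argument go through uniformly in $k$. Your acknowledged crux---``checking that no previously slack instance becomes the new binding constraint''---is exactly what a feasible dual solution handles for free, and you have not said how you would do it on the primal side: there are $O(k^5)$ clause patterns, hence many polytope vertices, and the feasible region itself shifts with $\delta$ (the constraints on the $\pm1$ bias classes become \emph{more} restrictive, while the zero class loosens). Your described mechanism---that reclassifying a variable to neutral $1/2$-rounding yields a first-order gain---also does not match what actually drives the improvement: at $\delta=0$ the two families of dual constraints coincide, and increasing $\delta$ separates them, creating slack that a small adjustment of the dual variables can exploit. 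Finally, note that you need \emph{every} formerly tight instance to strictly exceed $\alpha^*_k$, not merely ``at least one''; otherwise the LP minimum does not move.
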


These theorems are both proven by analyzing the dual of a certain natural linear program. Arguably, this ``dual'' perspective systematizes the \emph{ad hoc} analyses of small-space sketching algorithms for $\mdcut$ in \cite{CGV20} and for $\mkand$ in \cite{BHP+22}. Indeed, the analysis in those paper examined certain systems of linear inequalities using ``elementary'' reasoning (i.e., taking nonnegative linear combinations), and it is exactly this type of reasoning which is captured by the technology of dual linear programs.

Our results also have the following implications for streaming algorithms, generalizing connections established by Saxena, Singer, Sudan, and Velusamy~\cite{SSSV23-random-ordering} for the special case of $\mdcut$. We say an instance $\Psi$ of $\mkand$ is in \emph{input form} if it is unweighted (i.e., the weight of every clause is $1$), though multiple copies of the same clause are allowed. These instances will be the input to our algorithms though this is essentially without loss of generality as general instances can be ``rounded'' to such instances via standard arguments.

\begin{theorem}[Random-order streaming algorithm]\label{cor:random-order}
    For all $k \geq 2$, there exists $\alpha > \alpha^*_k$ such that for all $\epsilon > 0$, the following holds. There is an $O(\log n)$-space streaming algorithm which, for every instance $\Psi$ of $\mkand$ in input form with $n$ variables and $\poly(n)$ clauses, given as input $\Psi$'s clauses in a \emph{randomly-ordered} stream, outputs an $(\alpha - \epsilon)$-approximation to the $\mkand$ value of $\Psi$ with probability $99/100$.
\end{theorem}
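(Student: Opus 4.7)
The plan is to run, in small space under random ordering, the oblivious algorithm $\obl$ promised by \cref{thm:perturb}. Let $\alpha := \alpha(\obl) > \alpha^*_k$; I aim for an $(\alpha-\epsilon)$-approximation of $\mathsf{OPT}(\Psi)$ for any $\epsilon > 0$. Writing $V$ for the random number of clauses satisfied by $\obl$'s output assignment, one has $\mathbb{E}[V] \in [\alpha \cdot \mathsf{OPT}(\Psi), \mathsf{OPT}(\Psi)]$, so it suffices to estimate $\mathbb{E}[V]$ to small multiplicative accuracy.

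First, I would observe that because $\obl$ rounds each variable independently with probability depending only on its bias, $\mathbb{E}[V]$ admits a clause-local decomposition $\mathbb{E}[V] = \sum_{C \in \Psi} f(\bias(x_{C,1}), \ldots, \bias(x_{C,k}); \sigma_C)$, where $f$ is the product of $k$ bias-dependent rounding probabilities determined by $\vect$ and $\vecp$, and $\sigma_C$ encodes the literal signs of clause $C$. Since $\mathsf{OPT}(\Psi) \geq 2^{-k} m$ (by uniformly random assignment), estimating the per-clause average $\mathbb{E}[V]/m$ to additive error $O_k(\epsilon)$ yields the desired $(\alpha-\epsilon)$-multiplicative approximation to $\mathsf{OPT}(\Psi)$.

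Second, I would describe the algorithm. Split the stream into halves $\Psi_1, \Psi_2$. In $\Psi_1$, reservoir-sample $s = \Theta_k(\epsilon^{-2})$ clauses uniformly at random, recording the identities of the $\leq sk$ variables appearing in these clauses. In $\Psi_2$, for each recorded variable $x$, maintain counters of $x$'s positive and negative appearances; their normalized difference gives an estimator $\widehat{\bias}(x)$. For each sampled clause $C$, form $\widehat f_C := f(\widehat{\bias}(x_{C,1}), \ldots, \widehat{\bias}(x_{C,k}); \sigma_C)$ and output $\frac{m}{s} \sum_{C} \widehat f_C$ as the final estimate (with $m$ also tracked in $O(\log n)$ space). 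Total space: $O_k(\epsilon^{-2} \log n) = O(\log n)$, since $k$ and $\epsilon$ are constants.

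The analysis rests on two concentration steps mirroring~\cite{SSSV23-random-ordering}: first, random ordering of $\Psi_1$ ensures that $\frac1s \sum_C f_C$ concentrates (via Hoeffding) around the true per-clause mean $\mathbb{E}[V]/m$; second, random ordering of $\Psi_2$ ensures that for each recorded variable $x$, the estimator $\widehat{\bias}(x)$ is close to $\bias(x)$, \emph{provided} $x$'s degree in $\Psi$ is not too small. ``Low-degree'' variables may have inaccurate bias estimates, but they appear in a small total fraction of clauses; combining this with the Lipschitz dependence of $f$ on its bias inputs shows their contribution to the error is $O_k(\epsilon)$. The main obstacle is tuning the sample size, the low-degree threshold, and $\epsilon$ in concert so that all error terms are $o(1)$ simultaneously; this is the technical heart of~\cite{SSSV23-random-ordering}, and the generalization to $\mkand$ is routine once one accounts for the $k$-fold product structure of $f$.
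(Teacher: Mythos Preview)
Your overall plan—sample $\Theta(1/\epsilon^2)$ clauses and estimate their contribution to $\obl(\Psi)$—matches the paper's, but there is a genuine gap in your second concentration step: the claimed ``Lipschitz dependence of $f$ on its bias inputs'' is false. The rounding probability assigned by $\obl$ to a variable depends only on which \emph{interval} of the partition $\vect$ its bias lies in, so each factor of $f$ is a step function of the corresponding bias, constant on each interval and jumping at the thresholds. An arbitrarily small error in $\widehat{\bias}(x)$ can therefore put $x$ in the wrong interval and change $\widehat{f}_C$ by an amount bounded away from zero. Concretely, on an instance in which every variable has true bias exactly equal to a threshold (say $\delta$, the boundary between $\I_0$ and $\I_{+1}$), the second-half estimate lands on either side of $\delta$ with constant probability regardless of degree, so a constant fraction of variables are misclassified and your estimator is off by $\Omega(1)$ rather than $O_k(\epsilon)$. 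Your split-into-halves design also blocks the obvious repair (just compute the biases exactly), since after reservoir sampling from $\Psi_1$ you no longer have the occurrences in $\Psi_1$ of the variables you end up tracking.

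The paper sidesteps this by computing the sampled variables' biases \emph{exactly}. It neither splits the stream nor reservoir-samples: under random ordering the first $1/p=\Theta(1/\epsilon^2)$ clauses already form a uniform sample, so the algorithm simply stores this prefix $E$ explicitly and, for each of the $O(k/p)$ variables appearing in $E$, counts its positive and negative occurrences over the remainder of the stream. Since $E$ itself is retained, these counts can be combined at the end with the occurrences inside $E$ to recover each variable's exact bias over the whole instance, and hence the exact pattern of every clause in $E$. A single Chernoff bound (after a standard with-/without-replacement coupling) then shows that the empirical pattern frequencies in $E$ approximate the snapshot $\Snap$; no low-degree or continuity argument is needed.
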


\begin{theorem}[Bounded-degree streaming algorithm]\label{cor:bounded-degree}
    For all $k \geq 2$, there exists $\alpha > \alpha^*_k$ such that for all $\epsilon > 0$, the following holds: For all $D \geq 2$, there is an $O(D^{1/k} n^{1-1/k} \log n/\epsilon^{2/k})$-space streaming algorithm which, for every instance $\Psi$ of input form with $n$ variables and \emph{maximum degree} $\leq D$ (i.e., every variable is contained in $\leq D$ clauses), given as input $\Psi$'s clauses in an \emph{adversarially-ordered} stream, outputs an $(\alpha^*_k+\epsilon)$-approximation to the $\mkand$ value of $\Psi$ with probability $99/100$.
\end{theorem}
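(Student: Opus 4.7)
The plan is to combine Theorem~\ref{thm:perturb} with a sampling-based streaming implementation, in the spirit of~\cite{SSSV23-random-ordering,SSSV23-dicut}. By Theorem~\ref{thm:perturb}, there exists an oblivious algorithm $\obl$ for $\mkand$ with ratio $\alpha(\obl) > \alpha^*_k$; I would set $\alpha \in (\alpha^*_k, \alpha(\obl))$ for the theorem. Since every $\mkand$ instance has $\val \geq 2^{-k}$ (the random assignment satisfies a $2^{-k}$ fraction), it suffices to produce an additive-$\epsilon'$ estimate of $\val(\obl,\Psi)$ in the stated space, for $\epsilon' = \Theta(\epsilon \cdot 2^{-k})$.

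The crucial structural observation is that $\val(\obl,\Psi) = \frac{1}{m}\sum_C X_C$, where $X_C$ denotes the probability $\obl$ satisfies clause $C$; this depends only on the biases $\bias(v)$ of the $k$ variables appearing in $C$ together with their signs in $C$. I would estimate this average by \emph{vertex sampling}: use a hash function $h\colon[n]\to[0,1]$ to independently place each variable into a set $S$ with probability $p$; maintain exact counters $d_v^+,d_v^-$ for each $v\in S$ throughout the stream (incrementally, in one pass); and store every clause whose $k$ variables all lie in $S$. The estimator is $\hat\val = \frac{1}{mp^k}\sum_{C\subseteq S} X_C$, with each $X_C$ computed from the stored exact biases $(d_v^+-d_v^-)/(d_v^++d_v^-)$. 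Choosing $p = \Theta((D n \epsilon'^2)^{-1/k})$ and using $m\leq Dn/k$ keeps $mp^k = \Theta(\epsilon'^{-2})$, while $|S| = O(np) = O(n^{1-1/k}/(D\epsilon'^2)^{1/k})$ fits well within the claimed $O(D^{1/k} n^{1-1/k}\log n/\epsilon^{2/k})$ budget (since $D^{-1/k} \leq D^{1/k}$); counters add only $O(\log D)$ bits per sampled variable, and stored clauses add $O(k\log n/\epsilon'^2)$ total bits.

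The main technical obstacle is bounding $\mathrm{Var}(\hat\val)$, because distinct clauses $C,C'$ sharing $j$ variables are positively correlated with $\Pr[C,C'\subseteq S] = p^{2k-j}$. Here the bounded-degree hypothesis is essential: for any $j$ fixed variables, at most $D$ clauses contain all of them (since variable $1$ belongs to only $\leq D$ clauses), so the number of ordered pairs $(C,C')$ sharing at least $j$ variables is $\leq m\binom{k}{j}D$. Substituting these pair counts into the variance decomposition shows that the dominant term comes from the diagonal ($j=k$, i.e.\ $C=C'$) and contributes $O(1/(mp^k)) = O(\epsilon'^2)$, while each off-diagonal contribution ($1\leq j\leq k-1$) is of order $D/(mp^j) = O(D^{1/k}/n^{(k-j)/k})\cdot \epsilon'^{2j/k} \ll \epsilon'^2$ for $n$ large. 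Chebyshev's inequality then gives an additive-$\epsilon'$ estimate with constant probability, amplified to $99/100$ by standard median-of-means. The main care required is choosing $\epsilon$ (hence $\epsilon'$) small enough relative to $\alpha(\obl) - \alpha^*_k$ that the final additive error leaves the estimate strictly above $\alpha^*_k\cdot\val(\Psi)$, which is automatic for the stated quantification.
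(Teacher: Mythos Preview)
Your overall approach---vertex sampling with probability $p$, tracking exact biases of sampled variables, storing clauses fully contained in the sample, and estimating the oblivious algorithm's expected value via a second-moment argument---matches the paper's. However, your specific choice $p = \Theta((Dn\epsilon'^2)^{-1/k})$ does not work. You justify ``$mp^k = \Theta(\epsilon'^{-2})$'' via $m \leq Dn/k$, but that inequality only yields the \emph{upper} bound $mp^k = O(\epsilon'^{-2})$ (which is what controls the expected number of stored clauses); for your later claim ``$O(1/(mp^k)) = O(\epsilon'^2)$'' in the variance step you need the \emph{lower} bound $mp^k = \Omega(\epsilon'^{-2})$, and nothing in the hypotheses prevents $m \ll Dn$---the maximum-degree assumption bounds $m$ only from above. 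With your $p$, an instance with $m = n^{1-1/k}$ clauses gives $mp^k = \Theta(1/(D n^{1/k} \epsilon'^2)) \to 0$, so Chebyshev gives nothing.

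The paper instead takes $p = (CD/(m\epsilon^2))^{1/k}$, so that $mp^k = CD/\epsilon^2$ identically; the space bound on $|S| \approx pn$ then follows by combining this with a lower bound on $m$ (the regime $m = O(n^{1-1/k})$ being handled separately by simply storing all clauses). With that correction your more refined covariance bookkeeping (stratifying by the overlap size $j$) would go through, though the paper gets by with the cruder bound $\mathrm{Cov}[Y_C,Y_{C'}] \leq \mathbb{E}[Y_C] = p^k$ whenever $C,C'$ share a variable, together with the observation that each clause shares a variable with at most $kD$ others.
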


Both of these results are interesting because, as shown by Boyland \emph{et al.}~\cite{BHP+22} (analyzing families of algorithms and lower bounds due to Chou \emph{et al.}~\cite{CGSV21-boolean}), there are $O(\log n)$-space streaming algorithms which output (arbitrarily close to) $\alpha^*_k$-approximations for adversarially-ordered streams, and this is the best achievable ratio in $o(\sqrt n)$ space.\footnote{Technically, this lower bound is currently only known to hold for a subclass of streaming algorithms called \emph{sketching} algorithms, but the algorithm in \cref{cor:bounded-degree} appears to be such an algorithm.} Therefore, our results show that by relaxing either the adversarial-ordering assumption or the space bound, one can achieve better algorithms (the latter under a bounded-degree assumption). Analogous results to \cref{cor:random-order,cor:bounded-degree} were obtained for $\mdcut$ by Saxena, Singer, Sudan, and Velusamy~\cite{SSSV23-dicut} for the special case of $\mdcut$, there contrasting with algorithms and lower bounds due to Chou, Golovnev, and Velusamy~\cite{CGV20}.

We also include some explicit improved approximation ratios calculated using computer search and LP solvers in \cref{tab:ratios}.

\begin{table}
\centering
\begin{tabular}{c|c|c|c|c|c}
    $k$ & \textbf{Upper bound} & \textbf{Superobl.} & \textbf{Prev.} & \textbf{New, pert.} & \textbf{New, piecewise lin.} \\ \hline
    $2$ & $1/2 = 0.5$ & $4/9\approx 0.4444$ & $0.4835$ \cite{FJ15} &  $0.4457$ & $0.4844$ @ $(200,0.5,1.0)$ \\ \hline
    $3$ & $1/4 = 0.25$ & $2/9 \approx 0.2222$ & & $0.2226$ & $0.2417$ @ $(30,0.7,1.0)$ \\ \hline
    $4$ & $1/8 = 0.125$ & $72/625 = 0.1152$ & & $0.1157$ & $0.1188$ @ ($11, 0.8, 0.8$) \\ \hline
    $5$ & $1/16 = 0.0625$ & $36/625 \approx 0.0576$ & & $0.0578$ & $0.0589$ @ $(7,0.95,0.8)$ \\
\end{tabular}
\caption{The above table displays concrete approximation ratio for $\mkand$, for $k \in \{2,\ldots,5\}$. In the second column, we write the trivial upper bound of $2^{-k}$ on the approximation ratio of all oblivious algorithms (and more generally all ``local'' algorithms) for $\mkand$ (see \cref{obs:upper-bound} below). In the third column, we write $\alpha^*_k$, the approximation ratio achieved by the best superoblivious algorithm (and also by the best $o(\sqrt n)$-space sketching algorithm \cite{BHP+22}). In the fourth column, we highlight that the only previously known CSP for which oblivious algorithms outperformed superoblivious algorithms was $\mtwoand$ (from \cite{FJ15}). In the fifth, we include approximation ratios from our ``perturbed superoblivious'' algorithms, as in \cref{thm:perturb} with $\delta = 0.01$ and $\epsilon = 0.001$. Finally, in the sixth column, we report ratios achieved by much more complex algorithms which we constructed. These algorithms are parametrized by triples $(\ell,x,y)$, where $\ell$ specifies the number of bias classes, and $x,y$ specify a rounding vector. $\ell$ is chosen such that the number of variables, which is roughly $(2\ell)^k$ (see \cref{sec:patterns} below), is in at most the hundreds of thousands, in order for the LP solver to run in a reasonable amount of time. The bias partition is a uniform partition of $[0,1]$ into $\ell$ intervals and, imitating the algorithm in \cite[Proof of Theorem 1.3]{FJ15}, our rounding vector are ``two-piece piecewise-linear functions'': the first part of the vector, up to bias $x$, interpolates linearly between probability $\frac12$ and $y$, and the second part interpolates linearly between probability $y$ and $1$. The values in the last two columns were calculated with a Python script and the LP solver \texttt{glpk}; the code is available online at \url{https://github.com/singerng/oblivious-csps}. For the final column, the parameters $(x,y)$ were chosen using a grid search; solving the final LPs took 1 hour, 56 minutes on a 2021 Macbook Pro.}\label{tab:ratios}
\end{table}

\subsection{Technical overview}

The first main technical step in the paper is to develop, for each oblivious algorithm $\obl$ (defined by a bias partition $\vect$ and a rounding vector $\vecp$), a linear program (LP) which characterizes the approximation ratio of $\obl$; this LP is contained in \cref{lem:primal-opt} below, and is a generalization of the LP developed for oblivious algorithms in $\mdcut$ in \cite{FJ15}. The LP has a simple structure: Each feasible solution corresponds to a certain family of instances of $\mkand$ on which $\obl$ produces the same approximation to the $\mkand$ value, and the objective equals this approximation value. In particular, we will assign to each clause in an instance $\Psi$ of $\mkand$ a ``pattern'' based on the biases of and negations on the variables, and use the observation that the probability any clause is satisfied depends only on this pattern.

Next, we formulate the dual LP for this original ``primal'' LP (see \cref{lem:dual-opt} below). Here is where we benefit massively from the fact that the performance of oblivious algorithms is captured by a linear program, because by the magic of LP duality, it is possible to constructively show that oblivious algorithms perform \emph{well}: While feasible solutions to the \emph{primal} LP \emph{upper-bound} the ratio achieved by an oblivious algorithm $\obl$, feasible solutions to the \emph{dual} LP \emph{lower-bound} the ratio! In other words, to prove that an oblivious algorithm performs well on \emph{all} instances, it suffices to construct a \emph{single} feasible dual solution.

To prove \cref{thm:perturb}, we now want to compare the dual LP for superoblivious algorithms and their ``perturbations'', and show that we can get ``improved'' feasible solutions in the latter case. It turns out that in this setting, the primal LP has $O(k^5)$ variables and only $7$ inequality constraints; therefore the dual LP has $7$ variables and $O(k^5)$ inequality constraints. We make the crucial observation that in the superoblivious case there is an optimal dual solution which is \emph{sparse}: It is supported on only $3$ variables. Since this dual solution is so simple, we can analytically prove its feasibility for all $k$ by establishing a certain ``two-sided Bernoulli's inequality'' (\cref{lem:bern-fancy} below). And moreover, this inequality will show that in the superoblivious case, when we plug in our special solution, all but $6$ of the $O(k^5)$ dual constraints have slack! Thus, for very small values of $\delta$, it will be sufficient to slightly perturb this special solution in a way that makes these $6$ ``core'' constraints strictly satisfied, and this is precisely what we do in \cref{lem:core-strict-even,lem:core-strict-odd} below. This involves careful analysis based on certain elementary inequalities, using simple inequalities such as that $\frac{1-\epsilon}{1-k\epsilon} > \frac{1+\epsilon}{1+k\epsilon}$ for all $\epsilon > 0$ and $k > 1$.

\subsection{Future questions}

\paragraph{Streaming algorithms.} We hypothesize that the bounded-degree assumption in \cref{cor:bounded-degree} can be relaxed to give an $\tilde{O}(n^{1-1/k})$-space algorithm for \emph{all} instances (in input form with $\poly(n)$ clauses). Specifically, Saxena, Singer, Sudan, and Velusamy~\cite{SSSV23-dicut} developed sketching techniques enabling such a guarantee for $\mdcut$, the bounded-degree counterpart being provided by their earlier work \cite{SSSV23-random-ordering}; perhaps these ideas can be extended to $\mkand$.

\paragraph{More CSPs.} It would also be interesting to extend the framework in this paper to more CSPs, both other Boolean CSPs and to CSPs over larger alphabets. To the best of our knowledge, it is even plausible that every CSP which admits nontrivial $O(\log n)$-space sketching algorithms (as analyzed in \cite{CGSV21-finite}) also admits ``oblivious-style'' approximation algorithms, which in turn yield better sublinear-space streaming algorithms. A good starting point here would be to analyze symmetric Boolean CSPs, since in that setting we know that all CSPs which do not support one-wise distributions of satisfying assignments admit such nontrivial sketching algorithms (see \cite[Proposition 2.10]{CGSV21-boolean}); this class includes $\mkand$, for which we developed such results in this paper, but we could hope for improved ``oblivious-style'' algorithms for other such CSPs, e.g. symmetric threshold functions.

\paragraph{``Uniform'' hard instances.} In the special case of $\mdcut$, Feige and Jozeph~\cite{FJ15} constructed what might be called a \emph{uniformly hard} instance of $\mdcut$: For this single instance, \emph{every} oblivious algorithm achieves a ratio less than than $0.49$. (This strengthens the $\frac12$  bound from a ``trivial'' instance, a single bidirected edge; see also \cref{obs:upper-bound} below.) It would be interesting to construct similar instances for $\mkand$, $k \geq 3$, especially if such this construction could be made analytic. We note that such an object corresponds to a feasible solution to the linear program in \cref{lem:primal-opt} for which \emph{every} choice of rounding vector has objective strictly less than $\frac12$, and therefore for $\mkand$, any proof would require certifying that a certain degree-$k$ polynomial is bounded below $\frac12$ over $p \in [0,1]$.

\paragraph{An optimal rounding curve?} To construct an $0.483$-approximate oblivious algorithm for $\mdcut$, Feige and Jozeph~\cite{FJ15} rounded vertices using (a discretization of) a sigmoid-shaped piecewise-linear function: This function rounds vertices with bias $b \in (0,1]$ to $1$ with probability \[ p(b) = \begin{cases} \frac12+b & 0 \leq b \leq \frac12 \\ 1 & \frac12 \leq b \leq 1\end{cases}. \] But is it possible to analytically calculate the optimal rounding function (and is it unique)? Given any discretization of biases into intervals, one could in principle enumerate all basic feasible solutions to the LP, and then calculate the best rounding vector; that is, each rounding vector will induce an objective function for the LP, and the best rounding vector maximizes the minimum objective over all basic feasible solution. Towards this, it might be helpful to get a handle on the vertices of this LP's polytope, and whether there is some simple way to enumerate them.

\subsection*{Outline}

We define $\mkand$ and oblivious algorithms formally in \cref{sec:formal-defns} and develop the linear-programming characterization for the approximation ratio of oblivious algorithms, and some other basic tools, in \cref{sec:lp}. We analyze the dual LP to prove \cref{thm:perturb} in \cref{sec:thm:perturb}. We prove \cref{thm:superobl}, this time by analyzing the primal LP, in \cref{sec:thm:superobl}. Finally, we prove our theorems on streaming algorithms (\cref{cor:bounded-degree,cor:random-order}) in \cref{sec:streaming}.

\section{Definitions: $\mkand$ and oblivious algorithms}\label{sec:formal-defns}

We now give formal definitions for the $\mkand$ problem and for oblivious algorithms. For the remainder of the paper, we adopt a (nonstandard) convention which views variables in $\mkand$ as taking $\{-1,+1\}$ values; this is for notational convenience in defining bias and similar concepts.

\begin{definition}[$\mkand$]
An \emph{instance} of the $\mkand$ problem on $n$ variables is given by a sequence of constraints $C_1,\ldots,C_m$, with $C_j = (V^+_j,V^-_j,w_j)$, consisting of ``positive variables'' $V^+_j \subseteq [n]$ and ``negative variables'' $V^-_j \subseteq [n]$ with $V^+_j \cap V^-_j = \emptyset$ and $|V^+_j \cup V^-_j| = k$, and a weight $w_j \geq 0$. An \emph{assignment} for this problem is given by $(\vecx) = (x_1,\ldots,x_n) \in \{\pm1\}^n$, and the \emph{value} of this assignment is \[ \val_\Psi(\vecx) \eqdef \frac{\sum_{j=1}^m \1[x_v=+1 \; \forall v \in V_j^+ \wedge x_v = -1 \; \forall v \in V_j^-] \cdot w_j}{\sum_{j=1}^m w_j}. \] The \emph{value} of the instance $\Psi$ is \[ \val_\Psi \eqdef \max_{\vecx \in \{\pm1\}^n} \val_\Psi(\vecx). \]
\end{definition}

Next, towards defining the bias of a variable in an instance, for any variable $v \in [n]$, we define its \emph{positive} and \emph{negative weight}:

\begin{equation}
    w^+_\Psi(v) \eqdef \sum_{j=1}^m \1[v \in V^+_j] \cdot w_j \text{ and } w^-_\Psi(v) \eqdef \sum_{j=1}^m \1[v \in V^-_j] \cdot w_j.
\end{equation}

Then, we define the \emph{bias} of a variable as:\footnote{Throughout the paper, we assume every variable appears in at least one constraint, and therefore that $w^+_\Psi(v)+w^-_\Psi(v) > 0$. (This is WLOG, since variables appearing in no constraints can be ignored for the purposes of $\mkand$.)}
\begin{equation}
    \bias_\Psi(v) \eqdef \frac{w^+_\Psi(v)-w^-_\Psi(v)}{w^+_\Psi(v)+w^-_\Psi(v)}.
\end{equation}

Next, we consider \emph{symmetric} ways to partition the space of possible biases $[-1,1]$ into $L=2\ell+1$ intervals labeled by $\{-\ell,\ldots,+\ell\}$. The data of such a partition is a ``bias partition'' vector $\vect = (t_0,\ldots,t_\ell)$ with $0 \leq t_0 < \cdots < t_{+\ell}=1$. We denote the $i$-th interval by $\I_i$, and let for $i \geq 1$, we let $\I_{+i} = (t_{i-1},t_i]$ and $\I_{-i} = [-t_i,-t_{i-1})$; and the $0$-th interval be the center $[-t_0,t_0]$. Thus, the intervals $\I_{-\ell},\ldots,\I_0,\ldots,\I_{+\ell}$ partition the interval $[-1,1]$ of possible biases.\footnote{Our choice of which ends of these intervals are open and which are closed is an arbitrary convention; the only important property of the decomposition of $[-1,1]$ into intervals is that it is \emph{symmetric}.} For notational convenience, we let $t^+_i$ and $t^-_i$ denote the upper and lower bounds $\sup \I_i$ and $\inf \I_i$, respectively. (So, e.g., $t_i^+ = t_i$ for $i \geq 0$, whereas $t_i^+ = t_{-(i+1)}$ for $i < 0$.) We also consider symmetric ways to round vertices based on these classes. The data of such a rounding scheme is a ``rounding vector'' $\vecp=(p_1,\ldots,p_\ell)$ of probabilities. Given these, we can define ``$L$-class'' oblivious algorithms:

\begin{definition}[Oblivious algorithm for $\mkand$]
Let $L = 2\ell+1 \geq 3$ be an odd integer. Let $\vect = (t_0,\ldots,t_\ell)$ be an bias partition and $\vecp = (p_1,\ldots,p_\ell)$ a rounding vector. For any $k \geq 2$, the oblivious algorithm $\obl$ for $\mkand$ behaves as follows: Given an instance $\Psi$, for each variable $v \in \{1,\ldots,n\}$ independently:

\begin{itemize}
    \item If $\bias_\Psi(v) \in \I_0$, assign $x_v \mapsto 1$ w.p. $\frac12$, $x_v \mapsto -1$ w.p. $\frac12$.
    \item If $\bias_\Psi(v) \in \I_{+i}$ for $i \geq 1$, assign $x_v \mapsto 1$ w.p. $p_i$, $x_v \mapsto -1$ w.p. $1-p_i$.
    \item If $\bias_\Psi(v) \in \I_{-i}$ for $i \geq 1$, assign $x_v \mapsto 1$ w.p. $1-p_i$, $x_v \mapsto -1$ w.p. $p_i$.
\end{itemize}

We denote by $\obl(\Psi)$ the expected value of the assignment produced by this rounding scheme,\footnote{We abuse this notation and often think of $\obl(\Psi)$ as the \emph{output} of the oblivious algorithm, i.e., we think of the oblivious algorithm's goal as outputting a (scalar) estimate of the value of the instance; this holds especially in the context of streaming algorithms.} and by \[ \alpha(\obl) \eqdef \inf_\Psi \frac{\obl(\Psi)}{\val_\Psi} \] the approximation ratio achieved by this algorithm.
\end{definition}

In the simplest interesting case, we have $\ell=1$, $\vect=(0,1)$, and $\vecp=(p)$. These algorithms, which we call \emph{superoblivious} algorithms, ignore the \emph{magnitude} of the bias of each variable, rounding only based on sign: E.g., negatively-biased variables are rounded to $1$ w.p. $1-p$.

We remark that there are a few natural ways to generalize this definition of oblivious algorithms. Firstly, we could consider rounding functions which are not ``antisymmetric'', i.e., we could round bias-$(+b)$ and bias-$(-b)$ variables with probabilities which are not complementary. In particular, for $b=0$, we could round bias-$0$ variables could be rounded with non-uniform probability; however, such an algorithm would strictly underachieve any antisymmetric algorithm on simple instances (see \cref{obs:upper-bound} below). Also, we could use continuous rounding functions instead of breaking up the range of biases into discrete intervals, but such an algorithm would not be amenable to analysis of the approximation ratio by a linear program.

\begin{observation}\label{obs:upper-bound}
There is a simple lower-bound construction which shows that \emph{no} oblivious algorithm for $\mkand$ can achieve a ratio better than $2^{-(k-1)}$. (Note that the optimal superoblivious ratio $\alpha^*_k$ equals this upper bound times a ``discounting'' factor.) Consider any $k$: In the instance with two equally weighted constraints, $C^+ = (+1,\ldots,+1),(1,\ldots,k)$ and $C^- = (-1,\ldots,-1),(1,\ldots,k)$, i.e., the two constraints want $x_1,\ldots,x_k$ to be all-$(+1)$'s and all-$(-1)$'s, respectively. Every variable has bias zero so it will be rounded uniformly by every oblivious algorithm, yielding value $2^{-k}$, while the ``greedy'' all-$(+1)$'s (or all-$(-1)$'s) assignment achieves value $\frac12$. Indeed, this ``lower bound'' holds for any class of algorithms which cannot ``break the symmetry'' between these two greedy assignments.
\end{observation}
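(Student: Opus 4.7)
The plan is to exhibit a single ``hard'' instance $\Psi$ on which every oblivious algorithm $\obl$ achieves ratio at most $2^{-(k-1)}$, and then conclude by noting that $\alpha(\obl)$ is defined as the infimum over all instances. The instance is exactly the one sketched informally in the statement: $n = k$, and two equally-weighted clauses $C^+ = (\{1,\ldots,k\}, \emptyset, 1)$ and $C^- = (\emptyset, \{1,\ldots,k\}, 1)$ demanding, respectively, all-$(+1)$'s and all-$(-1)$'s on the same $k$ variables.

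First, I would compute $\val_\Psi$. Since $C^+$ and $C^-$ share all $k$ variables and require opposite values on each of them, no assignment can satisfy both, so $\val_\Psi \leq \tfrac12$. Equality is witnessed by either constant assignment $\vecx = (+1,\ldots,+1)$ or $\vecx=(-1,\ldots,-1)$, so $\val_\Psi = \tfrac12$. Second, I would compute the bias of each variable: for every $v \in \{1,\ldots,k\}$ we have $w^+_\Psi(v) = w^-_\Psi(v) = 1$, hence $\bias_\Psi(v) = 0$. Therefore $\bias_\Psi(v) \in \I_0$ for \emph{every} bias partition $\vect$, so by the first bullet in the definition of $\obl$, each $x_v$ is assigned $\pm 1$ independently and uniformly at random, regardless of $\vect$ or $\vecp$.

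Finally, since the variables are rounded independently and uniformly, $\Pr[C^+ \text{ satisfied}] = \Pr[C^- \text{ satisfied}] = 2^{-k}$, so by linearity the expected satisfied fraction is $\obl(\Psi) = \tfrac{1}{2}(2^{-k}) + \tfrac{1}{2}(2^{-k}) = 2^{-k}$. Hence $\obl(\Psi)/\val_\Psi = 2^{-k}/(1/2) = 2^{-(k-1)}$ for every choice of $\vect,\vecp$, and taking the infimum in the definition of $\alpha(\obl)$ yields the desired bound.

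There is no real obstacle; the argument is genuinely elementary. The only conceptual point worth flagging is that the upper bound is forced by the \emph{definition} of oblivious algorithms, which hard-codes uniform rounding on $\I_0$: any asymmetric rounding of bias-$0$ variables would be beaten on the analogous instance obtained by swapping $C^+$ and $C^-$, which is consistent with the parenthetical in the final sentence about classes of algorithms that cannot break the symmetry between the two greedy assignments.
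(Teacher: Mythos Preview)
Your proposal is correct and matches the paper's own argument, which is embedded directly in the statement of the observation rather than given as a separate proof. You have simply made explicit the computation of $\val_\Psi = \tfrac12$, the bias-zero verification, and the expected value $2^{-k}$, all of which the paper asserts without further elaboration.
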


\section{The linear-programming framework for oblivious algorithms}\label{sec:lp}

In this section, we develop a linear program which captures the ``worst-case instance'' for any oblivious algorithm, and therefore can be used to calculate the approximation ratio (\cref{lem:primal-opt}), as well as the corresponding dual linear program (\cref{lem:dual-opt}). These will be applied to bound the approximation ratios of certain oblivious algorithms in the following sections.

\subsection{Clause patterns}\label{sec:patterns}

Let $\Clauses^L_k$ denote the set of vectors $\vecc=(\vecc^+,\vecc^-)=(c^+_{-\ell},\ldots,c^+_{+\ell},c^-_{-\ell},\ldots,c^-_{-\ell})$ whose entries are natural numbers and sum to $k$. These are useful because they describe each particular clause from the perspective of an $L$-class oblivious algorithms. In particular, given a clause $C$, we denote its \emph{pattern} $\ptn(C) = (c^+_{-\ell},\ldots,c^+_{+\ell},c^-_{-\ell},\ldots,c^-_{-\ell}) \in \Clauses^L_k$ where $c^+_i$ and $c^-_i$ denote the number of positive and negative literals in $C$ whose variables have bias class $i$, respectively, for each $i \in \{-\ell,\ldots,+\ell\}$. That is, e.g., \[ c^+_i = |\{v \in V^+_j : \bias_\Psi(v) \in \I_i \}|. \]

Now, for any rounding vector $\vecp = (p_1,\ldots,p_\ell)$, we define

\begin{equation}\label{eq:p}
    \prob^\vecp(\vecc) = 2^{-(c^+_0+c^-_0)} \prod_{i=1}^\ell p_i^{c^+_{+i} + c^-_{-i}} (1-p_i)^{c^-_{+i} + c^+_{-i}}
\end{equation}
for each $\vecc \in \Clauses^L_k$.\footnote{In this expression we adopt the convention $0^0=1$, i.e., if $p_i=0$ but $c^-_{+i} + c^+_{-i}=0$ then we ignore the factor $0$.} Then we have:

\begin{claim}\label{claim:obl-exp}
Let $\Psi$ be an instance of $\mkand$ with clauses $C_1,\ldots,C_m$ with weights $w_1,\ldots,w_m$, respectively. Then \[ \obl(\Psi) = \frac{\sum_{j=1}^m \prob^\vecp(\ptn(C_j)) \cdot w_j}{\sum_{j=1}^m w_j}. \]
\end{claim}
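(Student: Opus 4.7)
\textbf{Proof plan for \cref{claim:obl-exp}.} The statement is essentially a bookkeeping exercise: by linearity of expectation, the expected value of the assignment produced by $\obl$ decomposes over clauses, and the per-clause probabilities factor nicely because the algorithm rounds variables independently. The plan is therefore to first unwrap the definition of $\obl(\Psi)$ as an expectation, then compute the satisfaction probability of a single clause $C_j$ by factoring over the variables it touches, and finally group the resulting factors by bias class and literal sign to match the product formula~\eqref{eq:p} verbatim.

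In more detail, I will begin by writing
\[
    \obl(\Psi) = \mathbb{E}[\val_\Psi(\vecx)] = \frac{\sum_{j=1}^m \Pr[C_j \text{ satisfied}] \cdot w_j}{\sum_{j=1}^m w_j},
\]
where the expectation is over the independent random choices of $x_v$ prescribed by $\obl$ and the equality uses the indicator expression for $\val_\Psi(\vecx)$. For each clause $C_j = (V^+_j,V^-_j,w_j)$, the event that $C_j$ is satisfied is the conjunction, over $v \in V^+_j \cup V^-_j$, of independent events of the form ``$x_v$ takes its required sign''. Hence the probability factors as a product of single-variable terms.

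Next, I will carry out a short case analysis of each single-variable factor, indexed by the bias class $i$ of the variable $v$ and by whether $v$ lies in $V^+_j$ or $V^-_j$. The definition of $\obl$ gives: variables with $\bias_\Psi(v) \in \I_0$ contribute $\tfrac12$ in either case; a variable with $\bias_\Psi(v) \in \I_{+i}$, $i \geq 1$, contributes $p_i$ if $v \in V^+_j$ and $1-p_i$ if $v \in V^-_j$; and symmetrically, $\I_{-i}$ contributes $1-p_i$ if $v \in V^+_j$ and $p_i$ if $v \in V^-_j$. Regrouping these factors using the pattern counts, the $\tfrac12$'s pool into $2^{-(c^+_0+c^-_0)}$, the factors $p_i$ pool into $p_i^{c^+_{+i}+c^-_{-i}}$, and the factors $1-p_i$ pool into $(1-p_i)^{c^-_{+i}+c^+_{-i}}$, which is exactly $\prob^\vecp(\ptn(C_j))$ as in~\eqref{eq:p}. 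Plugging back into the earlier displayed equation gives the claim.

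I do not anticipate any real obstacle: the only subtlety is keeping the signs and indices straight in the case analysis (in particular remembering that \emph{positive} literals on \emph{negatively}-biased variables pick up a $1-p_i$), and handling the $0^0=1$ convention quietly when some $p_i$ equals $0$ or $1$, both of which I will flag in passing.
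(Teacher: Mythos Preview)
Your proposal is correct and follows essentially the same approach as the paper: reduce by linearity of expectation to the per-clause satisfaction probability, factor that probability over the independently rounded variables, and then regroup the factors by bias class and literal sign to recover~\eqref{eq:p}. The paper's proof is just a terser version of exactly this computation.
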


\begin{proof}
    By linearity of expectation, it suffices to show that each clause $C_j$ is satisfied w.p. $\prob^\vecp(\ptn(C_j))$. We can rewrite \[ \prob^\vecp(\vecc) = 2^{-c^+_0}  2^{-c^-_0} \prod_{i=1}^\ell p_i^{c^+_{+i}} p_i^{c^-_{-i}} (1-p_i)^{c^-_{+i}} (1-p_i)^{c^+_{-i}}. \] Recalling that each variable is assigned independently, and the clause is satisfied iff each literal is, the above expression precisely represents the probability that the clause is satisfied. (E.g., if there is a negative literal whose variable has bias class $+i$, this literal is satisfied with probability $1-p_i$; the number of such factors in the probability is $c^-_{+i}$.)
\end{proof}

We observe that $|\Clauses_k^L| = \binom{k+2L-1}{2L-1}$ by the ``stars-and-bars'' formula. For instance, if $L=3$ (as will be the case in the explicit analysis in the following sections), we have $|\Clauses_k^L| = O(k^5)$.

\subsection{The factor-revealing linear program}

We denote by $\PClauses^L_k \subseteq \Clauses^L_k$ the space of clause patterns without negations, i.e., $\vecc$ such that $c^-_{-\ell}=\cdots=c^-_{+\ell}=0$. For two vectors $\vecx = (x_1,\ldots,x_n), \vecy = (y_1,\ldots,y_n) \in \BR^n$, let $\vecx \odot \vecy = (x_1y_1,\ldots,x_ny_n)$ denote their entrywise product. To design the linear program, we will need the following useful proposition:

\begin{proposition}[Flipping]\label{claim:flipping}
Let $\Psi$ be an instance of $\mkand$, and for any assignment $\vecy=(y_1,\ldots,y_n) \in \{\pm1\}^n$, let $\flip^\vecy(\Psi)$ denote the instance of $\mkand$ where we ``flip'' the variables $v$ with $y_v = -1$; that is, each clause $C_j = (V^+_j,V^-_j,w_j)$ in $\Psi$ becomes a clause $D_j = (U^+_j,U^-_j,w_j)$ where $U^+_j = \{v \in V^+_j : y_v = +1\} \cup \{v \in V^-_j : y_v = -1\}$ and $U^-_j = \{v \in V^+_j : y_v = -1\} \cup \{v \in V^-_j : y_v = +1\}$. Then:

\begin{itemize}
    \item For every assignment $\vecx \in \{\pm1\}^n$, $\val_\Psi(\vecx) = \val_{\flip^\vecy(\Psi)}(\vecx \odot \vecy)$.
    \item In particular, if $\vecx$ is an optimal assignment to $\Psi$, then $\vecx\odot\vecy$ is an optimal assignment to $\flip^\vecy(\Psi)$.
    \item $\obl(\Psi) = \obl(\flip^\vecy(\Psi))$.
\end{itemize}
\end{proposition}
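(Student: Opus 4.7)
The plan is to verify all three bullets by a direct clause-by-clause (and literal-by-literal) unpacking of the definitions. The key observation is that flipping variables $v$ with $y_v = -1$ and simultaneously applying the involution $\vecx \mapsto \vecx \odot \vecy$ on assignments yields a symmetry of the $\mkand$ value, and moreover this symmetry commutes (in distribution) with any oblivious rounding, since oblivious algorithms round antisymmetrically about the origin.

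For the first bullet, I would fix a clause $C_j = (V^+_j, V^-_j, w_j)$ of $\Psi$ and its image $D_j = (U^+_j, U^-_j, w_j)$ in $\flip^\vecy(\Psi)$, and check by case analysis on $v \in V^+_j \cup V^-_j$ and on the sign of $y_v$ that the indicator ``$C_j$ is satisfied by $\vecx$'' equals the indicator ``$D_j$ is satisfied by $\vecx \odot \vecy$''. For instance, if $v \in V^+_j$ and $y_v = -1$, then $v \in U^-_j$ by definition of flipping, so the two indicators respectively require $x_v = +1$ and $(\vecx \odot \vecy)_v = -1$, which are equivalent; the other three sign combinations are analogous. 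Summing the weighted indicators over $j$ and normalizing yields $\val_\Psi(\vecx) = \val_{\flip^\vecy(\Psi)}(\vecx \odot \vecy)$. The second bullet is then immediate, since $\vecx \mapsto \vecx \odot \vecy$ is a bijection on $\{\pm1\}^n$, so optimizing $\val_\Psi$ over $\vecx$ corresponds to optimizing $\val_{\flip^\vecy(\Psi)}$ over $\vecx \odot \vecy$.

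The third bullet is the more substantive one. By \cref{claim:obl-exp}, $\obl(\Psi)$ and $\obl(\flip^\vecy(\Psi))$ are weighted sums (with the same weights $w_j$) of $\prob^\vecp(\ptn(C_j))$ and $\prob^\vecp(\ptn(D_j))$ respectively, so it suffices to prove clausewise that these two probabilities agree. To do so, I would first compute from the definitions that $w^{\pm}_{\flip^\vecy(\Psi)}(v) = w^{\pm y_v}_\Psi(v)$, whence $\bias_{\flip^\vecy(\Psi)}(v) = y_v \cdot \bias_\Psi(v)$; consequently, $v$ has bias class $i$ in $\Psi$ iff it has bias class $y_v \cdot i$ in $\flip^\vecy(\Psi)$ (using the signed indexing of intervals). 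I would then track each literal of $C_j$ through the flipping operation: if $v \in V^+_j$ has bias class $i$ in $\Psi$, then in $D_j$ either $v \in U^+_j$ still has bias class $i$ (when $y_v = +1$) or $v \in U^-_j$ has bias class $-i$ (when $y_v = -1$). In either subcase, the corresponding factor contributed to the product in \cref{eq:p} is the same --- for instance, in the $y_v=-1$, $i = +j$ subcase, both factors equal $p_j$, and the analogous calculations handle the remaining literal/sign combinations. Multiplying across all $k$ literals of $C_j$ gives $\prob^\vecp(\ptn(C_j)) = \prob^\vecp(\ptn(D_j))$, which completes the proof.

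The only mildly subtle point is the bookkeeping in the last step: one must check that the antisymmetry in the definition of $\obl$ (bias $+b$ rounded to $+1$ with probability $p_i$, bias $-b$ with probability $1 - p_i$) is precisely what matches the sign-flip symmetry of $\flip^\vecy$ on the positive/negative literal distinction. But this is a direct verification, and once one case is checked the others follow by symmetric relabeling.
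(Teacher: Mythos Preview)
Your proposal is correct and follows exactly the approach the paper takes: the paper's proof consists of the single sentence ``Follows immediately from definitions,'' and your write-up is precisely the unpacking of those definitions that this sentence implicitly invokes. (One tiny cosmetic remark: in the third bullet you reuse $j$ for both the clause index and the bias-class index, which you may want to rename to avoid confusion.)
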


\begin{proof}
    Follows immediately from definitions.
\end{proof}

\begin{lemma}[Primal characterization]\label{lem:primal-opt}
For every bias partition $\vect = (t_0,\ldots,t_\ell)$ and rounding vector $\vecp = (p_1,\ldots,p_\ell)$, the approximation ratio $\alpha(\obl)$ achieved by $\obl$ equals the value of the following linear program:

\begin{empheq}[left=\empheqlbrace]{alignat*=3}
    & \mathrm{minimize} \quad && \sum_{\vecc \in \Clauses^L_k} \prob^\vecp(\vecc) \cdot W(\vecc)  && \\
    & \mathrm{s.t.} && W(\vecc) \geq 0 && \forall \vecc \in \Clauses^L_k \\
    & && \sum_{\vecc \in \PClauses^L_k} W(\vecc) = 1 && \\
    & && t^-_i (W^+(i) + W^-(i)) \leq W^+(i) - W^-(i) \quad && \forall i \in \{-\ell,\ldots,+\ell\} \\
    & && W^+(i) - W^-(i) \leq t^+_i (W^+(i) + W^-(i)) \quad && \forall i \in \{-\ell,\ldots,+\ell\}
\end{empheq}
where we define the linear functions \[ W^+(i) = \sum_{\vecc \in \Clauses^L_k} c^+_i W(\vecc) \text{ and } W^-(i) = \sum_{\vecc \in \Clauses^L_k} c^-_i W(\vecc). \]
\end{lemma}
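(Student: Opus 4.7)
The LP encodes, via the pattern decomposition of \cref{sec:patterns}, a kind of ``fractional relaxation'' of worst-case instances for $\obl$. Each variable $W(\vecc)$ represents the normalized weight of clauses of pattern $\vecc$; the normalization $\sum_{\vecc \in \PClauses^L_k} W(\vecc) = 1$ encodes that the all-$(+1)$'s assignment has value $1$ (after appropriate flipping); and the bias inequalities enforce consistency of class assignments. By \cref{claim:obl-exp}, the objective then equals $\obl(\Psi)/\val_\Psi$ exactly. I would prove equality by showing both that the LP is a valid relaxation (LP value $\leq \alpha(\obl)$) and that it is tight (LP value $\geq \alpha(\obl)$).

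\emph{LP is a relaxation.} Given an instance $\Psi$, let $\vecx^*$ be an optimal assignment and, by \cref{claim:flipping}, replace $\Psi$ by $\Psi' := \flip^{\vecx^*}(\Psi)$ without changing either $\obl$ or $\val$; now all-$(+1)$'s is optimal for $\Psi'$, so $\val_{\Psi'}$ equals the weighted fraction of clauses with pattern in $\PClauses^L_k$. Set $W(\vecc) := \bigl(\sum_{j: \ptn(C_j') = \vecc} w_j\bigr)/Z$ where $Z := \sum_{j: \ptn(C_j') \in \PClauses^L_k} w_j$. Non-negativity and normalization are immediate. The bias constraint at class $i$ follows by unpacking $Z \cdot W^\pm(i) = \sum_{v: \bias_{\Psi'}(v) \in \I_i} w^\pm_{\Psi'}(v)$ and summing, over $v$ in class $i$, the single-variable inequality $t^-_i (w^+_{\Psi'}(v)+w^-_{\Psi'}(v)) \leq w^+_{\Psi'}(v)-w^-_{\Psi'}(v) \leq t^+_i (w^+_{\Psi'}(v)+w^-_{\Psi'}(v))$, which is just a rewriting of $\bias_{\Psi'}(v) \in \I_i \subseteq [t^-_i,t^+_i]$. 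The objective value of this $W$ equals $\obl(\Psi')/\val_{\Psi'} = \obl(\Psi)/\val_\Psi$ by \cref{claim:obl-exp}, so the LP value is at most $\obl(\Psi)/\val_\Psi$ for every $\Psi$, giving LP value $\leq \alpha(\obl)$.

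\emph{LP is tight.} Given a feasible $W$, by density of rationals and continuity of the LP objective I may WLOG assume $W$ has rational entries and that each $\beta_i := (W^+(i)-W^-(i))/(W^+(i)+W^-(i))$ lies strictly interior to $\I_i$ (small perturbations change the objective by $O(\epsilon)$). I then build a sequence of realizing instances $\Psi_N$: for each class $i$, introduce $N$ identical ``class-$i$ variables'' slated to receive positive and negative weight in ratio $W^+(i):W^-(i)$; for each pattern $\vecc$, include $NW(\vecc)$ clauses (taking $N$ a multiple of the common denominator of the $W(\vecc)$), whose $k$ literal-slots are filled uniformly and independently at random from the prescribed classes, rejecting samples that fail the distinctness requirement. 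By the law of large numbers, as $N \to \infty$ almost every variable's realized bias lies in $\I_i$ (since $\beta_i$ is interior), and almost every clause has its intended pattern; hence $\obl(\Psi_N) \to \sum_\vecc \prob^\vecp(\vecc) W(\vecc) / \sum_\vecc W(\vecc)$, while the all-$(+1)$'s assignment witnesses $\val_{\Psi_N} \geq \sum_{\vecc \in \PClauses^L_k} W(\vecc) / \sum_\vecc W(\vecc) = 1/\sum_\vecc W(\vecc)$. Combining these yields $\obl(\Psi_N)/\val_{\Psi_N} \leq \sum_\vecc \prob^\vecp(\vecc) W(\vecc)$ in the limit, so $\alpha(\obl) \leq $ the LP value.

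The main technical step is the tightness direction, realizing fractional LP solutions by explicit instances: care is needed both at class boundaries (where the perturbation keeps $\beta_i$ safely inside the half-open $\I_i$) and in the random construction (where concentration ensures the realized patterns and biases match the intended profile, with distinctness violations becoming a vanishing fraction). The relaxation direction is essentially definitional once \cref{claim:flipping} and \cref{claim:obl-exp} are applied.
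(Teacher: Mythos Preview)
Your relaxation direction is correct and matches the paper's argument essentially line-for-line.

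The tightness direction has the right skeleton (perturb to strict/interior, then realize by an instance), but the randomized construction as written does not deliver the concentration you claim. With $N$ variables per class and $\Theta(N)$ total clauses, each variable has \emph{constant} expected degree, so the law of large numbers does not apply to any individual variable's bias: as $N\to\infty$ the degree distribution is approximately Poisson, a constant fraction of class-$i$ variables have degree $0$ or $1$ (hence bias $\pm1$, not near $\beta_i$), and consequently a constant fraction of clauses will have their realized pattern differ from the intended $\vecc$. Thus neither ``almost every variable's realized bias lies in $\I_i$'' nor ``almost every clause has its intended pattern'' follows from your setup.

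There are two easy repairs. The quick fix is to decouple the two scales: keep $N$ variables per class but use, say, $N^2 W(\vecc)$ clauses per pattern, so that each variable has degree $\Theta(N)$ and Chernoff bounds give the needed concentration. The paper instead takes the cleanest route: it uses only $k$ variables per class and, for each pattern $\vecc$, includes \emph{every} tuple of variables consistent with $\vecc$ with equal weight $W(\vecc)/|\CJ_\vecc|$. A short counting argument (essentially the identity $\binom{k-1}{c-1}/\binom{k}{c}=c/k$) then shows that every class-$i$ variable receives \emph{exactly} positive weight $W^+(i)/k$ and negative weight $W^-(i)/k$, so its bias is exactly $\beta_i$ with no randomness and no limit needed. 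Your perturbation step (forcing $\beta_i$ into the interior of $\I_i$) is still required in the paper's version, and for the same reason you identified.
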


\begin{proof}
    Let $\aalg$ denote the approximation ratio of $\obl$, and $\alp$ the minimum value of the linear program. This proof generalizes \cite[Proof of Theorem 1.2]{FJ15}.

    ($\alp \leq \aalg$) We show that for every instance $\Psi$ of $\mkand$, there is a feasible LP solution $\{W(\vecc)\}_{\vecc \in \Clauses_k^L}$ of objective value $\obl(\Psi)/\val_\Psi$.
    
    Towards this claim, by \cref{claim:flipping}, we can assume WLOG that the all-$(+1)$'s assignment is optimal for $\Psi$. Also, we assume WLOG by rescaling that $\Psi$ has total weight $\frac1{\val_\Psi}$, i.e., $\sum_{j=1}^m w_j = \frac1{\val_\Psi}$. Now, let $C_1,\ldots,C_m$ denote the constraints of $\Psi$, and let $W(\vecc) := \sum_{j=1}^m \1[\ptn(C_j)=\vecc] w_j$. We claim that $\{W(\vecc)\}_{\vecc \in \Clauses_k^L}$ is feasible and has objective value $\sum_{\vecc \in \Clauses_k^L} \prob^\vecp(\vecc) W(\vecc) = \obl(\Psi)/\val_\Psi$.
    
    First, we check feasibility. Clearly all $W(\vecc)$'s are nonnegative. Next, we have
    \begin{align*}
        \val_\Psi &= \val_\Psi(\vecpos) \tag{all-$(+1)$'s is optimal} \\
        &= \frac{\sum_{j=1}^m w_j \1[|V^-_j|=0]}{\sum_{j=1}^m w_j} \tag{def. of $\val_\Psi(\vecpos)$} \\
        &= \frac{\sum_{\vecc \in \PClauses^L_k} W(\vecc)}{1/\val_\Psi} \tag{def. of $\PClauses^L_k$ and total weight assumption}
    \end{align*}
    which rearranges to $\sum_{\vecc \in \PClauses^L_k} W(\vecc)=1$.
    
    Now, recall the definitions of $\bias_\Psi, w^+_\Psi, w^-_\Psi$ from \cref{sec:formal-defns}. Fix a bias class $i \in \{-\ell,\ldots,+\ell\}$. For any variable $v$ with bias class $i$, we have $\bias_\Psi(v) \in \I_i$, so $t^-_i \leq \bias_\Psi(v) \leq t^+_i$, so multiplying through by $w^+_\Psi(v) + w^-_\Psi(v)$, we get \[ t^-_i (w^+_\Psi(v) + w^-_\Psi(v)) \leq w^+_\Psi(v) - w^-_\Psi(v) \leq t_i^+ (w^+_\Psi(v) + w^-_\Psi(v)). \] Letting $\CV_i$ denote the set of all variables in $\Psi$ with bias class $i$, we can sum over these equations to get \[ t^-_i \sum_{v\in\CV_i} (w^+_\Psi(v) + w^-_\Psi(v)) \leq \sum_{v\in\CV_i} (w^+_\Psi(v) - w^-_\Psi(v)) \leq t_i^+ \sum_{v\in\CV_i}(w^+_\Psi(v) + w^-_\Psi(v)). \] We claim that \[ W^+(i) = \sum_{v\in\CV}w^+_\Psi(v), \] and similarly $W^-(i) = \sum_{v \in \CV}w^-_{\Psi}(v)$. These equalities imply that $W(\cdot)$ satisfies the feasibility constraints, and it remains to prove them. Now recall $w^+_\Psi(v) = \sum_{j=1}^m \1[v \in V^+_j] w_j$; therefore, \[ \sum_{v\in\CV} w^+_\Psi(v) = \sum_{j=1}^m \sum_{v \in V^+_j} \1[\bias_\Psi(v)\in\I_i]  w_j, \] and $j$-th term in this sum is precisely $c^+_i$ where $\vecc = (\vecc^+,\vecc^-) = \ptn(C_j)$. The proof for $W^-(i)$ is similar.
    
    Finally, by \cref{claim:obl-exp} and our assumption $\sum_{j=1}^m w_j = 1/\val_\Psi$, we have \[ \obl(\Psi) = \frac{\sum_{j=1}^m w_j \prob^\vecp(\ptn(C_j))}{\sum_{j=1}^m w_j} = \frac{\sum_{\vecc \in \Clauses^L_k} W(\vecc) \cdot \prob^\vecp(\vecc)}{1/\val_\Psi}, \] which rearranges to $\obl(\Psi)/\val_\Psi = \sum_{\vecc \in \Clauses^L_k} W(\vecc) \cdot \prob^\vecp(\vecc)$, as desired.

    ($\alp \geq \aalg$) This argument is essentially converse to the former argument, but there are two technical issues: (i) the linear program does not encode strict inequality constraints, while an oblivious algorithm needs to (in the sense that e.g., if $t_0 = 0$, then the algorithm rounds vertices with bias $0$ and bias $+\epsilon$ differently), and (ii) since an $\mkand$ constraint cannot use a variable twice, we might need many variables with the same bias in the instance we create.

    In our argument, we define a property of certain feasible solutions called ``niceness'', and show that (1) for every feasible LP solution $\{W(\vecc)\}_{\vecc \in \Clauses_k^L}$ of objective $v$, for all $\epsilon > 0$, there is a \emph{nice} feasible solution $\{W'(\vecc)\}_{\vecc \in \Clauses_k^L}$ of objective $\leq v + \epsilon$, and (2) for every nice feasible solution $\{W(\vecc)\}_{\vecc \in \Clauses_k^L}$ of objective $v$, there is an instance $\Psi$ of $\mkand$ where $\obl(\Psi) / \val_\Psi \leq v+\epsilon$. Together, these imply that for every feasible LP solution $\{W(\vecc)\}_{\vecc \in \Clauses_k^L}$ with objective $v$, then for all $\epsilon > 0$, there is an instance $\Psi$ of $\mkand$ where $\obl(\Psi) / \val_\Psi \leq v + \epsilon$, and this suffices.
    
    Towards (1), let $\{W(\vecc)\}_{\vecc \in \Clauses_k^L}$ be any feasible solution with objective value $v$. Our notion of ``niceness'' is: For all $i \neq 0 \in \{-\ell,\ldots,+\ell\}$, the hypothesized ``bias'' inequalities are strict, i.e., \[ t^-_i (W^+(i) + W^-(i)) <  W^+(i) - W^-(i) < t^+_i (W^+(i) + W^-(i)). \] (We exclude $i=0$ because we could have $t^-_0=t^+_0=0$, but for other $i$ our definition of bias partitions implies $t^-_i < t^+_i$.) To construct nice $\{W'(\vecc)\}_{\vecc \in \Clauses_k^L}$ from $\{W(\vecc)\}_{\vecc \in \Clauses_k^L}$, for each $i \in \{-\ell,\ldots,+\ell\} \setminus\{0\}$, if $t^-_i (W^+(i) + W^-(i)) =  W^+(i) - W^-(i)$ then we set $W'(\vecc) \gets W(\vecc) + \epsilon$ where $\vecc$ has $c^+_i=k$ and zeros elsewhere, and similarly if $W^+(i) - W^-(i) = t^+_i (W^+(i) + W^-(i))$ we set $W'(\vecc) \gets W(\vecc)+\epsilon$ where $\vecc$ has $c^-_i=k$ and zeros elsewhere; and we set $W'(\vecc) \gets W(\vecc)$ for all $\vecc$'s not already defined. Finally, we renormalize for the equality constraint, e.g., we set \[ W'(\vecc) \gets \frac{W'(\vecc)}{\sum_{\vecc' \in \PClauses_k^L} W'(\vecc)}. \] Observe that $\{W'(\vecc)\}$ is by definition nice (since renormalizing preserves strictness in the inequalities), feasible (for sufficiently small $\epsilon$), and further (reparametrizing $\epsilon$) we can preserve the objective up to arbitrarily small error.

    Now for (2), for any nice feasible solution $\{W(\vecc)\}_{\vecc \in \Clauses_k^L}$, we construct an instance $\Psi$ with $n = Lk$ variables such that $\obl(\Psi)/\val_\Psi \leq v$. These variables are labeled with tuples in $\CI:=\{-\ell,\ldots,+\ell\} \times [k]$, i.e., there are $k$ variables corresponding to each bias class. For every pattern $\vecc = (c^+_{-\ell},\ldots,c^+_{+\ell},c^-_{-\ell},\ldots,c^-_{+\ell}) \in \Clauses_k^L$, we let $\CJ_\vecc$ denote the set of $2L$-tuples of sets $(J^+_{-\ell},\ldots,J^+_{+\ell},J^-_{-\ell},\ldots,J^-_{+\ell})$ such that for each $i \in \{-\ell,\ldots,+\ell\}$ we have $J^+_i \subseteq \{i\}\times[k]$ and $J^-_i\subseteq \{i\}\times[k]$, $|J^+_i|=c^+_i$ and $|J^-_i|=c^-_i$, and $J^+_i\cap J^-_i = \emptyset$. Now for each such $2L$-tuple $J \in \CJ_\vecc$ we create a clause in $\Psi$, labeled $C_{\vecc,J}$, with $V_{\vecc,J}^+ = J^+_{-\ell} \cup \cdots J^+_{+\ell}$ and $V_{\vecc,J}^- = J^-_{-\ell} \cup \cdots J^-_{+\ell}$, and weight \[ w_{\vecc,J} = \frac{W(\vecc)}{|\CJ_\vecc|}. \]

    Now we observe that in this instance, the value of the all-$(+1)$'s assignment is
    \begin{align*}
        \val_\Psi(\vecpos) &= \frac{\sum_{\vecc \in \Clauses_k^L} \sum_{J \in \CJ_\vecc} \1[V^-_{\vecc,J}=\emptyset] w_{\vecc,J}}{\sum_{\vecc \in \Clauses_k^L} \sum_{J \in \CJ_\vecc} w_{\vecc,J}} \tag{def. of $\val$} \\
        &= \frac{\sum_{\vecc \in \Clauses_k^L} \sum_{J \in \CJ_\vecc} \1[V^-_{\vecc,J}=\emptyset] W(\vecc) / |\CJ_\vecc|}{\sum_{\vecc \in \Clauses_k^L} \sum_{J \in \CJ_\vecc} W(\vecc) / |\CJ_\vecc|} \tag{def. of $w_{\vecc,J}$} \\
        &= \frac{\sum_{\vecc \in \Clauses_k^L} \1[\vecc \in \PClauses_k^L] \sum_{J \in \CJ_\vecc} W(\vecc) / |\CJ_\vecc|}{\sum_{\vecc \in \Clauses_k^L} \sum_{J \in \CJ_\vecc} W(\vecc) / |\CJ_\vecc|} \tag{def. of $V^-_{\vecc,J}$ and $J$} \\
        &= \frac{\sum_{\vecc \in \Clauses_k^L} \1[\vecc \in \PClauses_k^L] W(\vecc)}{\sum_{\vecc \in \Clauses_k^L} W(\vecc)} \tag{summing constant} \\
        &= \frac{1}{\sum_{\vecc \in \Clauses_k^L} W(\vecc)} \tag{$W(\cdot)$ is feasible}.
    \end{align*}
    Therefore $\val_\Psi \geq \frac{1}{\sum_{\vecc \in \Clauses_k^L} W(\vecc)}$.

    Next, we claim that for every variable $(i,a) \in \CI$, $w^+_\Psi(i,a) = W^+(i)/k$ and $w^-_\Psi(i,a)=W^-(i)/k$. Assuming this, we will have that for all $(i,a) \in \CI$, $\bias_\Psi(i,a) = (W^+(i) - W^-(i))/(W^+(i) + W^-(i))$, and therefore by niceness $\bias_\Psi(i,a) \in \I_i$, and therefore that for every clause $C_{\vecc,j}$ in $\Psi$, $\ptn(C_{\vecc,j}) = \vecc$. Given this, by \cref{claim:obl-exp} and an analogous argument to the case of $\val_\Psi$, we have \[ \obl(\Psi) = \frac{\sum_{\vecc \in \Clauses_k^L} \sum_{J \in \CJ_\vecc} \prob^\vecp(\ptn(C_{\vecc,J})) w_{\vecc,J}}{\sum_{\vecc \in \Clauses_k^L} \sum_{J \in \CJ_\vecc} w_{\vecc,J}} = \frac{\sum_{\vecc \in \Clauses_k^L} \prob^\vecp(\vecc) W(\vecc)}{\sum_{\vecc \in \Clauses_k^L} W(\vecc)} = \frac{v}{\sum_{\vecc \in \Clauses_k^L} W(\vecc)}. \] Therefore $\obl(\Psi)/\val_\Psi \leq v$, as desired.
    
    Finally, it remains to prove the claim. This uses a counting argument. We prove $w^+_\Psi(i,a) = W^+(i)/k$; the proof for $W^-$ is analogous. We have:
    \begin{align*}
        w^+_\Psi(i,a) &= \sum_{\vecc \in \Clauses_k^L} \sum_{J \in \CJ_\vecc} \1[(i,a) \in V^+_{\vecc,J}] w_{\vecc,J} \tag{def. of $w^+$} \\
        &= \sum_{\vecc \in \Clauses_k^L} \sum_{J \in \CJ_\vecc} \1[(i,a) \in V^+_{\vecc,J}] W(\vecc)/|\CJ_\vecc| \tag{def. of $w$} \\
        &= \sum_{\vecc \in \Clauses_k^L} \sum_{J \in \CJ_\vecc} \1[a \in J^+_i] W(\vecc)/|\CJ_\vecc| \tag{def. of $C_{\vecc,J}$}.
    \end{align*}
    Therefore, since $W^+(i) = \sum_{\vecc \in \Clauses_k^L} c^+_i W(\vecc)$, it suffices to show that for all $a\in[k]$ and $\vecc \in \Clauses_k^L$, we have \[ \frac{|\{J \in \CJ_\vecc : a \in J^+_i\}|}{|\CJ_\vecc|} = \frac{c^+_i}{k}. \] Indeed, we have $|\CJ_\vecc| = \prod_{i'=-\ell}^{+\ell} \binom{k}{c^+_{i'}} \binom{k-c^+_{i'}}{c^-_{i'}}$ (since each $J \in \CJ_\vecc$, for each $i'$, independently chooses a disjoint pair of subsets from $[k]$, one of size $c^+_i$ and one of size $c^-_i$), and similarly \[ |\{J \in \CJ_\vecc : a \in J^+_i\}| = \binom{k-1}{c^+_i-1} \binom{k-c^+_i}{c^-_i} \prod_{i'\neq i \in \{-\ell,\ldots,+\ell\}} \binom{k}{c^+_{i'}} \binom{k-c^+_{i'}}{c^-_{i'}} \] (since our choices are the same for $i' \neq i$, but for $i$ we are forced to include $a$), and we can finally use the binomial identity $\binom{k}{n} = \frac{k}n \binom{k-1}{n-1}$.
\end{proof}

\begin{lemma}[Dual characterization]\label{lem:dual-opt}
For every bias partition $\vect = (t_0,\ldots,t_\ell)$ and rounding vector $\vecp = (p_1,\ldots,p_\ell)$, the approximation ratio $\alpha(\obl)$ achieved by $\obl$ equals the value of the following linear program:

\begin{empheq}[left=\empheqlbrace]{alignat*=4}
    & \mathrm{maximize} \quad && z && \\
    & \mathrm{s.t.} && \1[\vecc \in \PClauses^L_k] \cdot z && \\
    & && + \sum_{i=-\ell}^{+\ell} \left(((1-t^-_i)c_i^+-(t_i^-+1)c_i^-)y_i^- +((t_i^+-1)c_i^++(1+t_i^+)c_i^-)y_i^+\right) \leq \prob^\vecp(\vecc) \quad && &&\forall \vecc \in \Clauses^L_k \\
    & && y^-_i \geq 0 && && \forall i \in \{-\ell,\ldots,+\ell\} \\
    & && y^+_i \geq 0 && && \forall i \in \{-\ell,\ldots,+\ell\}
\end{empheq}
\end{lemma}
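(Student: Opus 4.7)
The plan is to derive the LP displayed in the lemma as the standard LP-dual of the primal characterization in \cref{lem:primal-opt}, and then appeal to strong LP duality. Because \cref{lem:primal-opt} already establishes that the primal LP is feasible and that its optimum is the finite quantity $\alpha(\obl) \in [0,1]$, strong duality will apply and immediately yield the claim.

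To compute the dual, I would first rewrite the two ``bias'' inequality constraints in standard ``$\leq 0$'' form. The lower-bias constraint becomes
\[
\sum_{\vecc \in \Clauses^L_k} \bigl[(t_i^- - 1)c_i^+ + (t_i^- + 1)c_i^-\bigr]\, W(\vecc) \;\leq\; 0,
\]
and the upper-bias constraint becomes
\[
\sum_{\vecc \in \Clauses^L_k} \bigl[(1 - t_i^+)c_i^+ - (1 + t_i^+)c_i^-\bigr]\, W(\vecc) \;\leq\; 0.
\]
I would then assign nonnegative dual variables $y_i^-$ and $y_i^+$ to these inequalities, respectively (nonnegative because ``$\leq$'' constraints in a minimization primal carry nonnegative multipliers), and a free dual variable $z$ to the equality constraint, written as $1 - \sum_{\vecc \in \PClauses^L_k} W(\vecc) = 0$ so that $z$ picks up a ``$+$'' sign in the dual objective.

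Following the standard Lagrangian recipe, the coefficient of each $W(\vecc)$ in the Lagrangian must be nonnegative for the inner minimum over $W(\vecc) \geq 0$ to be finite. Moving the $y$-contributions to the left and flipping their signs produces
\[
\1[\vecc \in \PClauses^L_k]\, z + \sum_i \bigl((1 - t_i^-)c_i^+ - (t_i^- + 1)c_i^-\bigr) y_i^- + \sum_i \bigl((t_i^+ - 1)c_i^+ + (1 + t_i^+)c_i^-\bigr) y_i^+ \;\leq\; \prob^\vecp(\vecc),
\]
exactly as written in the lemma, while the residual constant from the Lagrangian becomes the dual objective $z$. The only step demanding care is sign-bookkeeping across the four ``$\pm$'' conventions in play; this is mechanical but the natural place for arithmetic slips.

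Finally, strong LP duality for finite-dimensional LPs says that since the primal is feasible and its optimum is finite (by \cref{lem:primal-opt}), and the dual is trivially feasible (taking $z = 0$ and $y = 0$ satisfies every dual constraint because $\prob^\vecp(\vecc) \geq 0$), both optima are attained and equal; hence the dual value also equals $\alpha(\obl)$, as required. No analytical or combinatorial obstacle is anticipated beyond the sign check described above.
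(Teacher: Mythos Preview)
Your proposal is correct and follows essentially the same approach as the paper's own proof: rewrite the two bias constraints of the primal in ``$\leq 0$'' form, assign dual variables $z$ (free) to the equality and $y_i^\pm \geq 0$ to the inequalities, and read off the dual, with the sign bookkeeping being the only delicate point. If anything, your justification of strong duality (checking both primal finiteness via \cref{lem:primal-opt} and dual feasibility at $z=0$, $y=0$) is slightly more explicit than the paper's.
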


\begin{proof}
    To place the primal LP (from \cref{lem:primal-opt}) in a more standard form, we rewrite the primal inequality $t^-_i (W^+(i) + W^-(i)) \leq W^+(i) - W^-(i)$ as $(t^-_i-1)W^+(i) + (t^-_i+1)W^-(i) \leq 0$; expanding the definitions of $W^+(i)$ and $W^-(i)$, this is equivalent to $(t^-_i-1)\sum_{\vecc \in \Clauses^L_k} c^+_i W(\vecc) + (t^-_i+1)\sum_{\vecc \in \Clauses^L_k} c^-_i W(\vecc) \leq 0$. Similarly, the inequality $W^+(i) - W^-(i) \leq t^+_i (W^+(i) + W^-(i))$ becomes $(1-t_i^+)\sum_{\vecc \in \Clauses^L_k} c^+_i W(\vecc)-(1+t_i^+)\sum_{\vecc \in \Clauses^L_k} c^-_i W(\vecc) \leq 0$. Therefore, the primal LP is equivalent to the following standard-form LP:
    \begin{empheq}[left=\empheqlbrace]{alignat*=3}
        & \mathrm{minimize} \quad && \sum_{\vecc \in \Clauses^L_k} \prob^\vecp(\vecc) \cdot W(\vecc)  && \\
        & \mathrm{s.t.} && W(\vecc) \geq 0 && \forall \vecc \in \Clauses^L_k \\
        & && \sum_{\vecc \in \PClauses^L_k} W(\vecc) = 1 && \\
        & && (1-t_i^+)\sum_{\vecc \in \Clauses^L_k} c^+_i W(\vecc)-(1+t_i^+)\sum_{\vecc \in \Clauses^L_k} c^-_i W(\vecc) \leq 0 \quad && \forall i \in \{-\ell,\ldots,+\ell\} \\
        & && (t^-_i-1)\sum_{\vecc \in \Clauses^L_k} c^+_i W(\vecc) + (t^-_i+1)\sum_{\vecc \in \Clauses^L_k} c^-_i W(\vecc) \leq 0 \quad && \forall i \in \{-\ell,\ldots,+\ell\}
    \end{empheq}

    By LP duality, the above LP has the same value as its dual LP, which is the LP in the hypothesis.\footnote{See e.g. \cite[p. 85]{MG07}). One has to be careful with the signs, since our primal LP is a \emph{minimization} LP. Instead, we can consider the LP which maximizes $-\sum_{\vecc \in \Clauses_k^L} \prob^\vecp(\vecc) \cdot W(\vecc)$ (whose output is the negation of our desired output). Applying duality to this LP gives one which minimizes $z'$ such that $\1[\vecc \in \PClauses^L_k] \cdot z' + \sum_{i=-\ell}^\ell \left(((1-t^-_i)c_i^+-(t_i^-+1)c_i^-)y_i^- +((t_i^+-1)c_i^++(1+t_i^+)c_i^-)y_i^+\right) \geq -\prob^\vecp(\vecc)$. Transforming to a maximization problem equivalent to our original LP (since we had a negation!), we maximize $-z'$ such that $\1[\vecc \in \PClauses^L_k] \cdot z' + \sum_{i=-\ell}^\ell \left(((1-t^-_i)c_i^+-(t_i^-+1)c_i^-)y_i^- +((t_i^+-1)c_i^++(1+t_i^+)c_i^-)y_i^+\right) \geq -\prob^\vecp(\vecc)$. Finally, we negate both sides of this inequality, and use the bijective transformation $z=-z'$.}
\end{proof}

\section{Proving \cref{thm:perturb} by analyzing ``dual slack''}\label{sec:thm:perturb}

In this section, we prove \cref{thm:perturb} by constructing dual solutions which witness lower bounds on the approximation ratio of oblivious algorithms.

\subsection{A sufficient condition}

Our first step is the following lemma, which gives a clean sufficient condition for a lower bound on the approximation ratio by constructing a certain sparse dual solution and applying the dual program (\cref{lem:dual-opt})

\begin{lemma}[Sufficient conditions for good approximations]\label{lem:suff-cond}
For every $k \geq 2 \in \BN$, $0\leq\gamma,\delta \leq 1$, let $\vect=(\delta,1)$ and $\vecp=(\frac12(1+\gamma))$. The algorithm $\obl$ has approximation ratio $\alpha(\obl) \geq 2^{-(k-1)}\beta$ if the following statement holds: There exist $X,Y \geq 0$ such that:
    \begin{empheq}[left=\empheqlbrace]{align*}
    (1+\delta)\left(1-\frac{i+j}k\right)Y + (1-\delta) \frac{j}k X &\leq \beta^{-1} \left(1-\gamma\right)^i\left(1+\gamma\right)^j \quad \forall i,j\in\BN, i+j\leq k
    \\
    2 - (1-\delta) \left(1 - \frac{i+j}k\right) Y - (1+\delta)\frac{i}k X&\leq \beta^{-1} \left(1-\gamma\right)^i\left(1+\gamma\right)^j \quad \forall i,j\in\BN, i+j\leq k
\end{empheq}
\end{lemma}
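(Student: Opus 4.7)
My plan is to invoke the dual characterization in \cref{lem:dual-opt} and exhibit a sparse feasible dual solution with objective $z = 2^{-(k-1)}\beta$. Specifically, I set $z = 2^{-(k-1)}\beta$, $y^+_{-1} = \frac{X\beta}{k\cdot 2^k}$, $y^+_0 = \frac{Y\beta}{k\cdot 2^k}$, and all other dual variables to zero. Nonnegativity of the $y$-variables follows from $X,Y\geq 0$, and LP duality then yields $\alpha(\obl)\geq 2^{-(k-1)}\beta$ once feasibility of every dual constraint is verified.

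To verify feasibility, I substitute the bias thresholds $t^-_{-1}=-1, t^+_{-1}=-\delta, t^\pm_0=\mp\delta\text{ (signs as defined)}, t^-_{+1}=\delta, t^+_{+1}=1$ into the constraint of \cref{lem:dual-opt}, multiply through by $2^k/\beta$, and record that the RHS becomes $\beta^{-1}(1-\gamma)^i(1+\gamma)^j$ where $i:=c^+_{-1}+c^-_{+1}$ and $j:=c^+_{+1}+c^-_{-1}$. With my choice of dual variables, the scaled LHS reduces to
\[ \mathbf{1}[\vecc\in\PClauses_k^L]\cdot 2 \;+\; \frac{1}{k}\Big[-(1+\delta)X\,c^+_{-1} + (1-\delta)X\,c^-_{-1} - (1-\delta)Y\,c^+_0 + (1+\delta)Y\,c^-_0\Big]. \]
The proof then splits into two cases. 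If $\vecc\in\PClauses_k^L$, then all $c^-_\cdot=0$, so $c^+_{-1}=i$, $c^+_{+1}=j$, $c^+_0=k-i-j$, and the scaled LHS becomes $2 - (1+\delta)\tfrac{i}{k}X - (1-\delta)(1-\tfrac{i+j}{k})Y$, which is exactly the LHS of the lemma's second hypothesized inequality. If $\vecc\notin\PClauses_k^L$, I use the key observation that the coefficients on $c^+_{-1},c^+_0$ are nonpositive while those on $c^-_{-1},c^-_0$ are nonnegative; hence, over all decompositions of $i,j,k-i-j$ into the pairs $(c^+_{-1},c^-_{+1}),(c^+_{+1},c^-_{-1}),(c^+_0,c^-_0)$, the scaled LHS is maximized by the extreme choice $c^+_{-1}=c^+_{+1}=c^+_0=0$, giving $(1-\delta)\tfrac{j}{k}X + (1+\delta)(1-\tfrac{i+j}{k})Y$, i.e.\ the LHS of the first hypothesized inequality. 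This extremal pattern is itself non-$\PClauses$ since $c^-_{-1}+c^-_0+c^-_{+1}=k\geq 1$, so it is a legitimate competitor. Since by hypothesis both inequalities hold for every admissible $(i,j)$, every dual constraint is satisfied and the solution is feasible.

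The main obstacle in the argument is, of course, \emph{finding} the right sparse dual solution; once the choice $y^+_{-1},y^+_0$ (with the other four $y$-variables zero) is identified, the rest is a direct computation together with the monotonicity observation that produces the worst-case decomposition in Case B. I expect no further subtleties: the algebra is linear in $X$ and $Y$, and the bookkeeping of which dual coefficients are positive versus negative is what forces the lemma to feature two inequalities (one from the $\PClauses$ patterns, one from their ``fully flipped'' counterparts) rather than a single symmetric condition.
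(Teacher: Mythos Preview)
Your proof is correct and follows essentially the same approach as the paper: you construct the identical sparse dual solution ($z=2^{-(k-1)}\beta$, $y^+_{-1}=X\beta/(k2^k)$, $y^+_0=Y\beta/(k2^k)$, rest zero), invoke \cref{lem:dual-opt}, and verify feasibility via the same case split on $\PClauses_k^L$-membership together with the same monotonicity reduction in the non-positive case. Your phrasing of the Case~B reduction --- fixing $i=c^+_{-1}+c^-_{+1}$ and $j=c^+_{+1}+c^-_{-1}$ and passing to the extremal pattern $(0,0,0,j,k-i-j,i)$ --- is in fact slightly cleaner than the replacement written in the paper, but the underlying idea is identical.
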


\begin{proof}
Consider applying the dual characterization of the approximation ratio (\cref{lem:dual-opt}) with the solution $z = 2\beta/2^k$, $y_{-1}^+ = X\beta/(k2^k)$, $y_0^+ = Y\beta/(k2^k)$, and $y_{+1}^+ = y_{-1}^-=y_0^-=y_{+1}^-=0$; it is sufficient to show that this solution is feasible. Note that $t^+_{-1} = -\delta$ and $t^+_0 = \delta$. Thus, the feasibility constraints in \cref{lem:dual-opt} become
\begin{equation}\label{eq:dual-feas}
    \frac{\beta}{2^k} \left(\1[\vecc \in \PClauses_k^L] \cdot 2 + ((-\delta - 1) c_{-1}^+ + (1-\delta) c_{-1}^-) \frac{X}k + ((\delta-1)c_0^+ + (1+\delta)c_0^-) \frac{Y}k \right) \leq \prob^\vecp(\vecc) \quad \forall \vecc \in \Clauses_k^L.
\end{equation}

By \cref{eq:p}, and since $c_{-1}^++c_0^++c_{+1}^++c_{-1}^-+c_0^-+c_{+1}^-=k$, we have \[ \prob^\vecp(\vecc) = \left(\frac12-\frac\gamma2\right)^{c_{-1}^++c_{+1}^-} \left(\frac12\right)^{c_0^++c_0^-} \left(\frac12+\frac\gamma2\right)^{c_{+1}^++c_{-1}^-} = 2^{-k} (1-\gamma)^{c_{-1}^++c_{+1}^-} (1+\gamma)^{c_{+1}^++c_{-1}^-}. \] Thus, dividing through by $\beta/2^k$, \cref{eq:dual-feas} becomes
\begin{equation}\label{eq:dual-feas-spec}
    \1[\vecc \in \PClauses^L_k] \cdot 2 + ((1-\delta)c_{-1}^- - (1+\delta)c_{-1}^+)\frac{X}k + ((1+\delta)c_0^- - (1-\delta)c_0^+) \frac{Y}k \leq \beta^{-1} (1-\gamma)^{c_{-1}^++c_{+1}^-} (1+\gamma)^{c_{+1}^++c_{-1}^-} \quad \forall \vecc \in \Clauses_k^L.
\end{equation}

Finally, we claim that \cref{eq:dual-feas-spec} is implied by the hypothesis. Indeed, we consider two cases. First, if $\vecc \in \PClauses^L_k$, then $c_{-1}^-=c_0^-=c_{+1}^-=0$ and $c_0^+=k-c_{-1}^+-c_{+1}^+$, so \cref{eq:dual-feas-spec} becomes \[ 2 - (1+\delta)\frac{c_{-1}^+}kX - (1-\delta)\frac{k-c_{-1}^+-c_{+1}^+}kY \leq \beta^{-1} (1-\gamma)^{c_{-1}^+} (1+\gamma)^{c_{+1}^+}. \] This is precisely the second hypothesized inequality, for $c_{-1}^+=i,c_{+1}^+=j$. On the other hand, if $\vecc \not\in \PClauses^L_k$, then we observe that replacing $(c_{-1}^+,c_0^+,c_{+1}^+,c_{-1}^-,c_0^-,c_{+1}^-) \mapsto (0,0,0,c_{-1}^-+c_{-1}^+,c_0^-+c_0^+,c_{+1}^-+c_{+1}^+)$ fixes the RHS of \cref{eq:dual-feas-spec}, while only increasing the LHS; thus, it suffices to prove \cref{eq:dual-feas-spec} only in this extreme case. Hence, we can assume $c_0^- = k-c_{-1}^--c_{+1}^-$, so \cref{eq:dual-feas-spec} becomes \[ (1-\delta) \frac{c_{-1}^-}kX + (1+\delta) \frac{k-c_{-1}^--c_{+1}^-}k Y \leq \beta^{-1} (1-\gamma)^{c_{+1}^-} (1+\gamma)^{c_{-1}^-}, \] which is precisely the first assumed condition for $c_{-1}^-=j,c_{+1}^-=i$.
\end{proof}

\begin{remark}
We chose the specific family of dual solutions used in the proof of \cref{lem:suff-cond} by inspecting an LP solver's output for $k =2$ and $k=3$. Our investigation also suggests that this solution is unique in a certain sense: In the simplest case of $k=2$ and $\delta=\gamma=0$, it appears that \emph{every} optimal feasible solution requires $y_{-1}^+=2/9$, and further, the only solution with only two nonzero $y$ entries sets $y_0^+=1/9$.
\end{remark}

\subsection{Proving \cref{thm:perturb} via analysis of ``slack''}\label{sec:slack-analysis}

Our goal in this subsection is to prove \cref{thm:perturb} by achieving the sufficient conditions outlined in \cref{lem:suff-cond}, modulo some lemmas. Our first step is to show that the conditions in \cref{lem:suff-cond} are fulfilled when $\delta = 0$, $\gamma = \gamma_k$ (see \cref{eq:gamma-k}), $X=2$, $Y=1$, and $\beta = (1-\gamma_k)^{\lfloor k/2\rfloor} (1+\gamma_k)^{\lfloor k/2\rfloor}$. This will imply that the superoblivious algorithm hypothesized in \cref{thm:superobl} indeed yields an $\alpha^*_k$-approximation (via \cref{lem:suff-cond}) and is also the base of the proof of \cref{thm:perturb}. In particular, we show that all but a handful of the conditions in \cref{lem:suff-cond} are \emph{loose} when we plug in this solution.

Note that when $X=2$, $Y=1$, $\delta = 0$, then the LHS of the first inequality in \cref{lem:suff-cond} is $(1-(i+j)/k) + 2j/k = 1 + (j-i)/k$ and the LHS of the second inequality is also $2-(1-(i+j)/k)-2i/k = 1 + (j-i)/k$, and therefore the two inequalities coincide. That is, the hypothesis of \cref{lem:suff-cond} becomes \[ 1+\frac{j-i}k \leq (1-\gamma_k)^{i - \lfloor k/2\rfloor} (1+\gamma_k)^{j+\lfloor k/2\rfloor} \quad \forall i,j\in\BN,i+j \leq k. \] For example, consider the case where $k$ is even, $i = k/2-1$ and $j = k/2+1$; then the LHS is $1+\frac2k$ and the RHS is $(1-\gamma_k)^{-1} (1+\gamma_k) = (1+1/(k+1))/(1-1/(k+1)) = ((k+2)/(k+1))/(k/(k+1)) = (k+2)/k = 1+2/k$, so we have equality. Indeed, we have the following lemma:

\begin{lemma}[``Two-sided Bernoulli inequality'']\label{lem:bern-fancy}
For every $k \geq 2 \in \BN$ and $i,j \in \BN$, $i+j \leq k$, the following holds:

\begin{itemize}
    \item If $k$ is even, then \[ 
    1+\frac{j-i}k \leq \left(1-\frac1{k+1}\right)^{i-k/2}\left(1+\frac1{k+1}\right)^{j-k/2}. \] Further, the inequality is an equality iff $(i,j) \in \{(k/2,k/2),(k/2-1,k/2),(k/2-1,k/2+1)\}$.
    \item If $k$ is odd, then \[ 
    1+\frac{j-i}k \leq \left(1-\frac1k\right)^{i-(k-1)/2}\left(1+\frac1k\right)^{j-(k-1)/2}. \] Further, the inequality is an equality iff $(i,j) \in \{((k-1)/2,(k-1)/2),((k-1)/2,(k+1)/2),((k+1)/2,(k-1)/2)\}$.
\end{itemize}
\end{lemma}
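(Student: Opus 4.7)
The plan is to reduce to simpler one-variable inequalities. First, I will substitute $s = i - \lfloor k/2\rfloor$ and $t = j - \lfloor k/2\rfloor$; the inequality to prove becomes
\[ 1 + \tfrac{t-s}{k} \;\leq\; (1-\gamma)^s(1+\gamma)^t, \quad \gamma = \gamma_k, \]
subject to $s,t \geq -\lfloor k/2\rfloor$ and $s+t \leq K$ with $K = k \bmod 2$. The crucial observation is that on any level line $t-s = c$, the RHS equals $(1-\gamma^2)^s(1+\gamma)^c$, which is strictly decreasing in $s$ since $0 < 1-\gamma^2 < 1$. Hence it suffices to verify the inequality at the largest feasible $s$ on each such line, namely $s^*(c) = \lfloor(K-c)/2\rfloor$.

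Next, I will split by the parity of $c$ (writing $c = 2d$ or $c = 2d+1$) and substitute $s = s^*(c)$. This reduces the lemma to four one-variable inequalities in $d \in \BZ$. For even $k$: (A) $(1+2/k)^d \geq 1 + 2d/k$, with equality iff $d \in \{0,1\}$; and (B) $(k+1)(1+2/k)^d \geq k+2d+1$, with equality iff $d=0$. For odd $k$: (C) $((k+1)/(k-1))^d \geq 1+2d/k$, with equality iff $d=0$; and (D) $(k+1)((k+1)/(k-1))^d \geq k+2d+1$, with equality iff $d \in \{-1,0\}$. Translating the equality $d$-values back through the substitutions recovers exactly the three pairs $(i,j)$ claimed in each parity.

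For each of (A)--(D), my uniform strategy is to define $g(d)$ as the ratio of the two sides (arranged so the claim becomes $g(d) \leq 1$), compute $g(d+1)-g(d)$ in closed form, and read off its sign. In each case the difference factors as a positive expression times a single linear term in $d$; for (D) a short calculation gives
\[ g(d+1)-g(d) \;=\; -\frac{4(d+1)}{(k+1)^2}\left(\frac{k-1}{k+1}\right)^{\!d}, \]
which has a unique zero at $d=-1$. Combined with the direct checks $g(-1)=g(0)=1$, this pins down the maximum of $g$ on $\BZ$ at exactly $\{-1,0\}$. The analogous computations for (A), (B), (C) produce linear roots at $d=0$, $d=-1/2$, $d=-1/2$ respectively, which together with the direct checks $g(0)=g(1)=1$ for (A) and $g(0)=1$ for (B), (C) identify the equality $d$-values.

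The main technical hurdle is not any single calculation but the feasibility bookkeeping: I must verify that $s^*(c) = \lfloor(K-c)/2\rfloor$ is actually achievable on each level line (i.e., $s^* \geq -\lfloor k/2\rfloor$ and $t^* = s^* + c \geq -\lfloor k/2\rfloor$) for every $c$ arising from the original $(i,j)$ constraints. This reduces to checking $|c| \leq k+K$, which follows from $c = j-i$ with $i,j \geq 0$ and $i+j \leq k$. I also note that (A) is just Bernoulli's inequality for integer exponents, but the mismatched coefficients in (B)--(D) mean those do not follow directly from Bernoulli; the forward-difference method handles all four cases uniformly.
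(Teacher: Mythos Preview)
Your proof is correct. Both your argument and the paper's begin with the same reduction: on each level line $j-i=c$ the LHS is constant while the RHS strictly decreases as $i,j$ simultaneously increase, so it suffices to check the extremal point $i+j\in\{k-1,k\}$ (your $s^*=\lfloor(K-c)/2\rfloor$). After that the two proofs diverge. The paper applies the real-exponent Bernoulli inequality $1+rx\leq(1+x)^r$ directly---for even $k$ with $r=(j-i)/2$ and the identity $1+\tfrac2k=(1+\tfrac1{k+1})(1-\tfrac1{k+1})^{-1}$, and for odd $k$ via a further split into $j\geq i$ and $i>j$ together with $1+\tfrac1k=(1-\tfrac1{k+1})^{-1}$; the even subcase $(i,j)=(k/2-1,k/2)$ must be handled separately. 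Your route instead parametrizes by the parity of $c$, lands on the four integer-indexed inequalities (A)--(D), and dispatches each by the discrete forward-difference trick. This is more mechanical and slightly longer, but it is uniform across all four cases and avoids the ad~hoc case splits the paper needs (the paper even remarks that for odd $k$ it is ``not aware of a comparably slick approach''). Your forward-difference computation for (D) is exactly right, and the analogous computations for (A)--(C) go through as you describe; the equality-case bookkeeping also checks out.
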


We prove this lemma in \cref{sec:bern} below. Note that regardless of the parity of $k$, there are exactly three tight cases in the inequality, corresponding to six tight cases in the inequalities in \cref{lem:suff-cond} (because there each $(i,j)$ pair generates two inequalities, which coincide if $X=2,Y=1,\delta=0$). Our next lemmas state that there are feasible solutions $(X,Y)$ for these six inequalities when $\delta$ is positive if we can also slightly perturb the right-hand sides:

\begin{lemma}[Strict satisfaction of core inequalities, even case]\label{lem:core-strict-even}
Fix any even $k \geq 2$. There exists $\epsilon_0$ such that for all $0 < \epsilon < \epsilon_0$, the following holds. Let $\beta = (1-1/(k+1))^{k/2}(1+1/(k+1))^{k/2}$ and $\eta = 1 - \beta^{-1} (1-1/(k+1)-\epsilon)^{k/2}(1+1/(k+1)+\epsilon)^{k/2}$. Then there exists $X,Y \geq 0$ satisfying the \emph{strict} inequalities:

\begin{empheq}[left=\empheqlbrace]{align*}
    &\frac12(1-\delta) X < 1 - \eta \\
    &2 - \frac12(1+\delta) X < 1 - \eta \\
    &\left(\frac12+\frac1k\right)(1-\delta)X < (1 - \eta) \left(1+\frac1{k+1}+\epsilon\right)\left(1-\frac1{k+1}-\epsilon\right)^{-1} \\
    &2 - \left(\frac12-\frac1k\right)(1+\delta) X < (1-\eta) \left(1+\frac1{k+1}+\epsilon\right)\left(1-\frac1{k+1}-\epsilon\right)^{-1} \\
    &\frac1k(1+\delta)Y + \frac12(1-\delta)X < (1 - \eta) \left(1-\frac1{k+1}-\epsilon\right)^{-1} \\
    &2 - \frac1k(1-\delta)Y - \left(\frac12-\frac1k\right)(1+\delta)X < (1 - \eta)\left(1-\frac1{k+1}-\epsilon\right)^{-1}.
\end{empheq}
where $\delta = \epsilon$ if $k=2$ and $\delta = 4\eta$ otherwise.
\end{lemma}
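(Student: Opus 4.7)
The plan is to set $X = 2$ and $Y = 1$ throughout, and verify that for $\epsilon > 0$ sufficiently small all six strict inequalities hold. The motivation is that at the base point $\epsilon = 0$ (whence $\delta = 0$ and $\eta = 0$), each of the six inequalities becomes an equality. Writing $\gamma = 1/(k+1)$, the three distinct right-hand sides evaluate to $1$, $(1+\gamma)/(1-\gamma) = 1 + 2/k$, and $1/(1-\gamma) = 1 + 1/k$; plugging in $X = 2, Y = 1$ makes the corresponding left-hand sides collapse to the same values. These six equalities at $\epsilon = 0$ are exactly the two copies (coming from the two inequality chains in \cref{lem:suff-cond}) of the three tight cases $(i, j) \in \{(k/2, k/2), (k/2 - 1, k/2), (k/2 - 1, k/2 + 1)\}$ of the two-sided Bernoulli inequality (\cref{lem:bern-fancy}).

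Since each inequality holds with equality at $\epsilon = 0$, it suffices to show that each slack, viewed as a smooth function of $\epsilon$, has strictly positive derivative at $\epsilon = 0$. Inequalities (1) and (2) simplify dramatically under $X = 2$: both slacks reduce to $\delta - \eta$, so positivity amounts to $\delta'(0) > \eta'(0)$. A direct computation gives $\eta'(0) = k\gamma/(1 - \gamma^2) = (k+1)/(k+2)$, so for $k = 2$ the choice $\delta = \epsilon$ yields $\delta'(0) = 1 > 3/4 = \eta'(0)$, and for $k \geq 4$ the choice $\delta = 4\eta$ yields $\delta - \eta = 3\eta > 0$ outright. This explains why the scaling factor $4$ appears in the definition of $\delta$ in the $k > 2$ case.

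For inequalities (3) through (6), each slack derivative at $\epsilon = 0$ is an explicit rational combination of $\eta'(0)$ and $\delta'(0)$. A direct calculation shows that the slack derivative of (3) equals $(k+2)/k \cdot ((k+1)/k + \delta'(0))$, which is a sum of positive terms; inequalities (4) and (5) similarly produce manifestly positive sums (using that the coefficient of $\delta'(0)$ in each case, namely $(k-2)/k$ and $(k-1)/k$, is nonnegative). The most delicate case is (6) with $\delta = 4\eta$, whose slack derivative simplifies to $(2(k+1)/(k^2(k+2))) \cdot (2k^2 - 5k + 1)$, and positivity follows because the quadratic $2k^2 - 5k + 1$ has both roots in $(0, 2.3)$ and hence is positive for every integer $k \geq 3$. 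The $k = 2$ case of (6) with $\delta = \epsilon$ is handled by an analogous direct check, yielding slack derivative $5/8 > 0$. Once each first-order slack positivity is established, continuity in $\epsilon$ delivers an $\epsilon_0 > 0$ such that all six strict inequalities hold on $(0, \epsilon_0)$.

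The main obstacle is purely bookkeeping: handling the two regimes ($k = 2$ versus $k \geq 4$) separately, and verifying six explicit slack derivatives, with the marginal case (6) at $k = 4$ requiring the most careful arithmetic. The conceptual content is entirely in the symmetric choice $X = 2, Y = 1$ (inherited from the superoblivious optimum) together with the positivity of $\delta - \eta$ provided by the specific form of $\delta$; the remaining first-order expansions are elementary calculus.
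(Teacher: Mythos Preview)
Your approach is correct and genuinely different from the paper's. Rather than fixing $X=2,Y=1$, the paper keeps $X$ and $Y$ free and proceeds by elimination: it shows some inequalities are redundant (e.g., that (iii) follows from (i)), rewrites (v)--(vi) as upper/lower bounds on $Y$, collapses their compatibility into a further upper bound on $X$, and finally checks by cross-multiplication that the surviving upper bounds on $X$ exceed the surviving lower bound for the prescribed $\delta$. Your route is more direct: you plug in the superoblivious optimum $(X,Y)=(2,1)$, observe that all six slacks vanish at $\epsilon=0$, and compute the six first derivatives. The payoff of the paper's approach is that it yields an explicit quantitative $\epsilon_0$ (e.g., $\epsilon_0 = 2/9$ when $k=2$), whereas your first-order argument only gives existence; conversely, your argument avoids the bookkeeping of clearing denominators and comparing rational bounds, and makes transparent why the choice $\delta=4\eta$ works (the slack of (1)--(2) is exactly $\delta-\eta=3\eta$). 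Both methods hinge on the same structural fact --- that the system becomes six tight equalities at $\epsilon=0$ with $(X,Y)=(2,1)$ --- but exploit it differently.

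One small point to tighten: in your final sentence you invoke ``continuity'' to produce $\epsilon_0$, but what you actually need is that each slack function $s$ is differentiable at $0$ with $s(0)=0$ and $s'(0)>0$; this is what forces $s(\epsilon)>0$ on some right-neighborhood, and you've verified exactly this. Also, you should note explicitly (it's immediate but worth one phrase) that $X=2,Y=1$ are nonnegative, so the hypothesis $X,Y\geq 0$ is met.
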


\begin{lemma}[Strict satisfaction of core inequalities, odd case]\label{lem:core-strict-odd}
Fix any odd $k \geq 3$. There exists $\epsilon_0 > 0$ such that for all $0 < \epsilon < \epsilon_0$, the following holds. Let $\beta = (1-1/k)^{(k-1)/2}(1+1/k)^{(k-1)/2}$ and $\eta = 1 - \beta^{-1} (1-1/k-\epsilon)^{(k-1)/2}(1+1/k+\epsilon)^{(k-1)/2}$. Then there exists $X,Y \geq 0$ satisfying the \emph{strict} inequalities:

\begin{empheq}[left=\empheqlbrace]{align*}
    &\left(\frac12+\frac1{2k}\right) (1-\delta) X < (1 - \eta) \left(1+\frac1k+\epsilon\right) \\
    &2-\left(\frac12-\frac1{2k}\right) (1+\delta) X < (1 - \eta) \left(1+\frac1k+\epsilon\right) \\
    &\left(\frac12-\frac1{2k}\right) (1-\delta) X < (1 - \eta) \left(1-\frac1k-\epsilon\right) \\
    &2-\left(\frac12+\frac1{2k}\right) (1+\delta) X < (1 - \eta) \left(1-\frac1k-\epsilon\right) \\
    &\frac1k(1+\delta)Y + \left(\frac12-\frac1{2k}\right)(1-\delta)X < 1-\eta \\
    &2-\frac1k(1-\delta)Y-\left(\frac12-\frac1{2k}\right)(1+\delta)X < 1-\eta
\end{empheq}
where $\delta = 5\eta$.
\end{lemma}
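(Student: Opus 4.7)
The plan is to set $X=2$ and $Y=1$ and verify that each of the six strict inequalities holds for $\epsilon > 0$ sufficiently small. This choice is natural: at $\epsilon = 0$ (so $\eta = \delta = 0$), \cref{lem:bern-fancy} identifies exactly three tight $(i,j)$ pairs, namely $((k-1)/2,(k-1)/2)$, $((k-1)/2,(k+1)/2)$, and $((k+1)/2,(k-1)/2)$, each of which generates two of the six inequalities via \cref{lem:suff-cond} (respectively yielding the last two, the first two, and the middle two). Under $(X,Y)=(2,1)$ and $\eta = \delta = 0$, direct substitution shows that each of the six inequalities becomes an equality. The task therefore reduces to showing that each LHS decreases strictly faster than the corresponding RHS as $\epsilon$ moves off $0$.

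To carry this out, I would first expand $\eta$ as a function of $\epsilon$: since $(1-1/k-\epsilon)(1+1/k+\epsilon) = (1-1/k^2) - 2\epsilon/k - \epsilon^2$, raising to the power $(k-1)/2$ and dividing by $\beta = (1-1/k^2)^{(k-1)/2}$ yields $\eta = \tfrac{k}{k+1}\epsilon + O(\epsilon^2)$. Thus both $\eta$ and $\delta = 5\eta$ are of order $\epsilon$.

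Next I would substitute $X=2$, $Y=1$, $\delta = 5\eta$ into each difference $\text{LHS} - \text{RHS}$ and expand to first order. The first, second, fifth, and sixth inequalities are straightforward: the first simplifies to $-4\eta(1+1/k) - \epsilon(1-\eta)$, unambiguously negative; the second becomes $\eta(-4 + 6/k + \epsilon) - \epsilon$, negative since $-4 + 6/k \leq -2$ for $k \geq 3$; and the fifth and sixth coincide under $(X,Y)=(2,1)$, both reducing to the condition $\eta(10/k - 4) < 0$, i.e.\ $k > 5/2$, valid for $k \geq 3$.

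The main obstacle is the third (and, by symmetry, the fourth) inequality, where the difference simplifies to $-4\eta(1-1/k) + \epsilon(1-\eta)$. Here the $+\epsilon$ term, arising from the $-\epsilon$ inside $(1-1/k-\epsilon)$ on the RHS, opposes the $-4\eta(1-1/k)$ term from $\delta = 5\eta$, and the coefficient $5$ in $\delta = 5\eta$ is exactly what saves the day. Using $\eta = \tfrac{k}{k+1}\epsilon(1 + O(\epsilon))$, the difference equals $\epsilon\bigl(1 - \tfrac{4(k-1)}{k+1}\bigr) + O(\epsilon^2)$, which is negative iff $4(k-1) > k+1$, i.e., $k > 5/3$. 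This holds for all odd $k \geq 3$, completing the proof.
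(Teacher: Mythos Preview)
Your proof is correct and takes a more direct route than the paper's. The paper does not fix $(X,Y)$; instead it substitutes $Z=X/2$, eliminates $Y$ by comparing the fifth and sixth inequalities to obtain a lower bound on $Z$, shows the first inequality is implied by the third, and then verifies (via cross-multiplication, together with an auxiliary claim that $\epsilon/2 \leq \eta \leq \epsilon$) that the upper bound on $Z$ from the third inequality exceeds each of the three competing lower bounds. Your approach---simply plug in the natural candidate $(X,Y)=(2,1)$ and check each inequality to first order using $\eta = \tfrac{k}{k+1}\epsilon + O(\epsilon^2)$---is shorter and arguably more transparent, though the paper's approach reveals a bit more about the shape of the feasible region.

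One minor imprecision: you say the fourth inequality is handled ``by symmetry'' with the third, both having difference $-4\eta(1-1/k)+\epsilon(1-\eta)$. In fact the fourth's difference computes to $-4\eta - 6\eta/k + \epsilon(1-\eta)$, which is strictly more negative than the third's (since $-6\eta/k < 4\eta/k$). So the fourth is the \emph{easier} of the two, and your analysis of the third covers it a fortiori; but the two are not literally the same computation.
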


We prove these lemmas in \cref{sec:core-strict} below, but for now, we use the lemmas collected in this subsection to prove \cref{thm:perturb}:

\begin{proof}[Proof of \cref{thm:perturb}]
    We consider the case where $k$ is even, and apply \cref{lem:core-strict-even}. (If $k$ were instead odd, we would apply \cref{lem:core-strict-odd}, but the proof would otherwise proceed in the same manner.)
    
    Let $\beta = (1-\frac1{k+1})^{k/2} (1+\frac1{k+1})^{k/2}$, so that our goal is to show that for some $0 \leq \delta \leq 1$ and $0 \leq \epsilon \leq 1-1/k$, the algorithm $\obl$ for $\vect = (\delta,1)$ and $\vecp = (1+\frac1k+\epsilon)$ achieves a ratio strictly better than $2^{-(k-1)} \beta$.
    
    Towards this, suppose we can show that there exist $\delta, \epsilon, X, Y$ such that the following inequalities are all strict:

    \begin{empheq}[left=\empheqlbrace]{align}
    (1+\delta)\left(1-\frac{i+j}k\right)Y - (1-\delta) \frac{j}k X &< \beta^{-1} \left(1-\frac1{k+1}-\epsilon\right)^i\left(1+\frac1{k+1}+\epsilon\right)^j \quad \forall i,j\in\BN, i+j \leq k \label{eq:perturb:suff}. \\
    2 - (1-\delta) \left(1 - \frac{i+j}k\right) Y - (1+\delta)\frac{i}k X&< \beta^{-1} \left(1-\frac1{k+1}-\epsilon\right)^i\left(1+\frac1{k+1}+\epsilon\right)^j \quad \forall i,j\in\BN, i+j \leq k \nonumber
    \end{empheq}

    Since these inequalities are strict, we know exists $\beta' > \beta$ such that they still hold replacing $\beta$ with $\beta'$; therefore, by \cref{lem:suff-cond} the algorithm $\obl$ achieves ratio at least $\beta'$, which strictly exceeds $\beta$.

    Let $\epsilon > 0$ be chosen later; let $\delta=\delta(\epsilon),X=X(\epsilon),Y=Y(\epsilon)$ be the result of applying \cref{lem:core-strict-even}. We claim that $(\epsilon,\delta,X,Y)$ satisfy \cref{eq:perturb:suff} for sufficiently small $\epsilon$. First, we consider the cases $(i,j) \in \{(k/2,k/2),(k/2-1,k/2+1),(k/2-1,k)\}$. Indeed, letting $\eta(\epsilon) = 1 - \beta^{-1}(1-1/(k+1)-\epsilon)^{k/2}(1+1/(k+1)+\epsilon)^{k/2}$. Thus, $1-\eta(\epsilon) = \beta^{-1}(1-1/(k+1)-\epsilon)^{k/2}(1+1/(k+1)+\epsilon)^{k/2}$, which is precisely the RHS of the above inequalities at $i = j=k/2$. Similarly, at $i=k/2-1,j=k/2+1$, the RHS is \[ \beta^{-1} \left(1-\frac1{k+1}-\epsilon\right)^{k/2-1}\left(1+\frac1{k+1}+\epsilon\right)^{k/2+1} = (1-\eta(\epsilon)) \left(1+\frac1{k+1}+\epsilon\right)\left(1-\frac1{k+1}-\epsilon\right)^{-1}, \] and the RHS at $i=k/2-1,j=k/2$ is \[ \beta^{-1}\left(1-\frac1{k+1}-\epsilon\right)^{k/2-1}\left(1+\frac1{k+1}+\epsilon\right)^{k/2} = (1-\eta(\epsilon))\left(1-\frac1{k+1}-\epsilon\right)^{-1}. \] Therefore, we can write \cref{eq:perturb:suff} at $(i,j) \in \{(k/2,k/2),(k/2-1,k/2+1),(k/2-1,k)\}$ equivalently as:

    \begin{empheq}[left=\empheqlbrace]{align}
    &\frac12(1-\delta) X < 1 - \eta \label{eq:perturb:i} \\
    &2 - \frac12(1+\delta) X < 1 - \eta \label{eq:perturb:ii} \\
    &\left(\frac12+\frac1k\right)(1-\delta)X < (1 - \eta) \left(1+\frac1k+\epsilon\right)\left(1-\frac1{k+1}-\epsilon\right)^{-1} \nonumber \\
    &2 - \left(\frac12-\frac1k\right)(1+\delta) X < (1-\eta) \left(1+\frac1{k+1}+\epsilon\right)\left(1-\frac1{k+1}-\epsilon\right)^{-1} \nonumber \\
    &\frac1k(1+\delta)Y + \frac12(1-\delta)X < (1 - \eta) \left(1-\frac1{k+1}-\epsilon\right)^{-1}  \label{eq:perturb:v} \\
    &2 - \frac1k(1-\delta)Y - \left(\frac12-\frac1k\right)(1+\delta)X < (1 - \eta)\left(1-\frac1{k+1}-\epsilon\right)^{-1} \label{eq:perturb:vi}.
    \end{empheq}
    which was precisely the conclusion of \cref{lem:core-strict-even}. 

    Finally, it remains to show that for sufficiently small $\epsilon$, the strict inequalities (\cref{eq:perturb:suff}) also hold for $(i,j) \not\in \{(k/2,k/2),(k/2-1,k/2+1),(k/2-1,k)\}$. For this, we observe that as $\epsilon \to 0$, we have $\delta,\eta \to 0$. Further, we observe that for all $\epsilon$, \cref{eq:perturb:i,eq:perturb:ii} imply \[ \frac{1+\eta}{1+\delta} < \frac{X}2 < \frac{1-\eta}{1-\delta} \], and therefore as $\epsilon \to 0$ we have $X \to 2$. Similarly, \cref{eq:perturb:v,eq:perturb:vi} together imply \[ \frac{2-(1-\eta)(1+\frac1k)-(\frac12-\frac1k)(1+\delta)X}{\frac1k(1-\delta)} < Y < \frac{(1-\eta)(1+\frac1k)-\frac12(1-\delta)X}{\frac1k(1+\delta)}, \] so as $\epsilon \to 0$, $Y \to 1$. Finally, note that in \cref{eq:perturb:suff}, for each $i,j$, the RHS of both inequalities has limit $\beta^{-1}(1-1/(k+1))^i(1+1/(k+1))^j$ as $\epsilon \to 0$, and the LHS of both inequalities has limit $1+(j-i)/k$. By \cref{lem:bern-fancy}, therefore, the limits have a strict inequality as long as $(i,j) \not\in \{(k/2,k/2),(k/2-1,k/2+1),(k/2-1,k)\}$. So for sufficiently small choice of $\epsilon$, $X$ and $Y$ satisfy \cref{eq:perturb:suff}, as desired.
\end{proof}

\subsection{A ``two-sided Bernoulli inequality'': Proving \cref{lem:bern-fancy}}\label{sec:bern}

In this subsection, we prove \cref{lem:bern-fancy}, which (recall) corresponds to the feasibility of the solution $(X,Y) = (2,1)$ in the system of inequalities in \cref{lem:suff-cond} when $\delta=0$ and $\gamma=\gamma_k$. We include the statement of the standard Bernoulli inequality for completeness:

\begin{proposition}[Bernoulli's inequality]\label{prop:bern}
For all $x, r \in \BR$, if $x > -1$ and $r \leq 0$ or $r \geq 1$, then $1+rx \leq (1+x)^r$. Further, the inequality is strict unless $r = 0$ or $r = 1$.
\end{proposition}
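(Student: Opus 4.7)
The plan is to derive Bernoulli's inequality from the convexity of the power function $g(y) := y^r$ on $(0, \infty)$. First, I would substitute $y := 1+x$ (which is positive by the hypothesis $x > -1$) to rewrite the desired inequality in the equivalent form $1 + r(y-1) \leq y^r$ for $y > 0$.

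Next, I would compute $g''(y) = r(r-1)\, y^{r-2}$, which is nonnegative precisely when $r(r-1) \geq 0$, i.e., exactly when $r \leq 0$ or $r \geq 1$; this matches the hypothesis of the proposition. Moreover $g''(y) > 0$ on $(0,\infty)$ unless $r \in \{0, 1\}$, so $g$ is strictly convex in the non-degenerate cases. The line tangent to the graph of $g$ at $y = 1$ has equation $\ell(y) = g(1) + g'(1)(y-1) = 1 + r(y-1)$, and the standard supporting-hyperplane characterization of convexity yields $g(y) \geq \ell(y)$ on $(0,\infty)$, which is exactly the inequality we want. (As a purely calculus-based alternative that avoids invoking convexity, one can instead let $f(x) := (1+x)^r - (1+rx)$, check $f(0)=0$ and $f'(x) = r[(1+x)^{r-1} - 1]$, and verify that $x=0$ is a global minimum of $f$ on $(-1,\infty)$ by sign analysis in the two cases $r \geq 1$ and $r \leq 0$.)

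For the strictness assertion, I would appeal to strict convexity of $g$ when $r \notin \{0,1\}$: a strictly convex function meets any supporting line only at the point of tangency, so $g(y) > \ell(y)$ whenever $y \neq 1$, i.e., $x \neq 0$; at $x = 0$ both sides of the inequality equal $1$, regardless of $r$. I do not anticipate any real obstacle; the only mild subtlety is that the paper's phrasing ``strict unless $r = 0$ or $r = 1$'' tacitly excludes the trivial equality case $x = 0$, but this minor edge case is not used in the subsequent applications (in particular in the proofs of \cref{lem:bern-fancy,lem:core-strict-even,lem:core-strict-odd}).
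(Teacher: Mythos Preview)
The paper does not actually prove \cref{prop:bern}: it is introduced with the phrase ``We include the statement of the standard Bernoulli inequality for completeness'' and then used without proof in the argument for \cref{lem:bern-fancy}. So there is no ``paper's own proof'' to compare against; the authors treat it as a standard fact.

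Your proposed argument via convexity of $y \mapsto y^r$ is correct and is one of the standard proofs. The calculus alternative you sketch (analyzing $f(x) = (1+x)^r - (1+rx)$ and its derivative) is also fine. Your observation about the edge case $x=0$ is accurate: the paper's phrasing ``strict unless $r=0$ or $r=1$'' is indeed slightly imprecise, since equality also occurs at $x=0$ for any $r$; however, as you note, every invocation of \cref{prop:bern} in the paper has $x \neq 0$, so this does not affect anything downstream.
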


\begin{proof}[Proof of \cref{lem:bern-fancy}]
    Firstly, we note that regardless of the parity of $k$, incrementing both $j$ and $i$ fixes the LHS of the desired inequality while strictly decreasing the RHS. That is, for the LHS we have $(j+1)-(i+1)=j-i$, while for the RHS we have \[ \frac{\left(1-\gamma_k\right)^{(i+1)-\lfloor k/2\rfloor}\left(1+\gamma_k\right)^{(j+1)-\lfloor k/2\rfloor}}{\left(1-\gamma_k\right)^{i-\lfloor k/2\rfloor}\left(1+\gamma_k\right)^{j-\lfloor k/2\rfloor}} = \left(1-\gamma_k\right)\left(1+\gamma_k\right) = 1-\gamma_k^2 < 1 \] since $\gamma_k > 0$. Further, we can increment $j$ and $i$ while maintaining the sum at most $k$ iff $j+i \leq k-2$. Therefore, we need to prove the inequality WLOG in the cases $i+j \in \{k-1,k\}$, and further, equality is only possible in these cases.
    
    Now, we proceed with cases based on the parity of $k$. 

    \paragraph{Case: $k$ is even.} In this case, we have $\lfloor k/2\rfloor = k/2$ and (recall) $\gamma_k = 1/(k+1)$. First, we observe that at $j=\frac{k}2,i=\frac{k}2-1$, the LHS and RHS of the desired inequality are both $1+\frac1k$, so we have \emph{equality}. Setting aside this case, since $j,i$ are integers summing to $k-1$ or $k$, we have $j-i \neq 1$. Further, we have the useful equality
    \begin{equation}\label{eq:bern:even-ratio}
        1+\frac2k = \left(\frac{k+2}{k+1}\right) \left(\frac{k+1}k\right) = \left(1+\frac1{k+1}\right) \left(1-\frac1{k+1}\right)^{-1}.
    \end{equation}

    Therefore we have
    \begin{align*}
        1+\frac{j-i}k &= 1+\frac{j-i}2 \cdot \frac2k \\
        &\leq \left(1+\frac2k\right)^{(j-i)/2} \tag{\cref{prop:bern} and $j-i\neq1$} \\
        &= \left(1+\frac1{k+1}\right)^{(j-i)/2} \left(1-\frac1{k+1}\right)^{(i-j)/2} \tag{\cref{eq:bern:even-ratio}} \\
        &= \left(1-\frac1{k+1}\right)^{i-k/2} \left(1+\frac1{k+1}\right)^{j-k/2} \left(1-\frac1{k+1}\right)^{(k-(i+j))/2} \left(1+\frac1{k+1}\right)^{(k-(i+j))/2}
    \end{align*}

    which is only smaller than our desired RHS because $(1-1/(k+1))(1+1/(k+1)) = 1- 1/(k+1)^2 < 1$ and $j+i \leq k$. Finally, we recall that Bernoulli's inequality (\cref{prop:bern}) has equality iff $0$ or $1$; since the exponent is $(j-i)/2$, we have equality in the cases $i=j=\frac{k}2$ and $i=\frac{k}2-1,j=\frac{k}2+1$, respectively.

    \paragraph{Case: $k$ is odd.} In this case, we have $\lfloor k/2 \rfloor = (k-1)/2$ and (recall) $\gamma_k = 1/k$. We are not aware of a comparably slick approach (which only applies Bernoulli's inequality once), so we will have to do slightly more work. We observe that since $i+j\in\{k-1,k\}$, $\lfloor(i+j)/2\rfloor = (k-1)/2$, so we want to show \[ 1+\frac{j-i}k \leq \left(1-\frac1k\right)^{i-\lfloor(i+j)/2\rfloor} \left(1+\frac1k\right)^{j-\lfloor(i+j)/2\rfloor}, \] and also usefully, we have
    \begin{equation}\label{eq:bern:odd-i+j}
        \lfloor(i+j)/2\rfloor + \lceil(i+j)/2\rceil = i+j.
    \end{equation}
    We also use
    \begin{equation}\label{eq:bern:transfer}
        1+\frac1k = \left(1-\frac1{k+1}\right)^{-1}.
    \end{equation}
    
    First, suppose $j \geq i$. Since $j$ is an integer, $j \geq \lceil(i+j)/2\rceil$. Then we have:
    
    \begin{align*}
        1+\frac{j-i}k &= 1+\frac{2j-(i+j)}k \\
        & \leq \left(1+\frac1k\right)^{2j-(i+j)} \tag{\cref{prop:bern}} \\
        &= \left(1+\frac1k\right)^{j-\lfloor(i+j)/2\rfloor}\left(1+\frac1k\right)^{j-\lceil(i+j)/2\rceil} \tag{\cref{eq:bern:odd-i+j}} \\
        &= \left(1+\frac1k\right)^{j-\lfloor(i+j)/2\rfloor}\left(1-\frac1{k+1}\right)^{\lceil(i+j)/2\rceil-j} \tag{\cref{eq:bern:transfer}} \\
        &\leq \left(1+\frac1k\right)^{j-\lfloor(i+j)/2\rfloor}\left(1-\frac1k\right)^{\lceil(i+j)/2\rceil-j} \tag{$1-1/(k+1)>1-1/k$ and $j \geq \lceil (i+j)/2\rceil$} \\
        &= \left(1+\frac1k\right)^{j-\lfloor(i+j)/2\rfloor}\left(1-\frac1k\right)^{i-\lfloor(i+j)/2\rfloor}. \tag{\cref{eq:bern:odd-i+j}}
    \end{align*}
    
    On the other hand, suppose $i > j$. Then we similarly have:

    \begin{align*}
        1+\frac{j-i}k &= 1-\frac{2i-(i+j)}k \\
        & \leq \left(1-\frac1k\right)^{2i-(i+j)} \tag{\cref{prop:bern}} \\
        &= \left(1-\frac1k\right)^{i-\lfloor(i+j)/2\rfloor}\left(1-\frac1k\right)^{i-\lceil(i+j)/2\rceil} \tag{\cref{eq:bern:odd-i+j}} \\
        &\leq \left(1-\frac1k\right)^{i-\lfloor(i+j)/2\rfloor}\left(1-\frac1{k+1}\right)^{i-\lceil(i+j)/2\rceil} \tag{$1-1/(k+1)>1-1/k$ and $i \geq \lceil (i+j)/2\rceil$} \\
        &= \left(1-\frac1k\right)^{i-\lfloor(i+j)/2\rfloor}\left(1+\frac1k\right)^{\lceil(i+j)/2\rceil-i} \tag{\cref{eq:bern:transfer}} \\
        &= \left(1-\frac1k\right)^{i-\lfloor(i+j)/2\rfloor}\left(1+\frac1k\right)^{j-\lfloor(i+j)/2\rfloor}. \tag{\cref{eq:bern:odd-i+j}}
    \end{align*}

    To identify the tight cases: If $j \geq i$, Bernoulli's inequality (\cref{prop:bern}) is only an equality if $j-i \in \{0,1\}$; these correspond to $j=\frac{k-1}2,i=\frac{k-1}2$ and $j=\frac{k+1}2,i=\frac{k-1}2$, respectively. Similarly, if $i > j$ then we need $i-j=1$, corresponding to $j=\frac{k-1}{2},i=\frac{k+1}2$.
\end{proof}

\subsection{An analysis of ``slack'': Proving \cref{lem:core-strict-even,lem:core-strict-odd}}\label{sec:core-strict}

In this section, we prove \cref{lem:core-strict-even,lem:core-strict-odd}, completing the proof of \cref{thm:perturb}.

\begin{proof}[Proof of \cref{lem:core-strict-even}]
    We handle the cases where $k \geq 4$ and $k = 2$ separately. 

    \paragraph{Case: $k = 2$.} 
    We have \[ 1-\eta = \left(\frac43+\epsilon\right)\left(\frac23-\epsilon\right)/\left(\frac43\cdot\frac23\right) = \frac18(4+3\epsilon)(2-3\epsilon),  \] and similarly \[ (1-\eta)\left(\frac43+\epsilon\right)\left(\frac23-\epsilon\right)^{-1} = \frac18\left(4+3\epsilon\right)^2, \] and \[ (1-\eta)\left(\frac23-\epsilon\right)^{-1} = \frac38\left(4+3\epsilon\right). \] Thus, the desired inequalities become, respectively:
    \begin{subequations}
    \begin{empheq}[left=\empheqlbrace]{align}
    &\frac12(1-\delta) X < \frac18(4+3\epsilon)(2-3\epsilon) \label{eq:strict2:i} \\
    &2 - \frac12(1+\delta) X < \frac18(4+3\epsilon)(2-3\epsilon) \label{eq:strict2:ii} \\
    &(1-\delta)X < \frac18\left(4+3\epsilon\right)^2 \label{eq:strict2:iii} \\
    &2 < \frac18\left(4+3\epsilon\right)^2 \label{eq:strict2:iv} \\
    &\frac12(1+\delta)Y + \frac12(1-\delta)X < \frac38\left(4+3\epsilon\right) \label{eq:strict2:v} \\
    &2 - \frac12(1-\delta)Y < \frac38\left(4+3\epsilon\right) \label{eq:strict2:vi}.
    \end{empheq}
    \end{subequations}

    Now, we observe that \cref{eq:strict2:iv} is always satisfied whenever $\epsilon > 0$ (i.e., there is no dependence on $X$ or $Y$). Similarly, we can compare \cref{eq:strict2:iii} to \cref{eq:strict2:i}; multiplying \cref{eq:strict2:i} by $2$ yields $(1-\delta)X < \frac18(4+3\epsilon)(4-6\epsilon)$, which is strictly stronger than \cref{eq:strict2:iii} whenever $\epsilon > 0$. Thus, we may restrict our attention to the four inequalities \cref{eq:strict2:i,eq:strict2:ii,eq:strict2:v,eq:strict2:vi}.

    We rewrite \cref{eq:strict2:v,eq:strict2:vi} as \[ Y < \frac{\frac38(4+3\epsilon) - \frac12(1-\delta)X}{\frac12(1+\delta)} \text{ and } Y > \frac{2-\frac38(4+3\epsilon)}{\frac12(1-\delta)}, \] respectively. Thus, $Y$ exists iff \[ \frac{2-\frac38(4+3\epsilon)}{\frac12(1-\delta)} < \frac{\frac38(4+3\epsilon) - \frac12(1-\delta)X}{\frac12(1+\delta)} \] which cross-multiplies to \[ \left(\frac38(4+3\epsilon) - \frac12(1-\delta)X\right)\left(\frac12(1-\delta)\right) - \left(2-\frac38(4+3\epsilon)\right)\left(\frac12(1+\delta)\right) > 0, \] and this in turn simplifies to
    \begin{equation}\label{eq:core-strict-2:Y-cond}
        X < \frac{\left(\frac38(4+3\epsilon)\right)\left(\frac12(1-\delta)\right) - \left(2-\frac38(4+3\epsilon)\right)\left(\frac12(1+\delta)\right)}{\left(\frac12(1-\delta)\right)^2} = \frac{4-8\delta+9\epsilon}{2(1-\delta)^2}.
    \end{equation}

    Now a solution exists iff the upper bounds in \cref{eq:core-strict-2:Y-cond,eq:strict2:i} are both compatible with the lower bound in \cref{eq:strict2:ii}. We can rewrite \cref{eq:strict2:i,eq:strict2:ii} as \[ X < \frac{\frac18(4+3\epsilon)(2-3\epsilon)}{\frac12(1-\delta)} = \frac{8-6\epsilon-9\epsilon^2}{4(1-\delta)} \text{ and } X > \frac{2-\frac18(4+3\epsilon)(2-3\epsilon)}{\frac12(1+\delta)} = \frac{8+6\epsilon+9\epsilon^2}{4(1+\delta)}, \] respectively. Finally, we plug in $\delta = \epsilon$, and for $0 < \epsilon < \epsilon_0 := 2/9$, we have \[ \frac{8+6\epsilon+9\epsilon^2}{4(1+\epsilon)} < \frac{8-6\epsilon-9\epsilon^2}{4(1-\epsilon)} < \frac{4-\epsilon}{2(1-\epsilon)^2}, \] which can be verified by again cross-multiplying, and therefore $X$ exists.

    \paragraph{Case: $k \geq 4$.} In this case, we claim that the following set of inequalities, which correspond to substituting $\epsilon = 0$ only on the RHS, are stronger than the desired inequalities:

    \begin{subequations}
    \begin{empheq}[left=\empheqlbrace]{align}
        &\frac12(1-\delta) X < 1 - \eta \label{eq:strict4:i} \\
        &2 - \frac12(1+\delta) X < 1 - \eta \label{eq:strict4:ii} \\
        &\left(\frac12+\frac1k\right)(1-\delta)X < (1 - \eta) \left(1+\frac2k\right) \label{eq:strict4:iii} \\
        &2 - \left(\frac12-\frac1k\right)(1+\delta) X < (1-\eta) \left(1+\frac2k\right) \label{eq:strict4:iv} \\
        &\frac1k(1+\delta)Y + \frac12(1-\delta)X < (1 - \eta) \left(1+\frac1k\right) \label{eq:strict4:v} \\
        &2 - \frac1k(1-\delta)Y - \left(\frac12-\frac1k\right)(1+\delta)X < (1 - \eta)\left(1+\frac1k\right) \label{eq:strict4:vi}.
    \end{empheq}
    \end{subequations}

    Indeed, the first two inequalities are the same. For the latter four, we have $(1-1/(k+1)-\epsilon)^{-1} > (1-1/(k+1))^{-1} = 1+1/k$ and $(1+1/(k+1)+\epsilon)(1-1/(k+1)-\epsilon)^{-1} > (1+1/(k+1))(1-1/(k+1))^{-1}=1+2/k$.\footnote{The former inequality follows since inversion reverses the direction of $1-1/(k+1)-\epsilon < 1-1/(k+1)$. For the latter, cross-multiply to get $(1+1/(k+1)+\epsilon)(1-1/(k+1))-(1+1/(k+1))(1-1/(k+1)-\epsilon)=2\epsilon > 0$.} Now, we observe that \cref{eq:strict4:iii} is precisely \cref{eq:strict4:i} multiplied on both sides by $1+\frac2k$, so we can ignore \cref{eq:strict4:iii} and focus on the remaining equations. We can rewrite \cref{eq:strict4:i} equivalently as 
    \begin{equation}\label{eq:strict4:X-ub-1}
        X < \frac{2(1-\eta)}{1-\delta}.
    \end{equation}
    Next, we can rewrite \cref{eq:strict4:ii} as \[ X > \frac{2(1+\eta)}{1+\delta} \] and \cref{eq:strict4:iv} as
    \begin{equation}\label{eq:strict4:X-lb}
        X > \frac{2-(1-\eta)\left(1+\frac2k\right)}{(1+\delta)\left(\frac12-\frac1k\right)} = \frac{2(1+\eta\frac{k+2}{k-2})}{1+\delta}.
    \end{equation}
    Thus, \cref{eq:strict4:iv} is strictly stronger than \cref{eq:strict4:ii}, so we can similarly ignore \cref{eq:strict4:ii}.\footnote{Note the difference with the case $k=2$, where the fourth equation was \emph{weaker} than the second (and in fact held tautologically).}

    Next, we rewrite \cref{eq:strict4:v,eq:strict4:vi} as \[ Y < \frac{(1-\eta)(1+\frac1k)-\frac12(1-\delta)X}{\frac1k(1+\delta)} \text{ and } Y > \frac{2-(1-\eta)(1+\frac1k)-(\frac12-\frac1k)(1+\delta)X}{\frac1k(1-\delta)}. \] Thus, $Y$ exists if and only if $X$ satisfies \[ (1-\delta)\left((1-\eta)\left(1+\frac1k\right)-\frac12(1-\delta)X\right)-(1+\delta)\left(2-(1-\eta)\left(1+\frac1k\right)-\left(\frac12-\frac1k\right)(1+\delta)X\right) > 0, \] which simplifies to
    \[
        X < \frac{2(1-k\delta - (k+1)\eta)}{1-2(k-1)\delta+\delta^2}.
    \] Setting $\delta = 4\eta$, and using the assumption $\delta \leq 1$, this is implied by
    \begin{equation}\label{eq:strict4:X-ub-2}
        X < \frac{2(1-(5k+1)\eta)}{1-(8k-9)\eta}.
    \end{equation}

    Finally, we observe that $X$ exists if the lower bound in \cref{eq:strict4:X-lb} is compatible with the upper bounds in \cref{eq:strict4:X-ub-1,eq:strict4:X-ub-2}. For $0 < \eta < \eta_0 := \min\{1/(8k-9),(13-3k)/(13+12k)\}$, we have \[ \frac{2(1+\eta)}{1+4\eta} < \frac{2(1-(5k+1)\eta)}{1-(8k-9)\eta} <  \frac{2(1-\eta)}{1-4\eta} \] which can be verified by cross-multiplying, and therefore $X$ exists.
\end{proof}

\begin{proof}[Proof of \cref{lem:core-strict-odd}]
Substituting $Z=\frac12X$, we rewrite the six inequalities as, respectively:
    
\begin{subequations}
\begin{empheq}[left=\empheqlbrace]{align}
    & Z < \frac{(1 - \eta) \left(1+\frac1k+\epsilon\right)}{\left(1+\frac1k\right) (1-\delta)}\label{eq:core-strict-odd:i} \\
    & Z > \frac{2- (1 - \eta) \left(1+\frac1k+\epsilon\right)}{\left(1-\frac1k\right) (1+\delta)}\label{eq:core-strict-odd:ii}\\
    & Z < \frac{(1 - \eta) \left(1-\frac1k-\epsilon\right)}{\left(1-\frac1k\right) (1-\delta)} \label{eq:core-strict-odd:iii}\\
    & Z > \frac{2-(1 - \eta) \left(1-\frac1k-\epsilon\right)}{\left(1+\frac1k\right) (1+\delta)} \label{eq:core-strict-odd:iv}\\
    & Y < \frac{1-\eta-\left(1-\frac1{k}\right)(1-\delta)Z}{\frac1k(1+\delta)} \label{eq:core-strict-odd:v}\\
    & Y > \frac{1+\eta-\left(1-\frac1{k}\right)(1+\delta)Z}{\frac1k(1-\delta)}\label{eq:core-strict-odd:vi}
\end{empheq}
\end{subequations}

We compare \cref{eq:core-strict-odd:v,eq:core-strict-odd:vi}, and we see that $Y$ exists iff \[ \frac{1+\eta-\left(1-\frac1{k}\right)(1+\delta)Y}{1-\delta} < \frac{1-\eta-\left(1-\frac1{k}\right)(1-\delta)Y}{1+\delta}. \] We can cross-multiply, and deduce that the $Y$ exists iff
\begin{equation}\label{eq:core-strict-odd:Y-cond}
    Z > \frac{\eta+\delta}{2\delta\left(1-\frac1k\right)}
\end{equation}

Now, we compare \cref{eq:core-strict-odd:i,eq:core-strict-odd:iii}; we have \[ \frac{1-\frac1k-\epsilon}{1-\frac1k} < \frac{1+\frac1k+\epsilon}{1+\frac1k} \] since the inequality cross-multiplies to $2\epsilon>0$, and therefore \cref{eq:core-strict-odd:iii} implies \cref{eq:core-strict-odd:i}.

Finally, we observe that $Z$ exists iff the upper-bound on $Z$ in \cref{eq:core-strict-odd:iii} exceeds the lower-bounds in \cref{eq:core-strict-odd:ii,eq:core-strict-odd:iv,eq:core-strict-odd:Y-cond}. We prove each inequality separately. First, \cref{eq:core-strict-odd:iii,eq:core-strict-odd:Y-cond} are compatible iff \[ \frac{(1-\eta)(1-\frac1k-\epsilon)}{1-\delta} > \frac{\delta+\eta}{2\delta}. \] At $\delta = 5\eta$ this cross-multiplies to $2k-5(1-\eta)-5\epsilon k(1-\eta) +10k\eta > 0$, which since $k \geq 3$ holds for sufficiently small $\epsilon$. Second, \cref{eq:core-strict-odd:iii,eq:core-strict-odd:ii} are compatible iff \[ \frac{(1-\eta)(1-\frac1k-\epsilon)}{1-\delta} > \frac{2-(1-\eta)(1+\frac1k+\epsilon)}{1+\delta}, \] and at $\delta = 5\eta$ this cross-multiplies to $4k-5(1-\eta)-5k\epsilon(1-\eta) > 0$, which again holds for sufficiently small $\epsilon$ since $k \geq 3$. Finally, \cref{eq:core-strict-odd:iii,eq:core-strict-odd:iv} are compatible iff \[ \frac{(1-\eta)(1-\frac1k-\epsilon)}{\left(1-\frac1k\right)(1-\delta)} > \frac{2-(1-\eta)(1-\frac1k-\epsilon)}{\left(1+\frac1k\right)(1+\delta)}; \] at $\delta = 5\eta$ this cross-multiplies to $(4k+5-5\eta)(k-1)\eta - \epsilon k (1-\eta)(k+5\eta) > 0$. Sadly, to complete the proof we will actually need to compare $\eta$ and $\epsilon$. Suppose the following claim:

\begin{claim*}
For sufficiently small $\epsilon > 0$, we have $\epsilon/2 \leq \eta \leq \epsilon$.
\end{claim*}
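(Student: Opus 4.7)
The plan is to prove the claim by showing that $\eta$, viewed as a smooth function of $\epsilon$ in a neighborhood of $0$, satisfies $\eta(0) = 0$ and $\eta'(0) = \frac{k}{k+1}$, which lies strictly between $1/2$ and $1$ for all $k \geq 3$. The mean value theorem (or equivalently, continuity of $\eta'$) will then yield the desired two-sided bound for sufficiently small $\epsilon > 0$.

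First I would rewrite $\eta$ in a form amenable to differentiation. Using the definitions of $\eta$ and $\beta$, one has
\[
1 - \eta = \left(\frac{(1-\tfrac1k-\epsilon)(1+\tfrac1k+\epsilon)}{(1-\tfrac1k)(1+\tfrac1k)}\right)^{(k-1)/2}.
\]
The numerator inside the parentheses equals $1-(\tfrac1k+\epsilon)^2$ while the denominator equals $1-\tfrac1{k^2}$, so the ratio simplifies to $1 - u(\epsilon)$ with
\[
u(\epsilon) = \frac{(\tfrac1k+\epsilon)^2 - \tfrac1{k^2}}{1-\tfrac1{k^2}} = \frac{k^2(2\epsilon/k+\epsilon^2)}{k^2-1} = \frac{2k\epsilon + k^2\epsilon^2}{k^2-1}.
\]
In particular $u(0) = 0$ and $u'(0) = \frac{2k}{k^2-1}$.

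Next, I would differentiate $\eta(\epsilon) = 1 - (1-u(\epsilon))^{(k-1)/2}$ at $\epsilon = 0$. By the chain rule,
\[
\eta'(0) = \frac{k-1}{2}\cdot(1-u(0))^{(k-3)/2}\cdot u'(0) = \frac{k-1}{2}\cdot\frac{2k}{k^2-1} = \frac{k}{k+1}.
\]
For $k \geq 3$ this gives $\eta'(0) \in [3/4, 1)$, so in particular $\tfrac12 < \eta'(0) < 1$. Since $\eta(\epsilon)$ is continuously differentiable on a neighborhood of $\epsilon = 0$ (the base $1 - u(\epsilon)$ is positive there and the exponent is fixed), $\eta'$ is continuous at $0$, and there exists $\epsilon_1 > 0$ such that $\eta'(\epsilon) \in (\tfrac12, 1)$ for all $0 \leq \epsilon \leq \epsilon_1$. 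Combined with $\eta(0) = 0$, integrating (or applying the mean value theorem to $\eta$ on $[0,\epsilon]$) yields $\tfrac\epsilon2 < \eta(\epsilon) < \epsilon$ for all $0 < \epsilon \leq \epsilon_1$, as claimed.

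The only minor subtlety is confirming that $\eta'(0)$ lies strictly above $1/2$ (not merely at it): here the hypothesis $k \geq 3$ is essential, since $\frac{k}{k+1} = 1/2$ exactly at $k=1$ and exceeds $1/2$ for all $k \geq 2$, with the gap growing with $k$. No hard calculation is required; the main care is in invoking continuity of $\eta'$ on a neighborhood (which is routine since $1-u(\epsilon)$ stays bounded away from $0$ for small $\epsilon$) before passing from the derivative bound at $0$ to the integrated bound for $\epsilon$ in a small positive interval.
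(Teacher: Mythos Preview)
Your proof is correct and takes essentially the same approach as the paper: both compute that the linear coefficient of $\eta$ in $\epsilon$ is $k/(k+1) \in (1/2,1)$ and conclude the two-sided bound from this. Your version is marginally cleaner, combining the two factors into a single base $(1-u(\epsilon))^{(k-1)/2}$ before differentiating and invoking the mean value theorem explicitly, whereas the paper expands the product directly and appeals informally to the remaining terms being lower order; but the substance is identical.
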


Then we have the lower-bound $(4k+5-5\eta)(k-1)\eta - \epsilon k (1-\eta)(k+5\eta) \geq (4k+5)(k-1)\epsilon/2 - \epsilon k(k+5\epsilon)$, which is positive for sufficiently small $\epsilon$ because $k \geq 3$ so $(4k+5)(k-1) \geq k^2$. 

It remains to check the claim. We expand \[ \eta = 1-\beta^{-1} \left(1-\frac1k-\epsilon\right)^{(k-1)/2} \left(1+\frac1k+\epsilon\right)^{(k-1)/2} = 1-\left(1-\frac{\epsilon}{1-\frac1k}\right)^{(k-1)/2}\left(1+\frac{\epsilon}{1+\frac1k}\right)^{(k-1)/2}. \] When we expand the expression on the RHS, the $1$'s cancel, and the coefficient of $\epsilon$ is $-(\frac{k-1}2)(-\frac1{1-1/k} + \frac1{1+1/k}) = \frac{k}{k+1}$. (We have $(k-1)/2$ choices of the $-\epsilon/(1-1/k)$ to pick from the left factor, taking $1$ from all remaining factors, and similarly for the right factor.) All remaining terms are lower order in $\epsilon$, and $1/2 < k/(k+1) < 1$, so for sufficiently small $\epsilon > 0$, we have $\epsilon/2 \leq \eta \leq \epsilon$, as desired.
\end{proof}

\section{The limitations of ``superoblivious'' algorithms: Proving \cref{thm:superobl}}\label{sec:thm:superobl}

Recall, a ``superoblivious'' algorithm for $\mkand$ is one of the form $\obl$ for $\vect = (0,1)$, $\vecp = (p)$, i.e., the algorithm assigns positively-biased vertices to $1$ w.p. $p$, negatively-biased vertices to $1$ w.p. $1-p$, and zero-bias vertices to $1$ w.p. $\frac12$. In this section, we prove \cref{thm:superobl}, which states that the best superoblivious algorithm achieves an approximation ratio of exactly $\alpha^*_k$ on the $\mkand$ problem. The main thrust of the section is proving the following lemma:

\begin{lemma}\label{lem:primal-two}
    For all $k \geq 2 \in \BN$, let $\vect = (0,1)$ (and $L=3$). There exists a feasible solution $\{W(\vecc)\}_{\vecc \in \Clauses^L_k}$ to the primal linear program in \cref{lem:primal-opt}, such that for all $p \in [0,1]$, the value of the objective function is upper-bounded: \[ \sum_{\vecc \in \Clauses^L_k} \prob^{(p)}(\vecc) \cdot W(\vecc) \leq \alpha^*_k. \] Further, there is equality iff $p = p^*_k$.
\end{lemma}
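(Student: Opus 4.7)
The plan is to exhibit an explicit feasible primal solution $\{W(\vecc)\}$ supported on just two or three clause patterns, compute the objective $f(p) = \sum_{\vecc} \prob^{(p)}(\vecc) W(\vecc)$ in closed form as a polynomial in $p$, and verify by direct calculus that $f$ attains its maximum on $[0,1]$ exactly at $p = p^*_k$ with value $\alpha^*_k$. With $\vect = (0,1)$, the bias constraints of \cref{lem:primal-opt} collapse to $W^+(+1) \geq W^-(+1)$, $W^+(-1) \leq W^-(-1)$, and $W^+(0) = W^-(0) = 0$, together with $\sum_{\vecc \in \PClauses^L_k} W(\vecc) = 1$, so feasibility will be essentially automatic for a very sparse support. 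The construction will split according to the parity of $k$, since although $p^*_k = (m+1)/(2m+1)$ in either case, the relevant exponents shift.

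For odd $k = 2m+1$, I would take two patterns: $\vecc_1 \in \PClauses^L_k$ with $c^+_{+1} = m+1$ and $c^+_{-1} = m$ (weight $1$), and its ``flip'' $\vecc_2 \notin \PClauses^L_k$ with $c^-_{-1} = m+1$ and $c^-_{+1} = m$ (weight $m/(m+1)$). Both patterns have probability $p^{m+1}(1-p)^m$, so the objective collapses to the single monomial $f(p) = \tfrac{2m+1}{m+1}\, p^{m+1}(1-p)^m$. Feasibility follows immediately (the bias constraint at $-1$ is tight, the one at $+1$ has slack), and elementary calculus gives a unique critical point in $(0,1)$ at $p^*_k$, with $f(p^*_k) = m^m(m+1)^m/(2m+1)^{2m} = \alpha^*_k$.

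For even $k = 2m$, no two-pattern construction can work, because any single clause's probability $2^{-c}p^a(1-p)^b$ with $a+b+c \leq 2m$ is maximized at $p = a/(a+b)$, which cannot equal $p^*_k = (m+1)/(2m+1)$ when $a+b \leq 2m$. I would therefore add a third asymmetric pattern and take $\vecc_A \in \PClauses^L_k$ with $c^+_{+1} = c^+_{-1} = m$ (weight $1$); $\vecc_B \notin \PClauses^L_k$ with $c^-_{-1} = c^-_{+1} = m$ (weight $1/(2m+1)$); and $\vecc_C \notin \PClauses^L_k$ with $c^-_{-1} = m+1$, $c^-_{+1} = m-1$ (weight $2m^2/((m+1)(2m+1))$). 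Feasibility reduces to $4m \geq 0$ (the inequality at bias $+1$) together with an exact equality $W^+(-1) = W^-(-1) = m$, and the objective consolidates to
\[
    f(p) = \frac{2\, p^m(1-p)^{m-1}}{(2m+1)(m+1)} \bigl[(m+1)^2 - (2m+1)p\bigr].
\]

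The main obstacle will be the calculus analysis in the even case, since $f$ now has two summands rather than one. Setting $f'(p) = 0$ and clearing the common factor $p^{m-1}(1-p)^{m-2}$ should reduce the critical-point equation to the quadratic $2(2m+1)p^2 - (m+1)(2m+3)p + (m+1)^2 = 0$, whose discriminant pleasantly simplifies to the perfect square $(m+1)^2(2m-1)^2$. The two roots are $p^*_k = (m+1)/(2m+1)$ and $(m+1)/2$; the latter lies at or beyond the endpoint $1$ for every $m \geq 1$, so $p^*_k$ is the unique interior critical point. Combined with $f(0) = 0$ and the direct verification $f(1) \leq f(p^*_k)$ (in fact $f(1) = 0$ for $m \geq 2$), this pins down $p^*_k$ as the unique maximizer on $[0,1]$, and direct evaluation confirms $f(p^*_k) = 2\bigl(m(m+1)/(2m+1)^2\bigr)^m = \alpha^*_k$.
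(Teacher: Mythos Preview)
Your proposal is correct and follows essentially the same approach as the paper: exhibit a sparse feasible primal solution and bound the resulting one-variable polynomial by elementary calculus. For odd $k$ your two patterns and weights coincide exactly with the paper's; for even $k$ you use one all-positive and two all-negative patterns where the paper uses two all-positive and one all-negative, but both choices produce the identical objective $f(p)=\tfrac{2p^m(1-p)^{m-1}}{(2m+1)(m+1)}\bigl[(m+1)^2-(2m+1)p\bigr]$, and your explicit factoring of the critical-point quadratic stands in for the paper's appeal to an optimization inequality quoted from prior work.
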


Such a feasible solution can be viewed alternatively as a hard instance of $\mkand$ for superoblivious algorithms. Indeed, given \cref{lem:primal-two} and its matching ``dual'' construction in \cref{lem:suff-cond,lem:bern-fancy}, we can immediately prove \cref{thm:superobl}:

\begin{proof}[Proof of \cref{thm:superobl}]
    First, we claim that if $p \neq p^*_k$, then $\alpha(\obl) < \alpha^*_k$. For this, we combine the (strict) upper bound in \cref{lem:primal-two} with \cref{lem:primal-opt}. On the other hand, we claim that if $p = p^*_k$, then $\alpha(\obl) = \alpha^*_k$. That $\alpha(\obl) \leq \alpha^*_k$ follows by the same pair of lemmas, while that $\alpha(\obl) \geq \alpha^*_k$ follows from \cref{lem:suff-cond} and \cref{lem:bern-fancy}. (See the discussion in \cref{sec:slack-analysis} for an explanation of why the guarantee of \cref{lem:bern-fancy} is precisely the hypothesis of \cref{lem:suff-cond} for $\delta=0,X=2,Y=1$.)
\end{proof}

To prove \cref{lem:primal-two}, we construct the feasible solution by ``synthesizing'' certain pairs of instances, developed in \cite{BHP+22}, which are indistinguishable for \emph{sketching} algorithms. It is not clear how to do this in a black-box way, so we have opted to present the lower bound from the ground-up while reusing some key inequalities which also arose in the lower bound of \cite{BHP+22}. We discuss the connection between ``oblivious-hard'' instances and ``sketching-hard'' pairs of instances again at the end of this section. But for now, from \cite{BHP+22}, we re-use the following inequalities:

\begin{proposition}[{\cite[from proof of Lemma 17]{BHP+22}}]\label{prop:bhp}
For all $k \geq 2 \in \BN$, the following holds.
    \begin{itemize}
        \item For odd $k$, the polynomial $r_k(p) := p^{(k+1)/2}(1-p)^{(k-1)/2}$ is uniquely minimized over the unit interval at $p = p^*_k$. In particular, \[ \frac{\min_{p\in[0,1]} r_k(p)}{\frac12\left(1+\frac1k\right)} = \frac{r_k(p^*_k)}{\frac12\left(1+\frac1k\right)} = \alpha^*_k. \]
        \item For even $k$, the polynomial $r_k(p) := \frac{(k+2)^2}{k^2+(k+2)^2} p^{k/2+1}(1-p)^{k/2-1} + \frac{k^2}{k^2+(k+2)^2} p^{k/2}(1-p)^{k/2}$ is uniquely minimized over the unit interval at $p = p^*_k$. In particular, \[ \frac{\min_{p\in[0,1]} r_k(p)}{\frac{(k+1)(k+2)}{k^2+(k+2)^2}} = \frac{r_k(p^*_k)}{\frac{(k+1)(k+2)}{k^2+(k+2)^2}} = \alpha^*_k. \]
    \end{itemize}
\end{proposition}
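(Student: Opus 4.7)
My plan is to prove each case via single-variable calculus, since we are asserting the location and value of a unique interior optimum of an explicit one-variable polynomial on $[0,1]$. (Note: both polynomials vanish at $p=0$ and, for $k\geq 3$, at $p=1$, so the interior extremum in each case is attained and is in fact a maximum; I'll treat the statement as such.)

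For the odd case, $r_k(p) = p^{(k+1)/2}(1-p)^{(k-1)/2}$ is positive on $(0,1)$, so I would compute the log-derivative
\[ (\log r_k)'(p) = \frac{(k+1)/2}{p} - \frac{(k-1)/2}{1-p}, \]
which vanishes iff $(k+1)(1-p) = (k-1)p$, i.e., iff $p = (k+1)/(2k) = p^*_k$. Uniqueness on $(0,1)$ is immediate since the log-derivative is strictly decreasing. Since the endpoints give $r_k(0) = r_k(1) = 0 < r_k(p^*_k)$, this critical point is the unique maximizer. For the ratio, I observe that $\tfrac12(1+1/k) = p^*_k$ by definition, so $r_k(p^*_k)/[\tfrac12(1+1/k)]$ collapses to $[p^*_k(1-p^*_k)]^{(k-1)/2}$, which is precisely $\alpha^*_k$ (under the correct reading of the definition for odd $k$).

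For the even case, I would first pull out common monomials: write $r_k(p) = p^m(1-p)^{m-1}\cdot g(p)$ with $m = k/2$ and $g(p) = Ap + B(1-p)$, where $A = (k+2)^2/D$, $B = k^2/D$, $D = k^2 + (k+2)^2$. Differentiating the product and dividing through by $p^{m-1}(1-p)^{m-2}$ (treating $k=2$ by direct inspection) reduces the critical-point condition to a polynomial equation whose degree I expect to be two after cancellation. I would then verify by substitution that $p^*_k = (k+2)/(2(k+1))$, equivalently $1-p^*_k = k/(2(k+1))$, is a root, using the key algebraic identity $A/B = (k+2)^2/k^2 = [(p^*_k)/(1-p^*_k)]^2$ that the specific weights $A, B$ are tuned to satisfy. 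Uniqueness of this root in $(0,1)$ would then follow by a sign/leading-coefficient argument on the cofactor (or by arguing monotonicity from the two boundary behaviors).

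The main obstacle, and where I would spend the most care, is the algebraic verification in the even case: the weights $A$ and $B$ look ad hoc, and the proof pivots on the fact that they conspire with $p^*_k$'s closed form to make $p^*_k$ the critical point. I expect the calculation to hinge on substituting $p^*_k = (k+2)/(2k+2)$ into the derivative equation and tracking how factors of $k$ and $k+2$ cancel against $A = (k+2)^2/D$ and $B = k^2/D$. Once $p^*_k$ is identified as the unique maximizer, computing $r_k(p^*_k)$ in the factored form and dividing by $(k+1)(k+2)/D$ is routine algebra, and I would sanity-check the small cases $k=2, 4$ to catch arithmetic slips before declaring the ratio equal to $\alpha^*_k = 2(p^*_k(1-p^*_k))^{k/2}$.
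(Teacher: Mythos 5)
Your odd-case argument is correct and is essentially the paper's own proof: the paper writes out $r_k'(p) = -(1-p)^{(k-3)/2}p^{(k-1)/2}\bigl(kp - \tfrac{k+1}{2}\bigr)$ explicitly, while you pass to the log-derivative, but both immediately identify $p^*_k = \tfrac{k+1}{2k}$ as the unique interior critical point. You are also right that ``minimized'' in the statement should read ``maximized'' (both $r_k$'s vanish at the endpoints), and that the resulting value is $(p^*_k(1-p^*_k))^{(k-1)/2}$ with no factor of $2$, so the displayed definition $\alpha^*_k = 2(p^*_k(1-p^*_k))^{\lfloor k/2\rfloor}$ is off by a factor of two when $k$ is odd (the paper's own table gives $\alpha^*_3 = 2/9$, not $4/9$).

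In the even case, however, the ``key algebraic identity'' you invoke is the wrong one, and the verify-by-substitution step you describe would actually fail. Writing $r_k(p) = p^m(1-p)^{m-1}\bigl(Ap + B(1-p)\bigr)$ with $m = k/2$ and setting the log-derivative to zero yields, after clearing denominators and substituting $p^*_k = \tfrac{m+1}{2m+1}$, the condition $A(m+1)^2 = Bm^2$, i.e.\ $A/B = \bigl[\tfrac{1-p^*_k}{p^*_k}\bigr]^2 = \tfrac{k^2}{(k+2)^2}$ --- not $\bigl[\tfrac{p^*_k}{1-p^*_k}\bigr]^2 = \tfrac{(k+2)^2}{k^2}$ as you assert. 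With $A = (k+2)^2/D$, $B = k^2/D$ as printed in the proposition (and as you read off), $p^*_k$ is \emph{not} a critical point; for example at $k=2$ the printed $r_2(p) = \tfrac45 p^2 + \tfrac15 p(1-p)$ is strictly increasing on $[0,1]$ with maximum at $p=1$, and $r_2(2/3)\big/\tfrac35 = \tfrac23 \neq \tfrac49$. The proposition as printed has the two coefficients swapped; the $r_k$ that actually arises from the weights $W(\veca), W(\vecb), W(\vecd)$ in the proof of \cref{lem:primal-two} is $r_k(p) = \tfrac{k^2}{k^2+(k+2)^2}\,p^{k/2+1}(1-p)^{k/2-1} + \tfrac{(k+2)^2}{k^2+(k+2)^2}\,p^{k/2}(1-p)^{k/2}$, for which your $A/B$ condition does hold and the normalization gives $\alpha^*_k$. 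Your stated plan to sanity-check $k=2,4$ would have surfaced this, but as written the proposal hinges on an identity that is backwards; once the coefficients are corrected, your factoring approach is sound, the cofactor is a quadratic whose second root $(k+2)/4 \geq 1$ lies outside $(0,1)$, and the argument lines up with the explicit factored derivative the paper gives.
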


For completeness, we note that this proposition can be proved by noting that the derivative $\frac{d r_k}{dp}$ equals \[ -(1-p)^{(k-3)/2}p^{(k-1)/2}\left(kp-\frac{k+1}2\right) \] and \[ -\frac{k}{2+2k+2k^2}(1-p)^{k/2-2} p^{k/2-1} \left(\frac{k}2+1-2p\right)\left((k+1)p-\left(\frac{k}2+1\right)\right) \] in the cases of odd and even $k$, respectively, and $p^*_k$ is therefore the only critical point of $r^*_k$ in the unit interval. (Note that for odd $k$, $kp^*_k-(k+1)/2=0$, and for even $k$, $(k+1)p^*_k-(k/2+1)=0$.)

Now, we are prepared to give a proof of \cref{lem:primal-two}:

\begin{proof}[Proof of \cref{lem:primal-two}]
We construct a \emph{sparse} solution $W$. Since $L = 3$ (i.e., there are three bias classes $[1,0),\{0\},(0,1]$), $W$ is indexed by sextuples $(c_{-1}^+,c_0^+,c_{+1}^+,c_{-1}^-,c_0^-,c_{+1}^-)$ of natural numbers which sum to $k$. We split into cases based on the parity of $k$.

\paragraph{Case: $k$ odd.} Let $\veca = (\frac{k-1}2,0,\frac{k+1}2,0,0,0)$ and let $\vecb=(0,0,0,\frac{k+1}2,0,\frac{k-1}2)$. (I.e., $\veca$ corresponds to clauses with all positive literals, and a bare majority of positively-biased variables. $\vecb$ corresponds to clauses with all negative literals, and a bare majority of negatively-biased variables.) For convenience, define $\gamma := \frac12(1+\frac1k)$. Then we set $W(\veca) := \gamma^{-1} \cdot \frac12(1+\frac1k)$, $W(\vecb) := \gamma^{-1} \cdot \frac12(1-\frac1k)$, and $W(\vecc) = 0$ for all $\vecc \in \Clauses^L_k \setminus \{\veca,\vecb\}$. (So, $W(\veca) = 1$, and $W(\vecb) = 1-\frac2{k+1}$. We write $W$ this way so as to to ``change'' the normalization by a factor of $\gamma$, letting us think of the \emph{total} weight in $W$ as $1$.)

First, we check that $W$ is feasible. Noting that $\veca \in \PClauses^L_k$ and $\vecb \not\in \PClauses^L_k$, we have $\sum_{\vecc \in \PClauses^L_k} W(\vecc) = W(\veca) = 1$. We also need to verify the inequalities \[ t_{-1}^-(W^+(-1) + W^-(-1)) \leq W^+(-1) - W^-(-1) \leq t_{-1}^+(W^+(-1) + W^-(-1)), \]
\[ t_0^-(W^+(0) + W^-(0)) \leq W^+(0) - W^-(0) \leq t_0^+(W^+(0) + W^-(0)), \] and
\[ t_{+1}^-(W^+(+1) + W^-(+1)) \leq W^+(+1) - W^-(+1) \leq t_{+1}^+(W^+(+1) + W^-(+1)). \] Recalling that $t^-_{-1} = -1$, $t^+_{-1} = t^-_0 = t^+_0 = t^-_{+1} = 0$, and $t^+_{+1} = +1$, these are equivalent to, respectively, \[ W^+(-1) \leq W^-(-1), W^+(0) = W^-(0),\text{ and } W^-(+1) \leq W^+(+1). \]

Now recall by definition of $W^+(i)$ and $W^-(i)$ (in \cref{lem:primal-opt}), we have

\begin{align*}
    W^+(+1) &= a^+_{+1} W(\veca) + b^+_{+1} W(\vecb) \\
    W^-(+1) &= a^-_{+1} W(\veca) + b^-_{+1} W(\vecb)  \\
    W^+(0) &= a^+_0 W(\veca) + b^+_0 W(\vecb) \\
    W^-(0) &= a^-_0 W(\veca) + b^-_0 W(\vecb) \\
    W^+(-1) &= a^+_{-1} W(\veca) + b^+_{-1} W(\vecb) \\
    W^-(-1) &= a^-_{-1} W(\veca) + b^-_{-1} W(\vecb).
\end{align*}

Luckily, $a^-_{+1},b^+_{+1},a^+_0,b^+_0,a^-_0,b^-_0,a^-_{-1},b^+_{-1}$ are all zero by definition, so we conclude $W^+(0) = W^-(0) = 0$ and 
\begin{alignat*}{4}
    W^+(+1) &= a^+_{+1} W(\veca) &&= \frac1\gamma \left(\frac{k+1}2\right) \left(\frac12\left(1+\frac1k\right)\right) &&= \frac1{\gamma \cdot 4k} (k+1)^2 \\
    W^-(+1) &= b^-_{+1} W(\vecb) &&= \frac1\gamma \left(\frac{k-1}2\right) \left(\frac12\left(1-\frac1k\right)\right) &&= \frac1{\gamma \cdot 4k} (k-1)^2 \\
    W^+(-1) &= a^+_{-1} W(\veca) &&= \frac1\gamma \left(\frac{k-1}2\right) \left(\frac12\left(1+\frac1k\right)\right) &&= \frac1{\gamma \cdot 4k} (k+1)(k-1) \\
    W^-(-1) &= b^-_{-1} W(\vecb) &&= \frac1\gamma \left(\frac{k+1}2\right) \left(\frac12\left(1-\frac1k\right)\right) &&= \frac1{\gamma \cdot 4k} (k+1)(k-1)
\end{alignat*}

Since $(k-1)^2 \leq (k+1)(k-1) \leq (k+1)^2$, therefore, we satisfy both of the desired inequalities, and so $W$ is feasible.

Finally, we prove the upper bound on $W$'s objective value. We recall from \cref{eq:p} that $\prob^{(p)}(\veca) = \prob^{(p)}(\vecb) = p^{(k+1)/2}(1-p)^{(k-1)/2}$. Therefore, the objective function has value \[ \prob^{(p)}(\veca)\cdot W(\veca) + \prob^{(p)}(\vecb)\cdot W(\vecb) = p^{(k+1)/2}(1-p)^{(k-1)/2} (W(\veca)+W(\vecb)) = \frac{p^{(k+1)/2}(1-p)^{(k-1)/2}}{\frac12(1+\frac1k)} \] by the definition of $\gamma$, which is precisely $r_k(p)/(\frac12(1+\frac1k))$. Thus, by \cref{prop:bhp}, we get that the objective function is upper-bounded by $\alpha^*_k$, with equality precisely when $p = p^*_k$.

\paragraph{Case: $k$ even.} Let $\veca = (\frac{k}2, 0, \frac{k}2, 0, 0, 0)$, let $\vecb = (\frac{k}2+1,0,\frac{k}2-1,0,0,0)$, and let $\vecd = (0,0,0,\frac{k}2,0,\frac{k}2)$. (I.e., $\veca$ corresponds to clauses with all positive literals, balanced between positively- and negatively-biased variables. $\vecb$ corresponds to clauses with all positive literals, with a bare majority of positively-biased variables. $\vecd$ corresponds to clauses with all negative literals, balanced between positively- and negatively-biased variables.) We let $\gamma = \frac{(k+1)(k+2)}{k^2+(k+2)^2}$. Then, we define $W(\veca) = \gamma^{-1} \frac{3k+2}{k^2+(k+2)^2}$; $W(\vecb) = \gamma^{-1} \frac{k^2}{k^2+(k+2)^2}$; $W(\vecd) = \gamma^{-1} \frac{k^2+k+2}{k^2+(k+2)^2}$; and $W(\vecc) = 0$ for all $\vecc \in \Clauses^L_k \setminus \{\veca,\vecb,\vecd\}$.

We proceed with a similar but slightly messier analysis to the previous case. Again, we have by definition that $\sum_{\vecc \in \PClauses^L_k} W(\vecc) = W(\veca) + W(\vecb) = 1$ by definition of $W$ and $\gamma$. We again have $W^+(0) = W^-(0) = 0$. Further, by a similar calculation,

\begin{alignat*}{4}
    W^+(+1) &= a^+_{+1} W(\veca) + b^+_{+1} W(\vecb) &&= \frac1{\gamma\cdot(k^2+(k+2)^2)} \left(\left(\frac{k}2\right) (3k+2) + \left(\frac{k}2+1\right) k^2\right) \\
    W^-(+1) &= d^-_{+1} W(\vecd) &&= \frac1{\gamma\cdot(k^2+(k+2)^2)} \left(\frac{k}2\right) (k^2+k+2) \\
    W^+(-1) &= a^+_{-1} W(\veca) + b^+_{-1} W(\vecb) &&= \frac1{\gamma\cdot(k^2+(k+2)^2)} \left(\left(\frac{k}2\right) (3k+2) + \left(\frac{k}2-1\right) k^2\right) \\
    W^-(-1) &= d^-_{-1} W(\vecd) &&= \frac1{\gamma\cdot(k^2+(k+2)^2)} \left(\frac{k}2\right) (k^2+k+2).
\end{alignat*}

Now we can verify that \[ \left(\frac{k}2\right) (k^2+k+2) = \left(\frac{k}2\right) (3k+2) + \left(\frac{k}2-1\right) k^2 < \left(\frac{k}2\right) (3k+2) + \left(\frac{k}2+1\right) k^2 \] and therefore $W$ is feasible.

To conclude, we recall from \cref{eq:p} that $\prob^{(p)}(\veca) = \prob^{(p)}(\vecd) = p^{k/2}(1-p)^{k/2}$ while $\prob^{(p)}(\vecb) = p^{k/2+1}(1-p)^{k/2-1}$. Thus, the objective function has value \[ \prob^{(p)}(\veca)\cdot W(\veca) + \prob^{(p)}(\vecb)\cdot W(\vecb) + \prob^{(p)}(\vecd)\cdot W(\vecd) =  \frac{k^2 p^{k/2}(1-p)^{k/2} + (k+2)^2 p^{k/2+1}(1-p)^{k/2-1}}{(k+1)(k+2)}. \] This is, again, precisely $r_k(p)/((k+1)(k+2)/(k^2+(k+2)^2))$, so by \cref{prop:bhp}, we get that the objective function is upper-bounded by $\alpha^*_k$, with equality when $p = p^*_k$.
\end{proof}

\section{Implications for streaming algorithms}\label{sec:streaming}

In this section, we prove \cref{cor:random-order,cor:bounded-degree}, translating the oblivious algorithms we developed in the previous sections into streaming algorithms; specifically, we will develop algorithms that work under two different assumptions, namely, random-ordering and bounded-degree. This translation follows the analysis of Saxena, Singer, Sudan, and Velusamy~\cite{SSSV23-random-ordering} for the $\mdcut$ problem.

For any instance $\Psi$ of $\mkand$, and any $L$-class bias partition $\vect$, we define a so-called \emph{snapshot array} for $\Psi$ which captures the (relative) weights of constraints with each possible pattern, generalizing the definition for the case of $\mdcut$ in \cite{SSSV23-random-ordering,SSSV23-dicut}. This is an array $\Snap \in \BR_{\geq 0}^{\Clauses_k^L}$ given by \[ \Snap(\vecc) \eqdef \frac{\sum_{j=1}^m w_j \1[\ptn(C_j) = \vecc]}{\sum_{j=1}^m w_j} \] when $\Psi$ has clauses $C_1,\ldots,C_m$ with weights $w_1,\ldots,w_m$. Note that \cref{claim:obl-exp} above states that $\obl(\Psi)$, the expected value of the assignment produced by an oblivious algorithm on the instance $\Psi$, is precisely a linear combination of the entries of $\Snap$, where the weights depend only on the rounding vector $\vecp$ (and in particular, the weight on entry $\vecc$ is $\prob^\vecp(\vecc)$). As a corollary, we have the following:

\begin{proposition}\label{prop:mhat-suff}
    For every $L, k \geq 2 \in \BN$, suppose $\vect$ is a bias partition and $\vecp$ a rounding vector such that $\obl$ achieves a ratio $\alpha(\obl) \geq \alpha$. For every instance $\Psi$ of $\mkand$, suppose $\hat{M} \in \BR^{\Clauses_k^L}$ is an estimate for $\Snap$ in the sense that $\|\Snap-\hat{M}\|_1 := \sum_{\vecc \in \Clauses^L_k} |\Snap(\vecc) - \hat{M}(\vecc)| \leq \epsilon$. Then \[ (\alpha - 2^{k+1} \epsilon) \val_\Psi \leq \sum_{\vecc \in \Clauses_k^L} \prob^\vecp(\vecc) \hat{M}(\vecc) - \epsilon \leq \val_\Psi. \]
\end{proposition}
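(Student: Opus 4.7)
The plan is to chain together three basic facts: (i) the algorithm's expected value equals a linear form in $\Snap$, (ii) this linear form has bounded coefficients, so small $\ell_1$-error in $\hat{M}$ transfers to small additive error in the estimate, and (iii) $\val_\Psi$ is bounded below by the uniform-assignment value $2^{-k}$, letting us convert an additive error into a multiplicative one.

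More concretely, I would first invoke \cref{claim:obl-exp}, which gives
\[ \obl(\Psi) = \sum_{\vecc \in \Clauses_k^L} \prob^\vecp(\vecc) \Snap(\vecc). \]
Since each $\prob^\vecp(\vecc)$ is a probability, it lies in $[0,1]$; therefore Hölder's inequality (or just term-by-term comparison) yields
\[ \left|\sum_{\vecc \in \Clauses_k^L} \prob^\vecp(\vecc)\bigl(\hat{M}(\vecc)-\Snap(\vecc)\bigr)\right| \leq \|\hat{M}-\Snap\|_1 \leq \epsilon. \]
Combined with the trivial bound $\obl(\Psi) \leq \val_\Psi$, this immediately yields the upper inequality $\sum_{\vecc} \prob^\vecp(\vecc)\hat{M}(\vecc) - \epsilon \leq \val_\Psi$.

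For the lower inequality, I would combine the same $\epsilon$-closeness with the hypothesis $\obl(\Psi) \geq \alpha \cdot \val_\Psi$, which gives
\[ \sum_{\vecc}\prob^\vecp(\vecc)\hat{M}(\vecc) - \epsilon \; \geq \; \obl(\Psi) - 2\epsilon \; \geq \; \alpha\val_\Psi - 2\epsilon. \]
The only remaining step, and the one most easily overlooked, is the conversion $2\epsilon \leq 2^{k+1}\epsilon \val_\Psi$: this uses the simple but crucial observation that $\val_\Psi \geq 2^{-k}$, which follows from the fact that the uniformly random assignment satisfies each $k$-AND constraint with probability exactly $2^{-k}$, so the expected (hence maximum) assignment value is at least $2^{-k}$. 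Substituting back gives $\alpha\val_\Psi - 2\epsilon \geq (\alpha - 2^{k+1}\epsilon)\val_\Psi$, completing the proof.

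There is no real obstacle here; the proposition is essentially a bookkeeping statement that packages together \cref{claim:obl-exp}, the $\alpha$-approximation guarantee, and the random-assignment lower bound on $\val_\Psi$. The only nontrivial choice is to absorb the $2\epsilon$ additive slack into the approximation ratio via the $\val_\Psi \geq 2^{-k}$ bound, which explains the appearance of the $2^{k+1}$ factor in the final statement.
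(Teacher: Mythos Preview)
Your proposal is correct and follows essentially the same approach as the paper: both use \cref{claim:obl-exp} to write $\obl(\Psi)$ as a linear form in $\Snap$, bound the difference between $\hat{M}$ and $\Snap$ under this form by $\epsilon$ using $\prob^\vecp(\vecc)\leq 1$, and then convert the additive $2\epsilon$ slack into the multiplicative $2^{k+1}\epsilon$ factor via $\val_\Psi \geq 2^{-k}$. The paper's proof is simply a terser presentation of the same three ingredients you identified.
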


\begin{proof}
    Use the assumption $\alpha \val_\Psi \leq \sum_{\vecc \in \Clauses_k^L} \prob^\vecp(\vecc) M(\vecc) \leq \val_\Psi$ together with \[ \left\lvert\sum_{\vecc \in \Clauses_k^L} \prob^\vecp(\vecc) \hat{M}(\vecc) - \sum_{\vecc \in \Clauses_k^L} \prob^\vecp(\vecc) M(\vecc)\right\rvert \leq \sum_{\vecc \in \Clauses_k^L} \prob^\vecp(\vecc) |\hat{M}(\vecc) - M(\vecc)| \leq \sum_{\vecc \in \Clauses_k^L} |\hat{M}(\vecc) - M(\vecc)| \leq \epsilon \] and $\val_\Psi \geq 2^{-k}$.
\end{proof}

Both of the algorithms we describe (to prove \cref{cor:bounded-degree,cor:random-order}) have the following structure: First, we fix the bias partition $\vect$ and rounding vector $\vecp$ coming from \cref{thm:perturb} which yield the better-than-$\alpha^*_k$ approximation. Now, given an instance $\Psi$, we claim that is sufficient to produce an array $\hat{M} \in \BR^{\Clauses_k^L}$ such that for all $\vecc \in \Clauses_k^L$ we have $|\hat{M}(\vecc)-\Snap(\vecc)| \leq \epsilon$. Indeed, by the triangle inequality we will have $\|\hat{M}-\Snap\|_1 \leq K\epsilon$, where $K := |\Clauses_k^L|$, and then we can apply the above proposition (\cref{prop:mhat-suff}), which implies that we can produce a $(\alpha - 2^{k+1} K\epsilon)$-approximation to $\val_\Psi$ (via a linear function of $\hat{M}$'s entries); reparametrizing $\epsilon$ to drop these factors will yield the final algorithm.

Now first, we develop the random-ordering algorithm:

\begin{proof}[Proof of \cref{cor:random-order}]
Let $p:=\epsilon^2/C$ where $C = C(k)$ is a large constant to be chosen later. We store a set $E$ containing the first $1/p$ constraints in the stream. Letting $S$ denote the set of variables appearing in the constraints in $E$, over the remainder of the stream, we track the bias of every variable in $S$. Finally, at the end of the stream, we estimate $\Snap$ via $\hat{M}(\vecc) = \frac1{pm} \hat{X}_\vecc$ where $\hat{X}_\vecc$ is the number of constraints in $E$ with pattern $\vecc$. This method clearly runs using only $O(\log n/\epsilon^2)$ space, since we need only to store $E$ and the bias of every vertex in $S$; both of these sets have constant size $O(1/\epsilon^2)$. So, by the reasoning in the above paragraph, it suffices to show that for every $\vecc \in \Clauses_k^L$, $|\hat{M}(\vecc) - \Snap(\vecc)| \leq \epsilon$.

For each pattern $\vecc$, let $X_\vecc := \Snap(\vecc) \cdot m$ denote the true number of constraints in $\Psi$ with pattern $\vecc$. So by definition of $\hat{M}$ and $X_\vecc$, we equivalently want to show $|\hat{X}_\vecc - p X_\vecc| \leq \epsilon p m$. Now note that $\hat{X}_\vecc$ equals the number of constraints with pattern $\vecc$ in a random sample of $1/p$ constraints in $\Psi$ drawn \emph{without replacement}. We argue that instead, we could prove the inequality for $\tilde{X}_\vecc$ which equals the number of constraints with pattern $\vecc$ in a random sample of $1/p$ constraints in $\Psi$ drawn \emph{with} replacement. Indeed, when we sample $\tilde{X}_\vecc$, let $\CE$ denote the event that we sample no constraint $C_j, j\in[m]$ twice. Then the distribution of $\hat{X}_\vecc$ is the same as the distribution of $\tilde{X}_\vecc$ conditioned on $\overline{\CE}$; $\CE$ has probability $o(1)$,\footnote{Technical note: This holds by Markov's inequality, but only if there are $\omega(1)$ constraints and therefore $o(1)$ expected collisions. But if not, we can simply store all the constraints in the stream.} and therefore we can freely discard all samples where $\CE$ occurs. Finally, we observe that $\tilde{X}_\vecc$ can be written as the sum of $1/p$ independent Bernoulli variables, each taking $1$ w.p. $X_\vecc/m$. Therefore
\begin{equation}\label{eq:random-order:Xhat-exp}
    \Exp[\tilde{X}_\vecc] = X_\vecc/(pm)
\end{equation}
and using the Chernoff bound:
\begin{align*}
    \Pr[|\tilde{X}_\vecc - p X_\vecc| \leq \epsilon p m] &\leq 2\exp\left(\frac{-(\epsilon p m)^2}{3\Exp[\tilde{X}_\vecc]}\right) \tag{Chernoff bound} \\
    &\leq 2\exp\left(\frac{-\epsilon^2 p m^2}{3 X_\vecc}\right) \tag{\cref{eq:random-order:Xhat-exp}} \\
    &\leq 2\exp\left(\frac{-\epsilon^2 p m}{3}\right) \tag{$X_\vecc \leq m$} \\
    &\leq 2\exp(-C/3). \tag{def. of $p$}
\end{align*}
Setting $C=C(k)$ sufficiently large, this is less than, say, $1/(1000K)$ (where $K:=|\Clauses_k^L|$) and therefore we can take a union bound over all the $|\Clauses_k^L|$ patterns $\vecc$.
\end{proof}

And similarly, we develop an algorithm for bounded-degree instances:

\begin{proof}[Proof of \cref{cor:bounded-degree}]
    Assume for simplicity we are given as input $m$, the number of edges in $\Psi$. We can further assume $m \geq \Omega(n^{1-1/k})$ since otherwise we can store all constraints in the instance and calculate the $\mkand$ value exactly. Let $p = (CD/(m\epsilon^2))^{1/k}$ where $C=C(k)$ is a large constant to be chosen later. Consider the following algorithm: Before the stream, let $S \subseteq [n]$ be sampled by including every variable in $[n]$ with probability $p$ independently.\footnote{Two technical points, see \cite[\S3]{SSSV23-random-ordering} for details: (1) We can avoid assuming we have $m$ as input via standard tricks. We can ``guess'' a geometric progression of values for it and run the algorithm in parallel for these logarithmically many guesses. One guessed value will be within a constant factor of $m$, which we will use to produce the estimate; we can implement a space cutoff on every run of the algorithm to ensure we do not run out of space. (For the same reason, below we will only prove that the space bound holds with large probability.) (2) We can sample $S$ ``on-the-fly'' using a hash function, where $2k$-wise independence will suffice, instead of storing it up-front.} Now, during the stream, we store every constraint whose variables are all in $S$ in a set $E$, and we also track the bias of every variable in $S$. Finally, we estimate $\Snap$ via $\hat{M}(\vecc) = \frac1{p^k m} \hat{X}_\vecc$ where $\hat{X}_\vecc$ counts the number of constraints in $E$ with pattern $c$. (Note that we know the pattern of every constraint in $E$ because we know the bias of every variable in $S$.) We claim (1) that this sampling method runs within the space bound, and (2) that for every $\vecc \in \Clauses_k^L$, $|\hat{M}(\vecc) - \Snap(\vecc)| \leq \epsilon$.

    \paragraph{Space bound.} Let $n^*$ denote the number of nonisolated variables in $\Psi$, i.e., the number of variables occurring in at least one constraint. Observe that \[ \Exp[|S|] = p n^* = \left(\frac{CD}{\epsilon^2}\right)^{1/k} n^* m^{-1/k}. \] Now using $m \geq n^*$ and then $n^* \leq n$ we get $\Exp[|S|] \leq (CD/\epsilon^2)^{1/k} n^{1-1/k}$. Also, we have \[ \Exp[|E|] = p^k m = \frac{CD}{\epsilon^2}, \] which is constant as a function of $n$. Therefore using Markov's inequality, for sufficiently large $n$ we use $O(D^{1/k} n^{1-1/k} \log n/\epsilon^{2/k})$ space with probability $999/1000$.

    \paragraph{Correctness.} For each pattern $\vecc$, let $X_\vecc = \Snap(\vecc) \cdot m$ denote the true number of constraints in $\Psi$ with pattern $\vecc$; recall that $\hat{X}_\vecc$ counts the number of constraints in $E$ with pattern $\vecc$. By linearity of expectation, we have 
    \begin{equation}\label{eq:bounded-deg:hatX-exp}
        \Exp[\hat{X}_{\vecc}] = p^k X_\vecc.
    \end{equation}
    Therefore $\Exp[\hat{M}(\vecc)] = \Snap(\vecc)$ (where, recall, $\hat{M}(\vecc) = \frac1m p^{-k} \hat{X}_\vecc$ is our estimate for $\Snap(\vecc)$). So, it remains to prove concentration of $\hat{M}_\vecc$.

    Fix $\vecc \in \Clauses_k^L$. By definition of $\hat{M}_\vecc$ and $X_\vecc$, we want to prove that $|X_\vecc - \hat{X}_\vecc| \leq \epsilon p^k m$. Let $F := \{j \in [m] : \ptn(C_j) = \vecc\}$ be the set of all clauses with pattern $\vecc$. Then we can write $\hat{X}_{\vecc} = \sum_{j \in F} Y_j$ where $Y_j$ is the indicator for the event that all of $C_j$'s variables are sampled in $S$.
    
    Thus, $\Var[\hat{X}_{\vecc}] = \sum_{j,j' \in F} \Cov[Y_j,Y_{j'}]$. Hence we can bound $\Var[\hat{X}_{\vecc}]$ using that (a) $Y_j$ and $Y_{j'}$ are independent (and therefore $\Cov[Y_j,Y_{j'}]=0$) whenever $C_j$ and $C_{j'}$ do not share any variables; (b) by the maximum-degree assumption, any particular $C_j$ can share variables with at most $kD$ other constraints $C_{j'}$; and (c) even if $C_j$ and $C_{j'}$ do share a variable, we still have $\Cov[Y_j,Y_{j'}] \leq \Exp[Y_j Y_{j'}] \leq \Exp[Y_j]$ since $Y_j$ and $Y_{j'}$ are $\{0,1\}$-valued variables. Therefore,
    \begin{equation}\label{eq:bounded-deg:hatX-var}
        \Var[\hat{X}_{\vecc}] \leq (kD+1) \Exp[\hat{X}_\vecc].
    \end{equation}
    
    Finally, using Chebyshev's inequality and doing some manipulations, we have:

    \begin{align*}
        \Pr[|\hat{X}_\vecc - X_\vecc| \leq \epsilon p^k m] &\leq \frac{\Var[\hat{X}_\vecc]}{\epsilon^2 p^{2k} m^2} \tag{Chebyshev's inequality} \\
        &\leq \frac{(kD+1)\Exp[\hat{X}_\vecc]}{\epsilon^2 p^{2k} m^2} \tag{\cref{eq:bounded-deg:hatX-var}} \\
        &= \frac{(kD+1) X_\vecc}{\epsilon^2 p^k m^2} \tag{\cref{eq:bounded-deg:hatX-exp}} \\
        &\leq \frac{kD+1}{\epsilon^2 p^k m} \tag{$X_\vecc\leq m$} \\
        &= \frac{kD+1}{CD} \tag{def. of $p$}.
    \end{align*}
    
    For a sufficiently large choice of $C=C(k)$, again, we get the desired concentration (even union-bounding over $\vecc$).
\end{proof}

\ifnum\doubleblind=0
\section*{Acknowledgements}

I would like to thank Madhu Sudan and Santhoshini Velusamy for generous feedback and comments on the manuscript, and also Pravesh Kothari and Peter Manohar for helpful discussions.

This material is based upon work supported by the National Science Foundation Graduate Research Fellowship Program under Grant No. DGE2140739. Any opinions, findings, and conclusions or recommendations expressed in this material are those of the author and do not necessarily reflect the views of the National Science Foundation.
\fi

\printbibliography

\end{document}